\newtheorem{definition}{Definition}
\newtheorem{lemma}{Lemma}
\newtheorem{corollary}{Corollary}
\newtheorem{proposition}{Proposition}
\newtheorem{remark}{Remark}
\newtheorem{assumption}{Assumption}
\def\l{\left}
\def\r{\right}
\def\({\left(}
\def\){\right)}
\def\b0{{\mathbf{0}}}
\def\papertitle{\Huge Live Prefetching for \\ Mobile Computation Offloading}
\begin{document}

\title{ \fontsize{21}{21}\selectfont \papertitle}
\author{Seung-Woo Ko,   Kaibin Huang,  Seong-Lyun Kim, and Hyukjin Chae
\thanks{S.-W.~Ko and K.~Huang are with the Department of Electrical and Electronic Engineering, The University of Hong Kong, Pok Fu Lam, Hong Kong (e-mail: swko@eee.hku.hk, haungkb@eee.hku.hk).
S.-L.~Kim is with the School of Electrical and Electronic Engineering,
Yonsei University, Seoul, Korea
(email: slkim@ramo.yonsei.ac.kr).  H.~Chae is with LG Electronics, Seoul, Korea (e-mail: hyukjin.chae@lge.com). 
}
\thanks{This work was supported by LG Electronics and the National Research Foundation of Korea under Grant NRF-2014R1A2A1A11053234. 
Part of this work was presented at IEEE ICC 2017.}}
\maketitle

\begin{abstract}
The conventional designs of mobile computation offloading fetch user-specific data to the cloud prior to computing, called \emph{offline prefetching}.
However, this approach can potentially result in excessive fetching of large volumes of data and cause heavy loads on radio-access networks. To solve this problem,  the novel technique of \emph{live prefetching} is proposed in this paper that seamlessly integrates the task-level computation prediction and prefetching within the cloud-computing process of a large program with numerous tasks. The technique avoids excessive fetching but retains the feature of leveraging prediction to reduce the program runtime and mobile transmission energy. By modeling the tasks in an offloaded program as a stochastic sequence, stochastic optimization  is applied to design fetching policies to minimize mobile energy consumption under a deadline constraint. The policies  enable real-time  control  of the prefetched-data sizes of candidates for future tasks. For slow fading, the optimal policy is derived and shown to have a threshold-based structure, selecting candidate tasks for prefetching and controlling their prefetched data based  on their likelihoods. The result is extended to design close-to-optimal prefetching policies to fast fading channels. Compared with fetching without prediction, live prefetching is shown theoretically to always achieve reduction on mobile energy consumption. 
\end{abstract}

\begin{IEEEkeywords}
Live prefetching, task-level  prediction, stochastic optimization, threshold-based structure, prefetching gain.
\end{IEEEkeywords}

\section{Introduction}

In the past decade, mobile devices have become the primary platforms for computing and Internet access. Despite their limited computation resources and power supply,  mobile devices are expected to support complex  applications such as multimedia processing, online gaming, virtual reality and sensing \cite{GSMA5G}. A promising technology for resolving this conflict is \emph{mobile computation offloading} (MCO) that offloads computation intensive tasks from mobiles to the cloud, thereby lengthening  their battery lives  and enhancing their computation capabilities \cite{Kumar2010}. In the future, MCO may enable wearable computing devices to perform sophisticated functions currently~feasible only on larger devices such as smartphones and tablet computers.  Nevertheless, MCO requires fetching user-specific data from mobiles to the cloud and the consumed transmission energy  offsets the energy gain due to offloading. To address this issue, we propose the use of computation prediction to reduce transmission energy consumption. Specifically, \emph{live prefetching} techniques are designed for predicting subsequent tasks in a offloaded program and fetching the needed data in advance. The technique reduces not only the latency but also the energy consumption by  increasing the fetching duration and exploiting opportunistic transmission. 
 
 \subsection{Prior Work}

Offloading a task from a mobile to the cloud reduces the load of the local CPU and hence the  energy consumption  for  mobile computing. However, MCO increases the (mobile) transmission-energy consumption \cite{Kumar2010}. Optimizing the tradeoff with the criterion of minimum mobile-energy consumption is the main theme for designing the MCO algorithms \cite{Zhang2013TVT, Kwak2015, Zhang2015, Kao2014, Yang2013}.  In \cite{Zhang2013TVT}, algorithms are designed for energy-efficient transmission and CPU control; then comparing the resultant energy consumption for MCO and mobile computing gives the optimal offloading decision. A different approach is proposed  in \cite{Kwak2015} that applies Lyapunov optimization theory to design energy-efficient MCO. Offloading a complex program consisting of multiple tasks gives rise to the research issue of optimal program partitioning (for offloading and local computing). Schemes for dynamic program   partitioning  have been developed based on different program models such as the  linear-task model \cite{Zhang2015}, the  tree-task model~\cite{Kao2014} and  the model of data streaming \cite{Yang2013}.

In recent research, more complex MCO systems have been designed \cite{You2016, Sardellitti2015, Chen2016, Kaewpuang2013, Azimi2016, Xiang2014, Zhao2015, Wang2014, You2016JSAC, Mao2016 }. The joint computation-and-radio resource allocation is studied in \cite{You2016} and \cite{Sardellitti2015} for a multiuser MCO system and a multi-cell MCO system, respectively. Distributed MCO systems are designed in \cite{Chen2016} using game theory.  The cooperative resource sharing framework for multiple clouds   
is proposed in \cite{Kaewpuang2013} for the revenue maximization of mobile cloud service providers 
under  the quality-of-service requirements of mobile applications.
A MCO system supporting multiple service levels is proposed in \cite{Azimi2016} where the superposition coding is proposed to facilitate the adaptation of the service level to a wireless channel. 
In \cite{Xiang2014}, an energy efficient offloading strategy is designed 
via the joint optimization of the  adaptive LTE/WiFi link selection and  data-transmission scheduling. 
The optimal scheduling policy fot MCO in the heterogenous cloud networks is studied in \cite{Zhao2015} where central and edge clouds coexist.
A service migration across edge clouds is investigated in \cite{Wang2014} that applies Markov decision process to design the optimal policy under the random-walk~mobility model.   
Last, energy harvesting and MCO serve the similar purpose of increasing mobile battery lives. This motivates the integration of these two technologies via algorithmic design in \cite{You2016JSAC} and \cite{Mao2016} where MCO is powered by wireless power transfer and ambient-energy harvesting, respectively. 

The MCO process considered in the aforementioned prior work requires fetching user-specific data from mobiles to the cloud before  cloud computing, called   \emph{offline  prefetching}. 
Its principle is identical to that of caching, namely delivering data to where is needed before its consumption, but targets mostly  the applications in cloud computing. Offline prefetching techniques have been designed to reduce latency in different applications and systems including MCO downlink \cite{Shu2013}, opportunistic fetching of multimedia data \cite{Master2016},  vehicular cloud systems \cite{Kim2016}, and wireless video streaming \cite{Hong2015}. Offline prefetching  is  impractical for running complex  mobile applications in the cloud such as real-time gaming or virtual reality. The user-specific data depends on the users' spontaneous behaviors and dynamic environment, and is thus difficult to predict. 
Even if data prediction is possible, offline prefetching for  complex applications can potentially result in transmission of enormous volumes of data over the air, placing a heavy burden on wireless networks. 
This motivates the development of the \emph{live prefetching} technique in the current work that performs real-time computation prediction  to avoid unnecessary prefetching for those tasks in an offloaded program that are unlikely to be executed.  Thereby, the technique reduces  the data rates for wireless transmission and the mobile transmission-energy consumption.  

\subsection{Contributions and Organization}
In this paper, we consider the MCO system where an access point connected to the cloud fetches user-specific data from a mobile to run an offloaded program  comprising a large set of potential tasks, called the \emph{task space}. Running the  program requires the execution of a fixed number  of  sequential tasks.  The tasks are  assumed to be unknown to the cloud till their execution and modeled as a sample path over  a Markov chain in the task space. Based on this model, during the execution of Task $(K-1)$, it is known to the cloud that there exists a set of candidates for Task $K$ with given likelihoods (specified by task transition probabilities). Furthermore, the cloud has the knowledge of the sizes of user-specific data required for individual tasks.  The task execution time is assumed to be uniform  for simplicity. 

During the execution of a particular task, a prefetcher in the cloud dynamically identifies a subset of candidates for the subsequent task and fetches parts of their input data. 
The remaining input data of the subsequent task is fetched after it is determined, called demand-fetching. Based on the above model, we propose the architecture for live prefetching, referred to simply as fetching hereafter, as follows. The program runtime is divided into fixed durations with each further divided into  
two phases for prefetching and demand-fetching. Specifically, considering the $K$-th duration, the prefetching phase is used for simultaneous prefetching for Task $K$ and  execution of Task $(K-1)$,  and the next phase for demand-fetching for Task~$K$. In this duration, the prefetching policy determines the sizes of prefetched data of candidate tasks for Task~$K$, by which the size of demand-fetched data for Task~$K$ is also  determined. 

This work focuses on optimizing the prefetching policies that dynamically select tasks for prefetching and control the corresponding prefetched data sizes under the criterion of minimum mobile energy consumption. The key results are summarized as follows. 

\begin{itemize}
\item (Prefetching Over Slow Fading Channels) Consider slow fading where the channel gain stays constant throughout the runtime of the offloaded program. The optimal prefetching policy  is derived in closed form. The policy has a threshold-based structure where the user-specific data for a candidate task is prefetched  only if its likelihood exceeds a given threshold and the prefetched data size is a monotone increasing function of the likelihood. 

\item (Prefetching Over Fast  Fading Channels) Consider fast  fading where the channel gain is independent and identically  distributed  (i.i.d.) over slots dividing each fetching duration. Given causal \emph{channel state information} (CSI) at the cloud, the derivation of the optimal prefetching policy is intractable due to the difficulty in determining the set of candidate tasks for prefetching. Conditioned on a given set, the optimal policy is shown to have a similar threshold-based structure as the slow-fading counterpart. This result is leveraged to design two sub-optimal policies based on estimating  the candidate-task set. 
 The resultant prefetching policies are shown by simulation to be close-to-optimal. 

\item (Prefetching Gain) Let the \emph{prefetching gain} be defined as the transmission-energy ratio between the cases without  and with prefetching. For slow fading, the gain is shown to be \emph{strictly} larger than one. The prefetching gain for fast fading is derived that has the same property as mentioned for the gain for slow fading. 
\end{itemize}

The rest of this paper is organized as follows.  
The system model is presented in Section~\ref{Section:SystemModel}. 
The live prefetching architecture and problem formulation are presented in Section~\ref{Section:ArchitectureandFormulation}.  
The prefetching policies are designed and analyzed for the cases of slow and fading fading 
in Sections~\ref{Section:SlowFadingPart} and \ref{Section:FastFadingPart}, respectively.  
Simulation results are presented in  Section \ref{Section:Simulation} following by concluding remarks  in Section \ref{Section:Conclusion}. 
 
\section{System Model}\label{Section:SystemModel}

\begin{figure}[t]
\centering
\includegraphics[width=8.5cm]{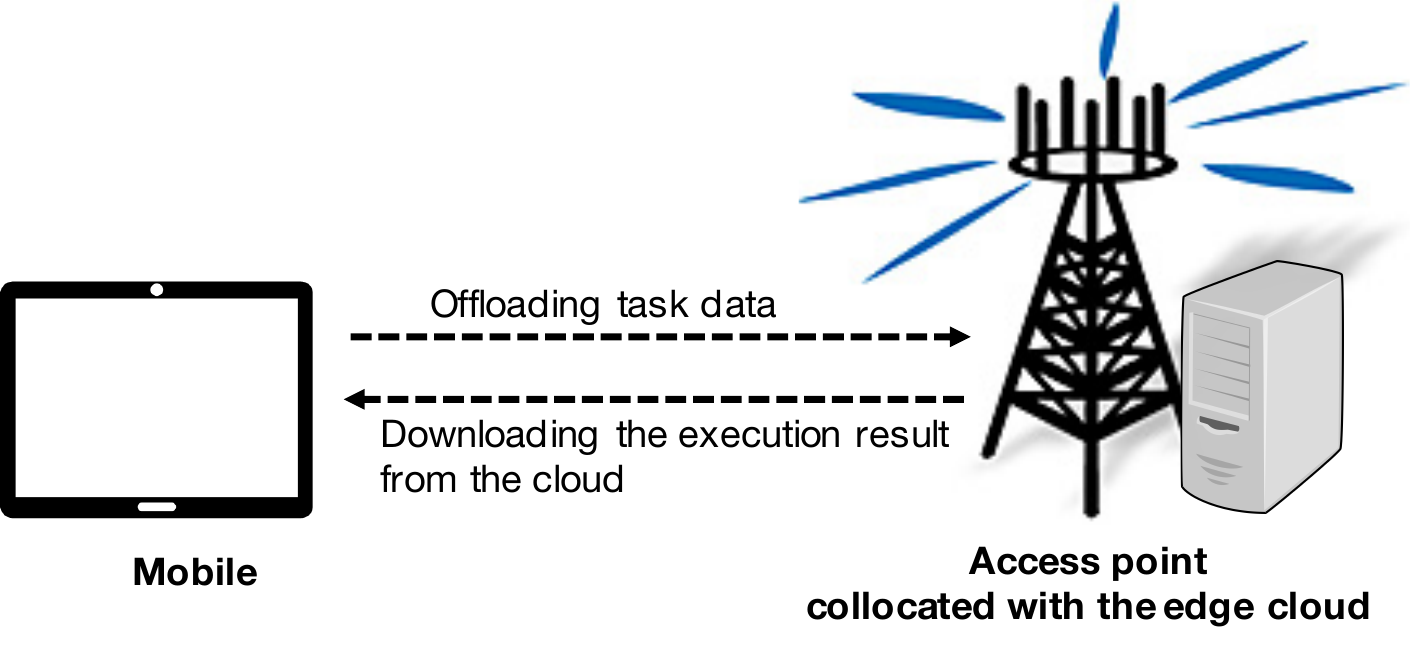}
\caption{MCO system with prefetching.}
\label{Fig:System}
\end{figure}

Consider the MCO system in Fig.~\ref{Fig:System}, which comprises a mobile and an access point connected to an edge cloud. 
The mobile attempts to execute a program comprising multiple tasks, each of which refers to the minimum computation unit that cannot be split.
For simplicity, the tasks are assumed to follow a sequential order.  
We consider the scenario where the tasks are computation intensive or have stringent deadlines such that they cannot be executed at the mobile due to its limited computation capabilities.
To overcome this difficulty, the  mobile  offloads the tasks to the edge cloud and thereby the application is remotely executed\footnote{It is assumed that the required input data for mobile computing is either cached at the AP or fetched from the user, which can be justified for typical mobile applications such as image processing or online gaming. The assumption simplifies the design by making the performance of backhaul links irrelevant.}.
For this purpose, the mobile transmits data  to and receive computation output via the access point.

The channel is modeled as follows.  The channel gain in slot  $n$ is denoted as $g_{n}$ with $g_n > 0$. We consider both slow and fast fading for the wireless channel. For slow fading, the channel gain is constant denoted as $g$:  $g_n = g$ for all $n$. Next, fast fading is modeled as  block fading where channel gains are constant over one time slot and i.i.d. 
over different slots. The fast fading model addresses the fact that a mission of next-generation wireless system is to support mobile computing for users with high mobility e.g., users in cars or on trains. The variation of channel gain during the computing process causes the required power for transporting a fixed number of bits in a single time slot to also vary with time. 
This makes the optimal strategy for joint adaptive transmission and prefetching more complex than the slow-fading counterpart.

\begin{assumption}[Causal CSI] \emph{Both the mobile and cloud have  perfect knowledge of the constant channel gain in the case of  slow fading and  of the channel gains in the current and past  slots as well as their distribution for the case of fast fading. }
\end{assumption}

Following the models in \cite{Zafer2009, Zhang2013TVT}, the energy consumption for transmitting  $b_{n}$ bits to the cloud in slot  $n$ is modeled using a convex monomial function, denoted as $\mathcal{E}$,  as follows: 
\begin{eqnarray}\label{EnergyModel}
\mathcal{E}(b_{n},g_{n})=\lambda \frac{(b_{n})^{m}}{g_{n}}
\end{eqnarray}
where the constants   $m$ and $\lambda$ represent the monomial order and the energy coefficient, respectively\footnote{It is implicitly assumed that an adaptive modulation and coding scheme is used with a continuous rate adapted to the channel gain. It is interesting to extend the current designs to practical cases e.g., fixed-rate channel codes combined with automatic repeat request (ARQ), which is outside the scope of current work.}.  
The monomial order is a positive integer depending on the specific  modulation-and-coding scheme and takes on values in the typical range 
of $2 \leq m \leq 5$.  Without loss of generality, we set  $\lambda=1$ to simplify notation, which has no effect on the  optimal prefetching policy.

\section{Live Prefetching: Architecture and Problem Formulation}\label{Section:ArchitectureandFormulation} 

In this section, the novel architecture of live prefetching is proposed 
where prefetching the next task and computing the current task 
are performed simultaneously.       
Based on the framework, the problem of optimal prefetching policy is formulated. 

\subsection{Live Prefetching Architecture}

\begin{figure}[t]
\centering
\subfigure[Task topology for the offloaded program]{\includegraphics[width=2.5in]{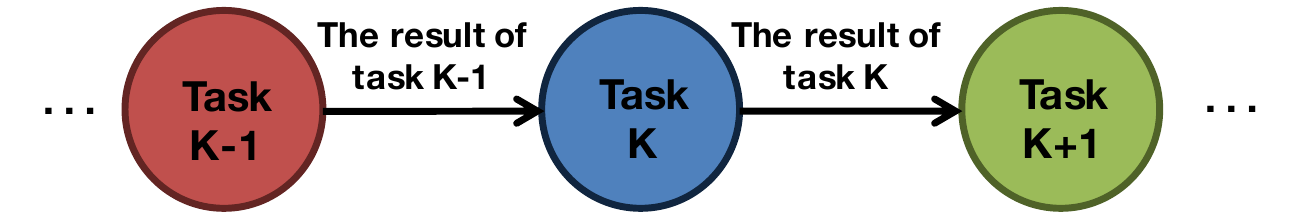}}\\
\subfigure[Live prefetching architecture]{\includegraphics[width=3.5in]{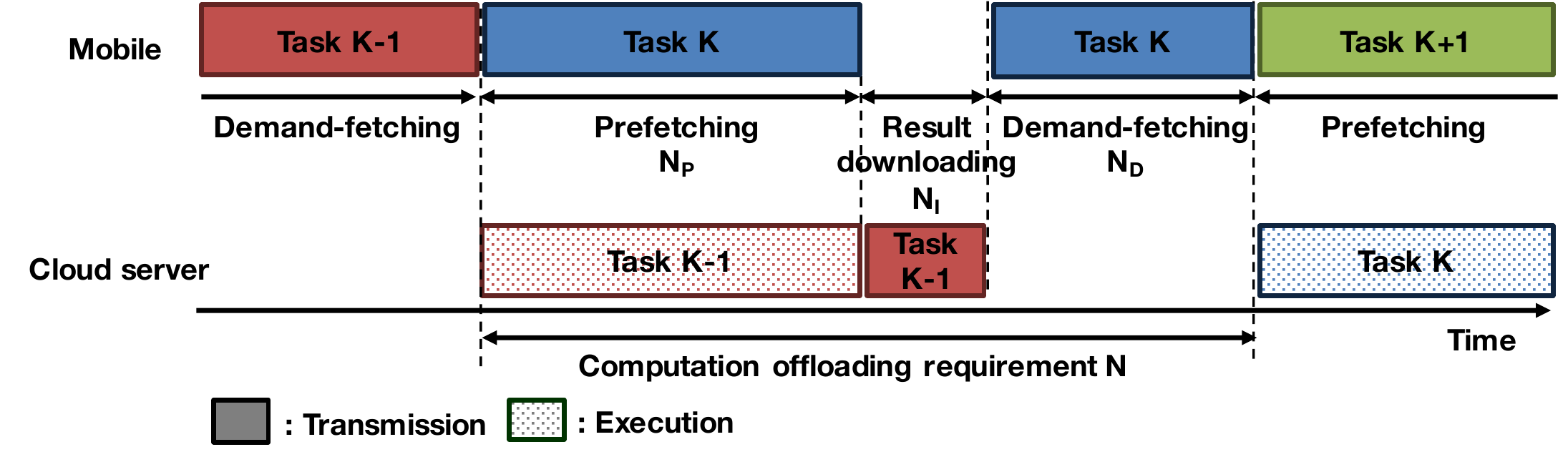}}\\
\subfigure[Conventional fetching architecture]{\includegraphics[width=3.5in]{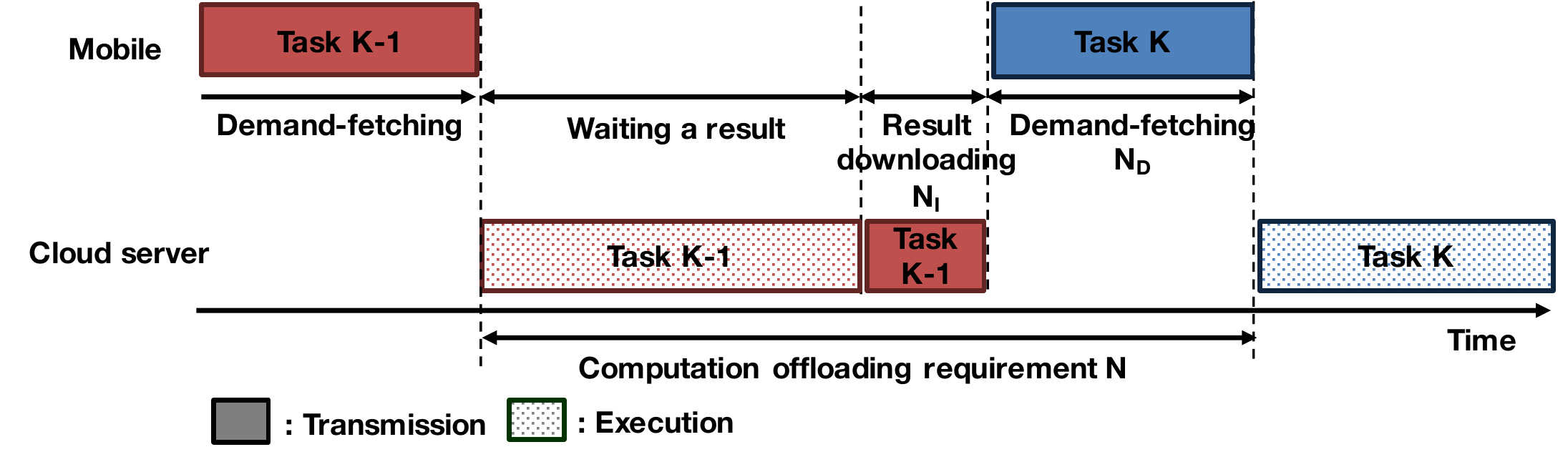}}\\
\caption{The MCO operation.  (a) The task topology  for the offloaded program.  (b)  MCO with live prefetching. (c) Conventional MCO with only demand fetching.}
 \label{Fig:BlockDiagram}
\end{figure}

As aforementioned, fetching refers to the process of delivering user-specific input data (e.g., images, videos, and measured sensor values) from a mobile to the cloud in advance of  computation. 
We define {\it task data} as  new input data with respect to the corresponding task.
Depending on the instant when TD is delivered to the cloud, two fetching schemes are considered defined as follows.
\begin{itemize}
\item {\it Prefetching}: 
A task is said to be prefetched when the mobile offloads part of its TD 
during the computation period of the previous task.
\item  {\it Demand-fetching}: 
A task is said to be demand-fetched
when the mobile offloads un-prefetched TD of the task 
after the completion of the preceding task. 
\end{itemize}

The proposed live prefetching and the conventional fetching architecture are compared in Fig.~\ref{Fig:BlockDiagram}. 
In the proposed architecture, the time overlap  between cloud computation and prefetching reduces latency. Let $(K-1)$ and $K$  denote the indices of the  current and next tasks, respectively. Consider the duration of simultaneous computation of Task $(K-1)$ and prefetching of Task $K$ as shown in Fig.~\ref{Fig:BlockDiagram}(b). Time is divided into slots. The duration is divided into three sequential phases: $N_P$ slots for prefetching of Task $K$ and computing Task $(K-1)$, $N_I$ for downloading computation results to the mobile, and $N_D$ for demand-fetching of Task $K$. The same operations repeat for subsequent tasks. In contrast, in the conventional  architecture without prefetching, fetching relies only on demand-fetching as shown in Fig.~\ref{Fig:BlockDiagram}(c). Comparing Fig.~\ref{Fig:BlockDiagram}(b) and 2(c), prefetching is observed to lengthen the  fetching duration from $N_D$ slots to $(N_P + N_D)$ slots, thereby reducing the transmission-energy consumption (see the system  model). Last, a latency constraint is applied such that the considered duration is no longer than $N$ slots and as a result, $N_{P}+N_{D}+N_I\leq N$. We assume that the result-download  period $N_I$ is negligible  given that  the access point, having much higher transmission  power than the mobile, enables delivering  the execution result within an extremely  short interval.  It follows that the demand-fetching period of $N_{D}$ slots is approximately equal to $(N-N_{P})$ slots. 

\begin{figure}[t]
\centering
\includegraphics[width=6cm]{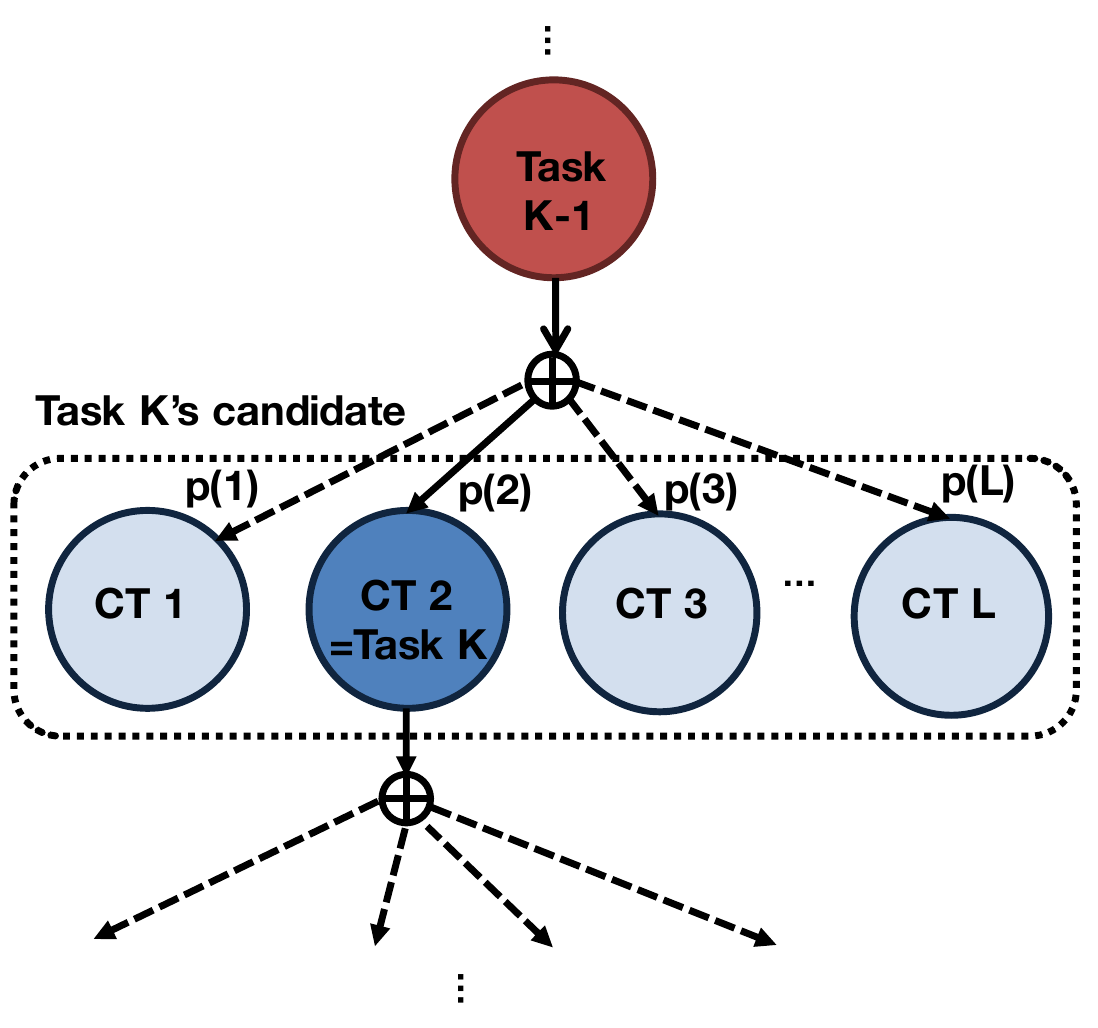}
\caption{Task execution graph.}
\label{Fig:FetchProcess}
\end{figure}

The live prefetching operation is elaborated as follows. 
The optimal prefetching controlled by the cloud  depends on the specified sequence of executed tasks. 
The sequence  is assumed to be unknown to the cloud in advance and modeled as a stochastic sequence 
 using the task-transition graph shown in Fig.~\ref{Fig:FetchProcess} following the approach in \cite{Chou1982}. 
The states of the graph correspond to potential tasks and the arrows are feasible sequential transitions from tasks to tasks 
of which the weights give the transition probabilities. 
Consider the current state being the execution of Task $(K-1)$ [see Fig.~\ref{Fig:FetchProcess}].  
There exist $L$ \emph{candidate tasks} (CTs) for Task $K$ with corresponding transition probabilities denoted as $p(1), p(2), \cdots, p(L)$ and $\sum_{\ell=1}^{L} p(\ell)=1$. Then among the CTs, one is selected as Task $K$ upon the completion of Task $(K-1)$. However, at the current state, the realization of Task $K$ is not yet known and thus not all prefetched TD can be useful\footnote{Noting that the input data for different tasks may overlap, a higher-order task transition graph enables to design 
more efficient architecture by reusing offloaded data before but it is too difficult to obtain the tractable structure. 
Instead, our work focuses on the one-task transition graph shown in Fig.~\ref{Fig:FetchProcess} for tractability.}.

The live prefetching policy is designed  in the sequel to maximize the utility of prefetched data. 
To this end, given the current state being Task $(K-1)$,  let $\gamma(\ell)$ denote the size of TD (in bits) 
if CT~$\ell$ is selected as Task~$K$. We assume that TD is for  exclusive use, namely that TD of CT~$\ell$ 
is useless for other CTs. 
The total TD satisfies the following constraint
\begin{align}\label{TDandPTD}
\sum_{\ell=1}^{L} \gamma(\ell)=\Gamma,
\end{align}
where the constant $\Gamma$ represents the sum of TDs for all CTs.  The TD for CT $\ell$  is divided into prefetching and demand-fetching parts, denoted as  $\alpha(\ell)$ and $\beta(\ell)$, respectively: 
\begin{align}\label{Notation:gammaalphabeta} 
\gamma(\ell)=\alpha(\ell)+\beta(\ell).
\end{align}
In the aforementioned duration of $N$ slots,  the mobile performs the following two-stage fetching for  MCO of Task~$K$:
\begin{enumerate}
\item \emph{Prefetching stage}:  During the $N_P$  slots, the mobile performs prefetching by transmitting $\sum_{\ell=1}^L \alpha(\ell)$~bits.
\item \emph{Demand-fetching stage}:  During the $N_D$  slots, the mobile performs demand-fetching by transmitting $\beta(\ell)$ bits, given that CT $\ell$ is chosen as Task $K$. 
\end{enumerate} 
\subsection{Problem Formulation}

Consider the $N$-slot duration of fetching for Task $K$ [see Fig~\ref{Fig:BlockDiagram}(b)]  or equivalently the current state of the fetching process in Fig.~\ref{Fig:FetchProcess} being Task $(K-1)$. Let $\alpha(\ell)$ denote the number of prefetched  bits for CT~$\ell$ with $1\leq \ell \leq L$ and define the prefetching vector  $\boldsymbol\alpha=\l[\alpha(1), \alpha(2), \cdots, \alpha(L)\r]$. Similarly, the demand-fetching vector is defined as $\boldsymbol\beta=\l[\beta(1), \beta(2), \cdots, \beta(L)\r]$ where $\boldsymbol{\alpha}+\boldsymbol\beta=\boldsymbol\gamma=[\gamma(1),\gamma(2),\cdots,\gamma(L)]$. Moreover, recall that $b_n$ represents the number of bits the mobile transmits to the cloud in slot $n$. Given $\boldsymbol\alpha$ and $\sum_{n=1}^{N_P} b_{n}=\sum_{\ell=1}^L \alpha(\ell)$,  the expected energy consumption of prefetching, represented by $E_{P}$,  is given as 
\begin{align}
E_{P}(\boldsymbol\alpha)= \mathbb{E}_g\left[\sum_{n=1}^{N_P}\mathcal{E}(b_{n},g_{n})\right] \nonumber
\end{align}
with the energy function  $\mathcal{E}$ given in \eqref{EnergyModel}. 
For demand-fetching, one of the CTs, say CT $\ell$,  is chosen as Task $K$. This results in the energy consumption of demand-fetching, denoted as $E_D$, given as 
\begin{align}\label{Eq:EnergyFun:DF}
E_{D}\l(\beta(\ell)\r)= \mathbb{E}_g\left[\sum_{n=N_P +1 }^{N}\mathcal{E}(b_{n},g_{n})\right]
\end{align}
with  $\sum_{n=N_P +1 }^{N} b_{n}=\beta(\ell)$.

The live prefetching design is formulated as a two-stage stochastic optimization problem under the criterion of maximum energy efficiency. 
In the first-stage, consider the design of optimal demand-fetcher conditioned on the prefetching decisions. 
Given the prefetching vector~$\boldsymbol\alpha$   and the realization of Task $K$ as CT $\ell$, 
the total number of bits for demand fetching is obtained as $\beta(\ell) = \gamma(\ell) - \alpha(\ell)$ based on \eqref{Notation:gammaalphabeta}.  
Then the design is translated into the problem of optimal adaptive transmission over $(N - N_P)$ slots for maximizing the energy efficiency, formulated as~follows: 
\begin{align}\label{DemandFetcherOptimization}\tag{P1}
\begin{aligned}
\underset{\{b_{N_P +1}, \cdots, b_N\}}{\min} &\quad  E_{D}(\beta(\ell)) \\  
\text{s.t.}&\quad   \sum_{n=N_P +1 }^{N} b_{n}=\beta(\ell),
\end{aligned}
\end{align}
where the objective function $E_{D}$ is given in \eqref{Eq:EnergyFun:DF}. 

Let $E_{D}^*(\beta(\ell))$ represent the solution for Problem P1. 
In the  second-stage, the prefetching policy is optimized to minimize the overall energy  consumption for fetching. The corresponding optimization problem is formulated as follows: 
\begin{align}\label{TwoStageOptimization}\tag{P2} 
\begin{aligned}
 \underset{\boldsymbol{\alpha},\ \boldsymbol{\beta}, \{b_1, \cdots, b_{N_P}\}}{\min} & \quad E_{P}(\boldsymbol{\alpha})+\sum_{\ell = 1}^L {E}^*_{D}(\beta(\ell))p(\ell)\\
 \text{s.t.}  &\quad \boldsymbol{\alpha}+ \boldsymbol{\beta}=\boldsymbol{\gamma},\\
 &\quad \boldsymbol{\alpha}, \boldsymbol{\beta}\geq \boldsymbol{0}.
\end{aligned}
\end{align}
The optimal live prefetching policy is designed in the sequel by solving Problems \ref{DemandFetcherOptimization} and \ref{TwoStageOptimization}.

\section{Live Prefetching over Slow Fading Channels}\label{Section:SlowFadingPart}

This section aims at designing the optimal live prefetching policy for slow fading channels 
where channel gains remain constant over the fetching period of $N$ slots. 
We derive the mathematical expression of the optimal prefetching vector $\boldsymbol\alpha^*$ and 
quantify the prefetching gain. 

\subsection{Optimal Live Prefetching Policy for Slow Fading}

To facilitate the policy derivation, Problem  \ref{TwoStageOptimization} is transformed as follows. First, it requires solving Problem P1. To this end, the energy consumption of demand-fetching $E_D$ in  \eqref{Eq:EnergyFun:DF} for slow fading can be written and bounded  as 
\begin{align}
E_{D}\l(\beta(\ell)\r)&=\frac{\lambda}{g} \cdot \sum_{n=N_P+1}^N (b_n)^m\nonumber\\
&\geq 
\frac{\lambda}{g} \cdot (N-N_P) \cdot({b_{N_P+1}\times \cdots \times b_{N}})^{\frac{m}{N-N_P}},\nonumber
\end{align}
where the lower bound  follows from the inequality of arithmetic and geometric means. The equality holds when $\{b_n\}$ are equal, 
and the  resultant energy consumption for demand fetching~is 
\begin{align}
E_{D}^*\l(\beta(\ell)\r)=\frac{\lambda}{g} \cdot \frac{\beta(\ell)^m}{(N-N_P)^{m-1}}.\nonumber
\end{align}
The objective function of Problem \ref{TwoStageOptimization} thus becomes
\begin{align}\label{Lemma1:Proof1}
&\frac{\lambda}{g}\cdot \sum_{n=1}^{N_P} (b_n)^m+\frac{\lambda}{g}\cdot \frac{\beta(\ell)^m}{(N-N_P)^{m-1}}\nonumber\\
\geq 
&\frac{\lambda}{g} \cdot ({b_{1}\cdots b_{N_P}})^{\frac{m}{N_P}}+\frac{\lambda}{g} \cdot \frac{\beta(\ell)^m}{(N-N_P)^{m-1}},
\end{align}
where the lower bound follows again from  the inequality of arithmetic and geometric means. 
Recalling that $\sum_{n=1}^{N_P}b_n=\sum_{\ell=1}^L \alpha(\ell)$, the equality holds if
$b_1=\cdots=b_{N_P}=\frac{\sum_{\ell=1}^L \alpha(\ell)}{N_P}$. Substituting this result  into \eqref{Lemma1:Proof1} yields the following lemma.

\begin{lemma} \label{Lemma:ProblemFormulation:StaticChannel}
\emph{Given slow fading, Problem \ref{TwoStageOptimization} is equivalent to:
\begin{align}\label{TwoStageOptimization:SlowFading}\tag{P3} 
\begin{aligned}
& \underset{\boldsymbol{\alpha}}{\min} & & \frac{\l(\sum_{\ell=1}^{L} \alpha(\ell)\r)^m}{(N_P)^{m-1}}+\frac{\sum_{\ell=1}^L p(\ell)(\gamma(\ell)-\alpha(\ell))^m}{(N-N_P)^{m-1}}\\
& \text{s.t.}  && 0 \leq \boldsymbol{\alpha}\leq \boldsymbol{\gamma},
\end{aligned}
\end{align}
and the number of offloaded bits $b_n$ is 
\begin{align}
b_n=\l\{
\begin{aligned}
& \frac{\sum_{\ell=1}^L \alpha(\ell)}{N_P},  && n=1,\cdots, N_P,\\
& \frac{\gamma(\ell)-\alpha(\ell)}{N-N_P}, && n=N_P+1,\cdots, N.
\end{aligned}
\r.\nonumber
\end{align}  
}
\end{lemma}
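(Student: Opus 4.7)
The plan is to reduce Problem \ref{TwoStageOptimization} to Problem \ref{TwoStageOptimization:SlowFading} by optimizing the per-slot bit allocations $\{b_n\}$ in closed form within each of the two fetching phases, exploiting the convexity of $x \mapsto x^m$ for $m \geq 2$ (equivalently, the AM-GM inequality already invoked in the derivation preceding the lemma).

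First I would solve the inner Problem \ref{DemandFetcherOptimization}. Under slow fading $g_n \equiv g$, the demand-fetching energy reduces to $E_D(\beta(\ell)) = (\lambda/g)\sum_{n=N_P+1}^N (b_n)^m$ under the linear constraint $\sum_{n=N_P+1}^N b_n = \beta(\ell)$. By Jensen's inequality (or equivalently the AM-GM step shown in the excerpt), the unique minimizer is the uniform allocation $b_n = \beta(\ell)/(N-N_P)$, which yields the second branch of the $b_n$-formula in the lemma and the closed form $E_D^*(\beta(\ell)) = \lambda \beta(\ell)^m/[g(N-N_P)^{m-1}]$.

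Next I would handle the prefetching-slot variables $b_1,\dots,b_{N_P}$ inside Problem \ref{TwoStageOptimization}. For any feasible $\boldsymbol\alpha$, the total prefetched bits $\sum_{\ell=1}^L \alpha(\ell)$ must be transmitted in exactly $N_P$ slots, so the same convexity argument applied to $(\lambda/g)\sum_{n=1}^{N_P}(b_n)^m$ forces the uniform allocation $b_n = \sum_\ell\alpha(\ell)/N_P$, matching the first branch of the lemma's $b_n$-formula, and gives $E_P^*(\boldsymbol\alpha) = \lambda(\sum_\ell\alpha(\ell))^m/[g(N_P)^{m-1}]$. Substituting $E_P^*$ and $E_D^*$ into \ref{TwoStageOptimization}, eliminating $\boldsymbol\beta$ via $\beta(\ell) = \gamma(\ell) - \alpha(\ell)$, and factoring out the positive constant $\lambda/g$ (which does not change the argmin) produces exactly the objective of \ref{TwoStageOptimization:SlowFading}, while the constraints $\boldsymbol\alpha,\boldsymbol\beta \geq 0$ with $\boldsymbol\alpha + \boldsymbol\beta = \boldsymbol\gamma$ collapse to $0 \leq \boldsymbol\alpha \leq \boldsymbol\gamma$.

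The main item requiring care, rather than a true obstacle, is verifying that the two stage-wise bit-allocation optimizations decouple cleanly: this holds because $\{b_1,\dots,b_{N_P}\}$ enters the joint objective only through $E_P(\boldsymbol\alpha)$ and $\{b_{N_P+1},\dots,b_N\}$ only through $E_D(\beta(\ell))$, while the sole coupling between $\boldsymbol\alpha$ and $\boldsymbol\beta$ is the linear identity $\boldsymbol\alpha + \boldsymbol\beta = \boldsymbol\gamma$. This separability legitimates the pointwise-in-$\ell$ minimization inside \ref{DemandFetcherOptimization} followed by a bulk minimization of the prefetching bits inside \ref{TwoStageOptimization}, and their substituted closed forms yield the claimed equivalent problem \ref{TwoStageOptimization:SlowFading}.
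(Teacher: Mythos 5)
Your proposal is correct and follows essentially the same route as the paper: the authors likewise first show (via the AM--GM inequality, equivalent to your convexity/Jensen argument) that uniform bit allocation is optimal within the demand-fetching slots, then apply the same argument to the prefetching slots, and substitute the resulting closed forms $E_D^*(\beta(\ell))=\lambda\beta(\ell)^m/[g(N-N_P)^{m-1}]$ and $E_P^*(\boldsymbol\alpha)=\lambda(\sum_\ell\alpha(\ell))^m/[g(N_P)^{m-1}]$ into Problem (P2) to obtain (P3). Your explicit remark on the separability of the two bit-allocation subproblems is left implicit in the paper but is the same underlying justification.
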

Lemma~\ref{Lemma:ProblemFormulation:StaticChannel} shows 
that the offloaded bits $b_n$ are evenly distributed over multiple slots, 
removing the variable $\boldsymbol\beta$ in  \ref{TwoStageOptimization} and thereby allowing tractable policy analysis.
The main result is shown in the following proposition. 
\begin{proposition}\label{Proposition:OptimalPrefetchingPolicy:SF}
\emph{ (Optimal Prefetching for Slow Fading) Given slow fading, 
the optimal prefetching vector $\boldsymbol\alpha^*=[\alpha^*(1),\cdots, \alpha^*(L)]$ is 
\begin{align}\label{Eq:Optimalalpha:SF}
{\boldsymbol\alpha}^*=\l[\boldsymbol\gamma-\boldsymbol{p}^{-\frac{1}{m-1}}\frac{N-N_P}{N_P}\alpha^*_{\Sigma}\r]^+,
\end{align}
where  $\boldsymbol{p}=[p(1),p(2),\cdots,p(L)]$ is the task transition probability vector and $\alpha^*_{\Sigma}$ is the optimal total number of prefetched bits,
$\alpha^*_{\Sigma}=\sum_{\ell=1}^L \alpha^*(\ell)$.}
\end{proposition}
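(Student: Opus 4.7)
The plan is to treat Problem \ref{TwoStageOptimization:SlowFading} as a constrained convex program and derive \eqref{Eq:Optimalalpha:SF} from the Karush--Kuhn--Tucker (KKT) conditions. The objective is convex in $\boldsymbol\alpha$ (since $m\geq 2$ makes each term $x\mapsto x^m$ convex, and composition with the affine maps $\alpha(\ell)\mapsto\sum_\ell\alpha(\ell)$ and $\alpha(\ell)\mapsto\gamma(\ell)-\alpha(\ell)$ preserves convexity), while the feasible set is a box. Thus the KKT conditions are both necessary and sufficient, and Slater's condition is trivially satisfied.

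Next I would form the Lagrangian by attaching multipliers $\mu(\ell)\geq 0$ to the constraints $\alpha(\ell)\geq 0$ and $\nu(\ell)\geq 0$ to the constraints $\alpha(\ell)\leq \gamma(\ell)$. Differentiating with respect to $\alpha(\ell)$ and setting $\alpha_\Sigma:=\sum_\ell\alpha(\ell)$ yields the stationarity condition
\begin{equation}
\frac{m\,\alpha_\Sigma^{m-1}}{N_P^{m-1}}-\frac{m\,p(\ell)\bigl(\gamma(\ell)-\alpha(\ell)\bigr)^{m-1}}{(N-N_P)^{m-1}}-\mu(\ell)+\nu(\ell)=0.\nonumber
\end{equation}
For any index $\ell$ with $0<\alpha^*(\ell)<\gamma(\ell)$, complementary slackness forces $\mu(\ell)=\nu(\ell)=0$, so I can solve the stationarity equation explicitly for $\alpha^*(\ell)$. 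Taking the $(m-1)$-th root of the resulting relation (all quantities are nonnegative, so the root is well defined) produces
\begin{equation}
\alpha^*(\ell)=\gamma(\ell)-p(\ell)^{-\frac{1}{m-1}}\,\frac{N-N_P}{N_P}\,\alpha^*_\Sigma,\nonumber
\end{equation}
which is the interior form of \eqref{Eq:Optimalalpha:SF}.

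To obtain the $[\,\cdot\,]^+$ expression, I would handle the boundary separately. If the right-hand side above is negative, the formula violates $\alpha(\ell)\geq 0$, so the lower bound must bind, giving $\alpha^*(\ell)=0$ with $\mu(\ell)\geq 0$ absorbing the stationarity residual; this is exactly the $[\,\cdot\,]^+$ truncation. Conversely, the upper bound $\alpha(\ell)\leq\gamma(\ell)$ can never be tight: the interior expression equals $\gamma(\ell)$ only if the subtracted term vanishes, which requires $\alpha^*_\Sigma=0$, and this contradicts $\alpha^*(\ell)=\gamma(\ell)>0$ whenever $\gamma(\ell)>0$. Hence $\nu(\ell)=0$ for all $\ell$, and the two regimes combine into the single expression \eqref{Eq:Optimalalpha:SF}.

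The main obstacle is that \eqref{Eq:Optimalalpha:SF} is implicit, since $\alpha^*_\Sigma$ appears on both sides through the sum $\sum_\ell\alpha^*(\ell)=\alpha^*_\Sigma$. I would close the argument by noting that the right-hand side, viewed as a function $F(\alpha_\Sigma):=\sum_\ell\bigl[\gamma(\ell)-p(\ell)^{-1/(m-1)}\tfrac{N-N_P}{N_P}\alpha_\Sigma\bigr]^+$, is continuous and monotonically non-increasing in $\alpha_\Sigma$, with $F(0)=\Gamma>0$ and $F(\alpha_\Sigma)\to 0$ as $\alpha_\Sigma\to\infty$; hence the fixed-point equation $F(\alpha_\Sigma)=\alpha_\Sigma$ admits a unique solution, which determines $\alpha^*_\Sigma$ and hence the optimal policy. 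Strict convexity of the objective then guarantees uniqueness of $\boldsymbol\alpha^*$, completing the proof.
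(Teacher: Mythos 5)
Your proof is correct and follows essentially the same route as the paper's: both apply the KKT conditions to the convex program \ref{TwoStageOptimization:SlowFading}, solve the stationarity equation for interior coordinates by taking the $(m-1)$-th root, and obtain the $[\,\cdot\,]^+$ truncation from complementary slackness at the lower bound while ruling out an active upper bound. The only difference is that you fold in the fixed-point argument for the existence and uniqueness of $\alpha^*_\Sigma$, which the paper defers to Corollary~\ref{Corollary:UniqueOptimalPrefetchingVector:SF} and its separate proof.
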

\begin{proof}
See Appendix~\ref{App:OptimalPrefetchingPolicy:SF}. 
\end{proof}

The  optimal prefetching vector $\boldsymbol\alpha^*$ in \eqref{Eq:Optimalalpha:SF} determines the prefetching-task set $\mathcal{S}$ defined in Definition~\ref{definition:PTS}  and vice versa. Exploiting this relation, $\boldsymbol\alpha^*$ and $\mathcal{S}$ can be computed using the simple iterative algorithm presented in Algorithm~\ref{Algo:SlowFade} where the needed \emph{prefetching-priority function} is defined in Definition~\ref{definition:PPF}.  The existence of $\boldsymbol\alpha^*$ (or equivalently $\mathcal{S}$) is shown in Corollary~\ref{Corollary:UniqueOptimalPrefetchingVector:SF}. In addition, given $\boldsymbol\alpha^*$ and $\mathcal{S}$, the optimal total prefetched bits $\alpha^*_{\Sigma}$ is given as 
\begin{align}\label{Eq:TotalPrefetchedBits:SF}
\alpha^*_{\Sigma}=\frac{\sum_{\ell \in\mathcal{S}} \gamma(\ell)}{1+\frac{N-N_P}{N_P}\sum_{\ell \in\mathcal{S}}p(\ell)^{-\frac{1}{m-1}}}. 
\end{align}

\begin{definition}\label{definition:PTS}\emph{(Prefetching Task Set)  The  {\it prefetching-task set}, denoted as $\mathcal{S}$,   is defined as the set of prefetched CTs:    $\mathcal{S}=\left\{\ell \in \mathbb{N} \left\lvert \alpha^*(\ell)>0, 0\leq \ell \leq L \right. \right\}$, where the needed prefetching priority function $\delta(\cdot)$ is defined in  Definition~\ref{definition:PPF}.
}
\end{definition}
\begin{definition}\label{definition:PPF}\emph{(Prefetching Priority Function) The  {\it prefetching priority function} of CT~$\ell$ is defined as $\delta(\ell)=\gamma(\ell) p(\ell)^{\frac{1}{m-1}}$, which determines the prefetching order (see Algorithm~\ref{Algo:SlowFade}). }
\end{definition}

\begin{corollary} 
\label{Corollary:UniqueOptimalPrefetchingVector:SF}
 \emph{(Existence of the Optimal Prefetching Vector $\boldsymbol\alpha^*$)
A unique optimal prefetching vector $\boldsymbol\alpha^*$ satisying \eqref{Eq:Optimalalpha:SF}  always exists in the range of
$0\preceq  \boldsymbol\alpha^*\preceq \boldsymbol\gamma$\footnote{The symbols $\preceq$ in Corollary \ref{Corollary:UniqueOptimalPrefetchingVector:SF} and $\succ$ in Algorithms \ref{Algorithm:FindingPS:SF} and \ref{Algorithm:FindingPS:FF} denote the element-wise inequalities.}
where the first and second equalities hold only when $N\rightarrow \infty$ or $N=N_P$, respectively.  
}\end{corollary}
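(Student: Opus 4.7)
The plan is to collapse the coupled vector fixed-point relation \eqref{Eq:Optimalalpha:SF} to a single scalar equation in $\alpha^*_\Sigma=\sum_{\ell}\alpha^*(\ell)$, establish existence and uniqueness of the scalar root by monotonicity, and then read off the stated range together with the two boundary cases. Summing \eqref{Eq:Optimalalpha:SF} component-wise over $\ell$ produces $\alpha^*_\Sigma=R(\alpha^*_\Sigma)$, where
\begin{equation}
R(x)\triangleq\sum_{\ell=1}^{L}\left[\gamma(\ell)-p(\ell)^{-\frac{1}{m-1}}\tfrac{N-N_P}{N_P}\,x\right]^{+},\qquad x\geq 0.
\end{equation}
Once the scalar root $\alpha^*_\Sigma$ is pinned down, substitution into \eqref{Eq:Optimalalpha:SF} recovers the whole vector $\boldsymbol\alpha^*$ uniquely, so the entire corollary reduces to analysing $R$.

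For existence and uniqueness of the scalar root I would invoke the intermediate value theorem. Each summand of $R$ is a continuous, non-increasing, piecewise-linear function of $x$, so $R$ is continuous and non-increasing on $[0,\infty)$ with $R(0)=\Gamma$ and $R(x)\to 0$ as $x\to\infty$ whenever $N>N_P$. Hence $h(x)\triangleq x-R(x)$ is continuous and strictly increasing, with $h(0)=-\Gamma<0$ and $h(\Gamma)\geq 0$, and therefore admits a unique zero $\alpha^*_\Sigma\in(0,\Gamma]$. The determined vector $\boldsymbol\alpha^*$ automatically lies in the claimed range: $\boldsymbol\alpha^*\succeq\boldsymbol 0$ is enforced by $[\,\cdot\,]^+$, and $\boldsymbol\alpha^*\preceq\boldsymbol\gamma$ holds because each component is $\gamma(\ell)$ minus a non-negative quantity.

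For the two boundary cases, the equality $\boldsymbol\alpha^*=\boldsymbol\gamma$ forces $p(\ell)^{-1/(m-1)}\tfrac{N-N_P}{N_P}\alpha^*_\Sigma=0$ for every $\ell$; since $p(\ell)>0$ and $\alpha^*_\Sigma>0$, this is possible only when $N=N_P$, and conversely setting $N=N_P$ reduces \eqref{Eq:Optimalalpha:SF} to $\boldsymbol\alpha^*=\boldsymbol\gamma$ trivially. The equality $\boldsymbol\alpha^*=\boldsymbol 0$ is equivalent to $\alpha^*_\Sigma=0$, which never holds at any finite $N>N_P$ since the scalar root is strictly positive; I would therefore treat it as an asymptotic statement via the following squeeze argument. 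Fix any $\varepsilon>0$ and choose $N$ large enough that $p(\ell)^{-1/(m-1)}\tfrac{N-N_P}{N_P}\varepsilon>\gamma(\ell)$ for every $\ell$. If $\alpha^*_\Sigma\geq\varepsilon$ held, every summand of $R(\alpha^*_\Sigma)$ would vanish, contradicting the fixed-point identity $\alpha^*_\Sigma=R(\alpha^*_\Sigma)$. Consequently $\alpha^*_\Sigma\to 0$ and $\boldsymbol\alpha^*\to\boldsymbol 0$ as $N\to\infty$.

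The main obstacle I anticipate is precisely this $N\to\infty$ limit: the equality $\boldsymbol\alpha^*=\boldsymbol 0$ is never attained at finite $N$, so the statement has to be given an asymptotic interpretation and accompanied by a quantitative control on the shrinking root, which is exactly what the squeeze argument above supplies; the remaining parts (scalarisation, monotonicity, and the $N=N_P$ case) are routine.
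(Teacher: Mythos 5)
Your proposal is correct and rests on the same underlying idea as the paper's proof --- a monotonicity/intermediate-value argument locating a unique crossing --- but your formalization is tighter and worth noting. The paper inverts the per-component relation \eqref{Eq:Optimalalpha:SF} to write $\alpha^*_{\Sigma}$ as a ``continuous and monotone decreasing function of $\boldsymbol\alpha^*$'' and intersects it with the increasing sum $\sum_\ell \alpha^*(\ell)$; since $\boldsymbol\alpha^*$ is a vector, the monotonicity claim there is only loosely defined and the argument implicitly parametrizes along a one-dimensional family. Your scalarization to the genuine fixed-point equation $x=R(x)$ with $h(x)=x-R(x)$ strictly increasing removes this ambiguity, and the reconstruction of $\boldsymbol\alpha^*$ from the scalar root makes the equivalence between scalar and vector uniqueness explicit, which the paper leaves implicit. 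You also supply two things the paper's proof omits entirely: a proof that $\boldsymbol\alpha^*=\boldsymbol 0$ is never attained at finite $N>N_P$ (which the subsequent Remark relies on), and the quantitative squeeze showing $\alpha^*_{\Sigma}\to 0$ as $N\to\infty$; the paper only treats the $N=N_P$ endpoint. Your reading of the $N\to\infty$ equality as an asymptotic statement is the correct interpretation of the corollary. No gaps.
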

\begin{proof}
See Appendix~\ref{App:UniqueOptimalPrefetchingVector:SF}. 
\end{proof}

\begin{remark} \emph{(Prefetched or Not?) Corollary~\ref{Corollary:UniqueOptimalPrefetchingVector:SF} shows 
that the optimal prefetching vector $\boldsymbol\alpha^*$ is strictly positive if $N$ is finite. 
In other words, prefetching in a slow fading channel is always beneficial to a computation task requiring a finite latency constraint. 
On the other hand, for a latency-tolerant task ($N\rightarrow \infty$),  a mobile needs not perform prefetching ($\boldsymbol\alpha^*=0$).     
}
\end{remark}

\begin{remark} \emph{(Partial or Full Prefetching?) Corollary~\ref{Corollary:UniqueOptimalPrefetchingVector:SF} shows 
that if $N>N_P$, the optimal prefetching vector $\boldsymbol\alpha^*$ is strictly  less than $\boldsymbol\gamma$ 
because the remaining number of bits $\beta(\ell)$ can be delivered during the demand-fetching duration $(N-N_P)$. 
If $N=N_P$, on the other hand, full prefetching ($\boldsymbol\alpha^*=\boldsymbol\gamma$) is optimal because only prefetching is possible.   
}
\end{remark}

\begin{algorithm}[t]
\caption{Finding the optimal prefetching vector and prefetching-task set for slow fading}\label{Algorithm:FindingPS:SF}
\begin{algorithmic}[1]
\State Arranging the CTs in a descending order in terms of the prefetching-priority function $\delta$ in Definition~\ref{definition:PPF}, e.g. $\delta(\ell_1)\geq \delta(\ell_2)$, if and only if $\ell_1>\ell_2$.
\State Setting $\ell=0$ and $\mathcal{S}=\emptyset.$ 
\While{$|{\mathcal{S}}|<L$}
   \State $\ell = \ell+1$	and $\mathcal{S} = \mathcal{S}\bigcup \{\ell\}$.
   \State Compute   $\alpha^*_{\Sigma}$ using \eqref{Eq:TotalPrefetchedBits:SF}.
   \State Compute  $\boldsymbol\alpha^*$ using \eqref{Eq:Optimalalpha:SF}.
   \State Count the number of positive elements in $\boldsymbol\alpha^*$, 
   namely  $|\boldsymbol\alpha^* \succ \boldsymbol 0|$. 
   \If{$|\boldsymbol\alpha^* \succ \boldsymbol 0|=|\mathcal{S}|$ }
   \State \textbf{break}
   \EndIf 
\EndWhile
  \State \textbf{return} $\boldsymbol\alpha^*$ and $\mathcal{S}$.
\end{algorithmic}\label{Algo:SlowFade}
\end{algorithm}

\subsection{Prefetching Gain for Slow Fading}
In order to quantify how much prefetching increases the energy efficiency, we introduce and define the prefetching gain as follows. 
\begin{definition}\label{Def:PrefGain} \emph{ (Prefetching Gain) A prefetching gain $G_P$ is defined as the energy-consumption ratio between MCOs without and with prefetching,
\begin{align}\label{Eq:PrefetchingGainDefinition}
G_P=\frac{\sum_{\ell = 1}^L {E}^*_{D}(\gamma(\ell))p(\ell)}{E_P(\boldsymbol\alpha^*)+\sum_{\ell\in \mathcal{S}} p(\ell) E^*_D(\beta(\ell))}.
\end{align} 
}
\end{definition}
 
The prefetching gain $G_P$ depends on several factors including the latency requirement $N$, prefetching duration $N_P$, and the number of CTs $L$. 
The following result specifies  the relationship mathematically. 
\begin{proposition} \label{Theorem:PrefetchingGain:SF} 
\emph{(Prefetching Gain for Slow Fading).  
Given slow fading, the prefetching gain is 
\begin{align}\label{Theorem:LowerPrefetchingGain}
G_P \geq \l[\frac{N-N_P (1-L^{-\frac{m}{m-1}})}{N-N_P }\r]^{m-1},
\end{align}
where the equality holds when $\gamma(\ell)=\frac{\Gamma}{L}$ and $p(\ell)=\frac{1}{L}$ for all $\ell$.
}
\end{proposition}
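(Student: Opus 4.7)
The plan is to bound $G_P$ by plugging a carefully chosen suboptimal prefetching policy into Problem~\ref{TwoStageOptimization:SlowFading} and then invoking Hölder's and Jensen's inequalities to remove the dependence on the particular pair $(\gamma,p)$. The equality condition $\gamma(\ell)=\Gamma/L$, $p(\ell)=1/L$ guides the choice: the natural candidate is the policy that leaves the same amount of demand-fetch bits for every task,
\[\alpha(\ell)=\gamma(\ell)-\beta_0,\qquad \ell=1,\ldots,L,\]
which is feasible whenever $\beta_0\le\min_\ell\gamma(\ell)$ and depends on $(\gamma,p)$ only through $\Gamma$ and $L$.

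First, I would substitute this policy into the objective of Problem~\ref{TwoStageOptimization:SlowFading} and use $\sum_\ell p(\ell)=1$ to collapse the demand-fetch sum, obtaining
\[\frac{\lambda}{g}\!\left[\frac{(\Gamma-L\beta_0)^m}{N_P^{m-1}}+\frac{\beta_0^m}{(N-N_P)^{m-1}}\right].\]
Setting the derivative with respect to $\beta_0$ to zero gives $\beta_0^\ast = L^{1/(m-1)}(N-N_P)\Gamma/[N_P+L^{m/(m-1)}(N-N_P)]$ and a minimised value of $\lambda\Gamma^m/\{g\,[N_P+L^{m/(m-1)}(N-N_P)]^{m-1}\}$. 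By the optimality of $\alpha^\ast$ from Proposition~\ref{Proposition:OptimalPrefetchingPolicy:SF}, this number is an upper bound on the denominator of $G_P$.

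Next, I would lower-bound the numerator $\frac{\lambda}{g(N-N_P)^{m-1}}\sum_\ell p(\ell)\gamma(\ell)^m$ by writing $\gamma(\ell)=p(\ell)^{1/m}\gamma(\ell)\cdot p(\ell)^{-1/m}$ and applying Hölder's inequality with conjugate exponents $m$ and $m/(m-1)$, which yields $\Gamma^m\le \bigl(\sum_\ell p(\ell)\gamma(\ell)^m\bigr)S^{m-1}$ with $S:=\sum_\ell p(\ell)^{-1/(m-1)}$. A further application of Jensen's inequality to the convex map $x\mapsto x^{-1/(m-1)}$ evaluated at the uniform measure on $\{1,\ldots,L\}$ gives $S\ge L^{m/(m-1)}$, which is what introduces the factor $L^{m/(m-1)}$ into the claimed bound.

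Stringing these three inequalities together and simplifying should produce \eqref{Theorem:LowerPrefetchingGain}. In the symmetric case Hölder holds with equality because $p(\ell)^{1/(m-1)}\gamma(\ell)$ is then constant, Jensen holds with equality because $p$ is uniform, and the constant-$\beta_0$ policy coincides with the optimum of Proposition~\ref{Proposition:OptimalPrefetchingPolicy:SF}; hence the bound is tight at the symmetric allocation. The step I expect to require the most care is the feasibility check $\beta_0^\ast\le\min_\ell\gamma(\ell)$ when the $\gamma(\ell)$'s are strongly unbalanced: if it fails one must truncate to $\alpha(\ell)=[\gamma(\ell)-\beta_0^\ast]^+$ and show, via a convexity or perturbation argument on the excluded tasks, that the same upper bound continues to hold.
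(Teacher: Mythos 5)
Your denominator bound is sound in spirit and essentially mirrors the paper's: the paper also upper-bounds the optimal fetching energy by its value at the symmetric configuration, though it does so by manipulating the closed form of $\boldsymbol\alpha^*$ and $\alpha^*_{\Sigma}$ from Proposition~\ref{Proposition:OptimalPrefetchingPolicy:SF} (using $\frac{N-N_P}{N_P}\alpha^*_{\Sigma}\geq\gamma(\ell)$ for $\ell\notin\mathcal{S}$ to absorb the non-prefetched tasks) rather than by inserting a test policy. The fatal problem is in your numerator step. H\"older correctly gives $\sum_{\ell}p(\ell)\gamma(\ell)^m\geq \Gamma^m/S^{m-1}$ with $S=\sum_{\ell}p(\ell)^{-\frac{1}{m-1}}$, so your chain delivers $G_P\geq \left[\frac{N-N_P(1-L^{-\frac{m}{m-1}})}{N-N_P}\right]^{m-1}\cdot\frac{L^m}{S^{m-1}}$. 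To reach \eqref{Theorem:LowerPrefetchingGain} you would need $S^{m-1}\leq L^m$, but Jensen gives exactly the reverse, $S^{m-1}\geq L^m$; invoking it only weakens your bound, and the factor $L^{m/(m-1)}$ you want never appears with the right sign. This is not a patchable technicality: for $m=L=2$, $\boldsymbol p=(\epsilon,1-\epsilon)$ and $\boldsymbol\gamma=(\Gamma,0)$ one computes $G_P=1+\epsilon N_P/(N-N_P)$, which falls below the stated bound for small $\epsilon$, so no argument can close the gap for arbitrary $(\boldsymbol\gamma,\boldsymbol p)$. The paper's own proof sidesteps rather than resolves this point by replacing $\sum_\ell p(\ell)\gamma(\ell)^m$ with $\frac{1}{L}\sum_\ell\gamma(\ell)^m$, i.e.\ it effectively fixes $p(\ell)=1/L$ before lower-bounding the numerator; your honest H\"older bound makes visible that the inequality as stated really does depend on $p$ through $S$.

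A secondary, smaller gap is the feasibility of your test policy. You flag that $\beta_0^*\leq\min_\ell\gamma(\ell)$ may fail, but the proposed truncation $\alpha(\ell)=[\gamma(\ell)-\beta_0^*]^+$ does not obviously preserve the upper bound: truncation raises $\sum_\ell\alpha(\ell)$ above $\Gamma-L\beta_0^*$, so the prefetching term increases while only the demand terms decrease, and the net sign is unclear without a genuine argument (e.g.\ re-optimizing $\beta_0$ over the surviving tasks, or falling back on the exact characterization of $\mathcal{S}$ as the paper does). As it stands the denominator half is incomplete in exactly the unbalanced regime where the numerator half also breaks down.
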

\begin{IEEEproof}
See Appendix~\ref{App:PrefetchingGain:SF}. 
\end{IEEEproof}
It can be observed that the prefetching gain 
$G_P$ is strictly larger than one.  This shows the effectiveness of optimally controlled prefetching. 

\begin{remark} \emph{ (Effects of Prefetching Parameters) 
The prefetching gain $G_P$ in \eqref{Theorem:LowerPrefetchingGain} shows 
the effects of different parameters. On one hand, 
prefetching lengthens the fetching duration from $(N-N_P)$ to $N$ slots, 
reducing  the transmission rate and thus the resultant energy consumption.  
On the other hand, the mobile pays for the cost of prefetching $N_P(1-L^{-\frac{m}{m-1}})$ due to transmitting   redundant bits. The gain is larger than the cost and hence prefetching is beneficial. 
}
\end{remark}

\section{Live Prefetching over Fast Fading Channels}\label{Section:FastFadingPart}

In this section, we derive the optimal prefetching policy that solves Problem \ref{TwoStageOptimization} 
for fast fading where channel gains are i.i.d. over different slots.
Unlike the case of slow fading, the numbers of bits allocated over time slots
 $\{b_n\}$ should be 
jointly optimized with the prefetching vector $\boldsymbol\alpha$ 
because $b_n$ depends on not only the channel gain $g_n$ but also previous allocated bits $b_1,\cdots\ b_{n-1}$.   
To this end, we reformulate Problems \ref{DemandFetcherOptimization} and \ref{TwoStageOptimization} 
to allow sequential optimization, and solve them by using the backward policy iteration  from stochastic optimization. 

\subsection{Optimal Demand-Fetching Policy for Fast Fading}
The optimal demand-fetching policy is derived by solving Problem P1 as follows.  The demand-fetcher allows the mobile to transmit  $\beta(\ell)=\gamma(\ell)-\alpha(\ell)$  bits 
from slot $(N_P+1)$ to $N$.
Designing the optimal demand-fetcher in Problem \ref{DemandFetcherOptimization} for fast fading 
is thus equivalent to finding the optimal \emph{causal scheduler} \cite{Lee2009} to deliver $\beta(
\ell)$ bits within $(N-N_P)$ slots.
Using the approach of dynamic programming, Problem \ref{DemandFetcherOptimization} is rewritten as
\begin{align} \label{OptimalBit}\tag{P4}
&J_{n}(\rho_{n}(\ell),g_{n})=\nonumber\\
&\l\{
\begin{aligned}
&\!\!\min_{0\leq b_{n}\leq \rho_{n}(\ell)}\!\! \left[\lambda \frac{\l(b_{n}\r)^{m}}{g_{n}}\!\!+\!\!\bar{J}_{n+1}(\rho_{n}(\ell)-b_{n}) \right], && \!\!\!\!N_P+1 \leq n< N,\\
&\!\!\lambda \frac{\l[\rho_{N_P}(\ell)\r]^{m}}{g_{N_P}}, && \!\!\!\!n=N,
\end{aligned}
\r.\nonumber
\end{align} where $\rho_{n}(\ell)$ is the number of remaining bits of CT $\ell$'s TD in slot $n$. 
The cost-to-go function~$\bar{J}_{n}(\rho_{n}(\ell))=\mathbb{E}_{g_{n}}[J_{n}(\rho_{n}(\ell), g_{n})]$ of Problem \ref{OptimalBit} 
gives the expected energy consumption for transmitting $\rho_n(\ell)$ bits from slot $n$ to $N$ 
if the optimal demand-fetching policy is utilized in the remaining slots.  Solving Problem P4 using the technique of backward policy iteration gives the following main result. 
\begin{proposition}\label{Proposition:OptimalDemandFetcher:FF} \emph{ (Optimal Demand-Fetching for Fast Fading)
Given fast fading and the selected task, Task $K$, being CT~$\ell$, the optimal bit allocation over the demand-fetching slots is 
\begin{align}\label{Proposition:OptimalDemandFetcher}
b_n^*(\rho_n(\ell), g_n)=\frac{\rho_{n}(\ell) (g_{n})^{\frac{1}{m-1}}}{(g_{n})^{\frac{1}{m-1}}+ \l(\frac{1}{\xi_{N-n}}\r)^{\frac{1}{m-1}}}, 
\end{align}
where the sequence $\{\rho_{n}(\ell)\}$ can be computed using the initial value $\rho_{N_P+1}(\ell)= \beta(\ell)$ and computed recursively as  $\rho_{n}(\ell)=\rho_{n-1}(\ell)-{b}^*_{n-1}$,  and the coefficients $\{\xi_n\}$ are defined as 
\begin{align}\label{Proposition:xi}
\!\!\!\!\xi_{N-n}\!\!=\!\!
\l\{
\begin{aligned}
&\! \mathbb{E}_{g}\!\!\l[\tfrac{1}{\l(g^{\frac{1}{m-1}}\!+\!\l(\frac{1}{\xi_{N-n-1}}\r)^{\frac{1}{m-1}}\!\!\r)^{{m-1}}}\r], 
&& \!\!\!\!\textrm{$n<N$, }\\
& \! \infty, 
&& \!\!\!\!\textrm{$n=N$. }
\end{aligned}
\r.
\end{align}
The corresponding  expected energy consumption for demand-fetching is 
\begin{align}\label{Proposition:MinimumEnergyForDF}
{E}_{D}^*\l(\beta(\ell)\r)=\lambda  \xi_{N-N_P} {\beta(\ell)}^m.
\end{align}
}
\end{proposition}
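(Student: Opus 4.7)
The plan is to carry out a backward policy iteration on the Bellman recursion in (P4), using the power-law ansatz
\begin{equation*}
\bar{J}_n(\rho) \;=\; \lambda\,\xi_{N-n+1}\,\rho^m, \qquad n = N_P+1,\dots,N,
\end{equation*}
with coefficients $\{\xi_k\}$ to be identified along the way. Verifying this ansatz by induction simultaneously pins down the optimal per-slot allocation $b_n^*$, the recursion defining $\{\xi_k\}$, and the terminal energy formula \eqref{Proposition:MinimumEnergyForDF} via evaluation at $n=N_P+1$ (where $\rho_{N_P+1}(\ell)=\beta(\ell)$).

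For the base case $n=N$, all remaining bits must be transmitted, so $J_N(\rho,g)=\lambda\rho^m/g$ and hence $\bar{J}_N(\rho)=\lambda\,\mathbb{E}[1/g]\,\rho^m$. I would set $\xi_1=\mathbb{E}[1/g]$, which is consistent with the recursion \eqref{Proposition:xi} under the convention $\xi_0=\infty$: the second term in the denominator vanishes and the expectation collapses to $\mathbb{E}[1/g]$. The boundary expression for $b_n^*$ in \eqref{Proposition:OptimalDemandFetcher} also degenerates correctly, returning $b_N^*=\rho_N(\ell)$.

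For the induction step, assume $\bar{J}_{n+1}(\rho)=\lambda\,\xi_{N-n}\,\rho^m$. The per-slot subproblem in (P4) becomes
\begin{equation*}
\min_{0\leq b\leq \rho}\Bigl[\tfrac{\lambda}{g_n}b^m+\lambda\,\xi_{N-n}\,(\rho-b)^m\Bigr],
\end{equation*}
which is strictly convex for $m\geq 2$. Setting the derivative to zero and isolating $b$ yields the closed form in \eqref{Proposition:OptimalDemandFetcher}. The main algebraic step is then to substitute $b_n^*$ back into the cost: writing $a=g_n^{1/(m-1)}$ and $c=\xi_{N-n}^{-1/(m-1)}$, one finds $b_n^*=\rho a/(a+c)$ and $\rho-b_n^*=\rho c/(a+c)$, and a short simplification collapses the two summands to $\lambda\,\rho^m/(a+c)^{m-1}$. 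Taking the expectation over $g_n$—which is independent of the past by the i.i.d.\ block-fading assumption—reproduces the ansatz at index $N-n+1$ and identifies exactly the recursion \eqref{Proposition:xi}.

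The hard part is not conceptual but purely algebraic: the induction closes only because the symmetric simplification $b^m/g+\xi(\rho-b)^m\big|_{b=b^*}=\rho^m/(g^{1/(m-1)}+\xi^{-1/(m-1)})^{m-1}$ preserves the $\rho^m$ scaling of the cost-to-go, so that an otherwise intractable nested minimization over random channels reduces to a deterministic scalar recursion for $\xi_k$. Once this simplification is in hand, the feasibility bound $0\leq b_n^*\leq \rho_n(\ell)$ is automatic from the form $\rho a/(a+c)$, and evaluating $\bar{J}_{N_P+1}(\beta(\ell))$ delivers \eqref{Proposition:MinimumEnergyForDF} immediately.
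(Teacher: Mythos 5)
Your proposal is correct and follows essentially the same route as the paper: backward induction from slot $N$ with the power-law ansatz $\bar{J}_n(\rho)=\lambda\,\xi_{N-n+1}\,\rho^m$, the first-order condition of the strictly convex per-slot subproblem giving $b_n^*$, and the algebraic collapse $b^{*m}/g+\xi(\rho-b^*)^m=\rho^m/(g^{1/(m-1)}+\xi^{-1/(m-1)})^{m-1}$ closing the induction and yielding the $\xi$ recursion. The paper presents the $N$ and $N-1$ steps explicitly and then iterates, whereas you state the induction hypothesis up front, but the argument is the same.
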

\begin{proof}
See Appendix \ref{App:OptimalDemandFetcher:FF}. 
\end{proof}
In addition, bounds on the demand-fetching energy consumption can be obtained as shown in  Corollary~\ref{Corollary:DemandFetchingEnergyBound:FF}, which are useful  for designing the sub-optimal prefetching policies. 
\begin{corollary}\label{Corollary:DemandFetchingEnergyBound:FF}\emph{(Demand-Fetching Energy Consumption for Fast Fading)  
The minimum energy consumption ${E}_{D}^*\l(\beta(\ell)\r)$ in  \eqref{Proposition:MinimumEnergyForDF} 
can be lower and upper bounded as
\begin{align}\label{Eq:DemandFetchingEnergyBound:FF}
\frac{\lambda \beta(\ell)^m}{\mathbb{E}[g] (N-N_P)^{m-1}}  
\leq {E}_{D}^*\l(\beta(\ell)\r)
\leq \mathbb{E}\l[\frac{1}{g}\r]\frac{\lambda \beta(\ell)^m}{ (N-N_P)^{m-1}},
\end{align}
where the first and second equalities hold when $N-N_P\rightarrow \infty$ and $N-N_P=1$, respectively.
} 
\end{corollary}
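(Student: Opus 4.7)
The plan is to sandwich $E_D^*(\beta(\ell)) = \lambda \xi_{N-N_P} \beta(\ell)^m$ from Proposition \ref{Proposition:OptimalDemandFetcher:FF} between the expected energy of a simple suboptimal policy (upper bound) and a pointwise Jensen-type inequality applied to the fading realization (lower bound). Both steps rely only on the objective structure and the i.i.d.\ fading assumption, not on the recursion defining $\xi_k$.

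For the upper bound, I would evaluate the trivial \emph{equal-split} policy $b_n = \beta(\ell)/(N-N_P)$ for $n=N_P+1,\dots,N$. This assignment is feasible (it respects $\sum_n b_n = \beta(\ell)$ and requires no CSI), hence must upper-bound the causal optimum $E_D^*$. A direct computation using the i.i.d.\ assumption gives expected energy
\begin{align}
\mathbb{E}\!\left[\sum_{n=N_P+1}^{N}\mathcal{E}(b_n,g_n)\right] = \frac{\lambda \beta(\ell)^m}{(N-N_P)^{m-1}}\,\mathbb{E}\!\left[\frac{1}{g}\right], \nonumber
\end{align}
which is exactly the RHS of \eqref{Eq:DemandFetchingEnergyBound:FF}. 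Equality at $N-N_P=1$ is immediate: the deadline leaves only one slot, so $b_N=\beta(\ell)$ is forced and equal-split coincides with the optimal policy.

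For the lower bound, I would first verify the joint convexity of $f(b,g) = b^m/g$ on $\{b\geq 0,\, g>0\}$ for $m\geq 2$ via a short Hessian computation: $f_{bb}f_{gg} - f_{bg}^2 = m(m-2)b^{2(m-1)}/g^4 \geq 0$ together with $f_{bb}=m(m-1)b^{m-2}/g\geq 0$. Jensen's inequality applied to the empirical distribution on the pairs $\{(b_n,g_n)\}_{n=N_P+1}^{N}$ then yields, pathwise in the fading realization,
\begin{align}
\sum_{n=N_P+1}^{N}\frac{b_n^m}{g_n} \;\geq\; (N-N_P)\, f\!\l(\tfrac{\beta(\ell)}{N-N_P},\, \bar g\r) \;=\; \frac{\beta(\ell)^m}{(N-N_P)^{m-1}\,\bar g}, \nonumber
\end{align}
where $\bar g = (N-N_P)^{-1}\sum_n g_n$ and I use that $\sum_n b_n = \beta(\ell)$ is deterministic for any feasible (causal) policy. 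Taking expectation and applying Jensen once more to the convex function $1/x$ gives $\mathbb{E}[1/\bar g]\geq 1/\mathbb{E}[\bar g] = 1/\mathbb{E}[g]$, which delivers the claimed lower bound on $E_D^*(\beta(\ell))$.

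The equality behaviour then follows from the tightness conditions of the two Jensen steps: as $N-N_P\to\infty$ the strong law of large numbers forces $\bar g\to\mathbb{E}[g]$ almost surely, so $\mathbb{E}[1/\bar g]\to 1/\mathbb{E}[g]$ and the lower bound becomes sharp in the limit; the upper bound is tight at $N-N_P=1$ by the degeneracy argument above. I do not anticipate serious obstacles here; the only delicate step is the joint convexity of $b^m/g$, which underwrites the empirical-Jensen reduction, and this is settled by the two-line Hessian check. Beyond that, the proof reduces to routine manipulation of the energy model \eqref{EnergyModel} under the i.i.d.\ assumption on $\{g_n\}$.
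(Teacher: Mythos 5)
Your bounds are correct, but you reach them by a genuinely different route from the paper. The paper never leaves the recursion \eqref{Proposition:xi}: for the lower bound it applies Jensen's inequality to the convex map $X\mapsto X^{-(m-1)}$ at each step of the recursion, obtaining $\l(1/\xi_n\r)^{\frac{1}{m-1}}\leq \mathbb{E}\l[g^{\frac{1}{m-1}}\r]+\l(1/\xi_{n-1}\r)^{\frac{1}{m-1}}$ and telescoping to $\xi_n\geq n^{-(m-1)}/\mathbb{E}[g]$; for the upper bound it introduces a comparison sequence $\bar{\xi}_n$ driven by $\mathbb{E}[1/g]$ and telescopes again to $\xi_n\leq n^{-(m-1)}\mathbb{E}[1/g]$. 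You instead bound the value of the underlying optimization directly: the upper bound by evaluating the feasible CSI-free equal-split policy (which trivially dominates the causal optimum), and the lower bound by the joint convexity of $(b,g)\mapsto b^m/g$ for $m\geq 2$ plus a pathwise empirical Jensen step followed by $\mathbb{E}[1/\bar g]\geq 1/\mathbb{E}[g]$. Your upper bound is arguably cleaner and more rigorous than the paper's, whose claim that $\xi_n\leq\bar\xi_n$ is asserted as ``straightforward'' but actually hides a nontrivial concavity verification; your lower bound trades the paper's induction for the Hessian check, and neither argument needs the closed-form optimal policy of Proposition~\ref{Proposition:OptimalDemandFetcher:FF}. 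One caveat on the equality claims: your SLLN argument only closes the gap in the second Jensen step ($\mathbb{E}[1/\bar g]\to 1/\mathbb{E}[g]$), not in the first, empirical one, which is tight only when the pairs $(b_n,g_n)$ are proportional across slots; so tightness of the lower bound as $N-N_P\to\infty$ is not actually established. This is a real looseness, but the paper's own proof suffers from the analogous defect (its step $\mathbb{E}[g^{\frac{1}{m-1}}]\leq\mathbb{E}[g]^{\frac{1}{m-1}}$ is strict for non-degenerate $g$ when $m>2$), so I would not count it against you beyond noting it.
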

\begin{proof}
See Appendix \ref{App:DemandFetchingEnergyBound:FF}.
\end{proof}

\subsection{Optimal Prefetching for Fast Fading}
It is intractable to drive  the optimal prefetching policy in closed form  by directly solving Problem P2 for the case of fast fading. The main difficulty lies in deriving the optimal prefetching-task set $\mathcal{S}$ in Definition~\ref{definition:PTS} given CSI causaliy. Assuming it is known, the optimal prefetching policy can be obtained in closed form that reveals  a threshold based structure similar to the slow-fading counterpart in Proposition~\ref{Proposition:OptimalPrefetchingPolicy:SF}, which is useful for designing sub-optimal policies in the sequel. To this end, some notations are introduced. Let $s_{n}(\ell)$ represent the number of prefetched bits for CT $\ell$ in slot $n$   
satisfying $\sum_{\ell=1}^L s_{n}(\ell)=b_{n}$. As in the preceding sub-section, let $\rho_{n}(\ell)$ represent the remaining bits of CT $\ell$  to fetch at the beginning of slot $n$. Note that the remaining bits of CT~$\ell$ after  prefetching, ${\rho}_{N_P+1}(\ell) $,  is equal to $\beta(\ell)$ in \eqref{Notation:gammaalphabeta}. To simplify notation, the variables defined above are grouped as the \emph{decision vector}, $\boldsymbol{s}_{n}=\l[s_{n}(1), s_{n}(2), \cdots, s_{n}(L)\r]$, and the \emph{status vector},  $\boldsymbol{\rho}_{n}=[\rho_{n}(1), \cdots, \rho_{n}(\ell),\cdots, \rho_{n}(L)]$ with $n=1, \cdots, N_P$. Note that  $\sum_{n=1}^{N_P} \boldsymbol{s}_{n}=\boldsymbol\alpha$. Moreover, the initial value $\boldsymbol{\rho}_{1}$ is $\boldsymbol\gamma$, 
and $\boldsymbol{\rho}_{n}$ can be calculated recursively as 
$\boldsymbol{\rho}_{n}=\boldsymbol{\rho}_{n-1}-\boldsymbol{s}_{n-1}$.

The  optimal  decision vector $\boldsymbol{s}_{n}^*$ depends on the energy consumption not only in  
 the current slot but also in the remaining prefetching and demand-fetching periods. This suggests that Problem P2 for fast fading can be reformulated as a sequential decision problem. Using $\sum_{\ell=1}^L s_{n}(\ell)=b_{n}$, the energy consumption in prefetching slot $n$  is thus given as
\begin{align}
\mathcal{E}_{n}(b_{n},g_{n})=\lambda \frac{\l(\sum_{\ell=1}^L s_{n}(\ell)\r)^m}{g_{n}}.\nonumber
\end{align}
Then Problem P2 can be translated into the sequential decision problem as shown in Problem \ref{OptimizationPrefetcher} as
{\small
\begin{align}\label{OptimizationPrefetcher}\tag{P5}
&V_{n}(\boldsymbol{\rho}_{n},g_{n})\!=\!\nonumber\\
&\l\{
\begin{aligned}
&\min_{0\leq \boldsymbol{s}_{n}\leq \boldsymbol{\rho}_{n}} \left[\lambda \frac{\l(\sum_{\ell=1}^L s_{n}(\ell)\r)^{m}}{g_n}
+\bar{V}_{n+1}(\boldsymbol{\rho}_{n}-\boldsymbol{s}_{n})\right], &&  {n}<N_P,\\
& \min_{0\leq \boldsymbol{s}_{n}\leq \boldsymbol{\rho}_{n}} \left[\lambda \frac{\l(\sum_{\ell=1}^L s_{n}(\ell)\r)^{m}}{g_{n}}+\lambda\cdot \xi_{N-N_P}\right.\\
&\left. \quad\quad\quad\quad\quad\quad\quad\quad\quad  \cdot \sum_{\ell=1}^L p(\ell) (\rho_{n}(\ell)-s_{n}(\ell))^m \right],  && {n}=N_P,
\end{aligned}
\r.\nonumber
\end{align}}where $\bar{V}_{n}(\boldsymbol{\rho})=\mathbb{E}_{g}\l[V_{n}(\boldsymbol{\rho},g)\r]$ is the cost-to-go function
measuring the cumulative expected energy consumption from  slot $n$ to $N$   if the optimal prefetching and demand-fetching are performed in these slots. 
Note that the term $\lambda \xi_{N-N_P}\sum_{\ell=1}^L p(\ell) (\rho_{N_P}(\ell)-s_{N_P}(\ell))^m$ is the expected energy consumption of the optimal demand-fetching in Proposition~\ref{Proposition:OptimalDemandFetcher:FF} given the  decision vector $\boldsymbol{s}_{N_P}$. The optimal prefetching policy can be computed numerically using the technique of backward policy iteration but the result yields  little insight into the policy structure. The difficulty in deriving  the policy analytically aries from the CSI causality. Specifically, deriving $\boldsymbol{s}_{n}^{*}$ for $n<N_P$ requires the cost-to-go function  $\bar{V}_{n}$ to have a closed-form expression which does not exist since the function depends on future channel realizations, $g_{n+1},\cdots, g_{N_P}$, which are unknown in slot $n$.

Nevertheless, useful insight into the optimal policy structure can be derived if the CSI causality constraint can be relaxed such that  the prefetching-task set $\mathcal{S}$  can be computed. This leads to the following main result. 
\begin{proposition}\label{Proposition:OptimalPrefetcher:FF} \emph{(Optimal  Prefetching for Fast Fading) Assuming that the prefetching-task set $\mathcal{S}$ in Definition~\ref{definition:PTS} is known, 
the optimal decision vector, $\boldsymbol{s}_{n}^*=[s_{n}^*(1), \cdots, s_{n}^*(L)]$,    which solves Problem \ref{OptimizationPrefetcher}, is  
\begin{align}\label{Proposition:OptmalPrefetcher1}
\boldsymbol{s}_{n}^*=\l[\boldsymbol{\rho}_{n}-\eta_{n} \boldsymbol{p}^{-\frac{1}{m-1}}\r]^+,
\end{align}
where the thresholds $\{\eta_{n}\}$ are defined as
{\small
\begin{align}\label{Proposition:OptmalPrefetcher2}
&\eta_{n}=\nonumber\\
&\l\{
\begin{aligned}
&\frac{\sum_{\ell\in \mathcal{S}}\rho_{n}(\ell)\l(\frac{1}{\zeta_{N-n}(\mathcal{S})}\r)^{\frac{1}{m-1}}}
{\l[g_{n}^{\frac{1}{m-1}}+\l(\frac{1}{\zeta_{N-n}(\mathcal{S})}\r)^{\frac{1}{m-1}}\r]\sum_{\ell \in \mathcal{S}}
p(\ell)^{-\frac{1}{m-1}}},
&& \textrm{$n<N_P$,}\\
&\frac{\sum_{\ell \in \mathcal{S}}\rho_{n}(\ell)\l(\frac{1}{\xi_{N-N_P}}\r)^{\frac{1}{m-1}}}{g_{n}^{\frac{1}{m-1}}+\l(\frac{1}{\xi_{N-N_P}}\r)^{\frac{1}{m-1}}
\sum_{\ell \in \mathcal{S}}p(\ell)^{-\frac{1}{m-1}}},
&& \textrm{$n=N_P$,}
\end{aligned}
\r.
\end{align}}with the coefficients $\{\zeta_{N-n}(\mathcal{S})\}$ therein defined as
{\small
\begin{align}\label{Proposition:OptmalPrefetcher3}
&\zeta_{N-n}(\mathcal{S})\!=\!\nonumber\\
&\l\{
\begin{aligned}
&\mathbb{E}_{g}\!\!\l[\l(g^{\frac{1}{m-1}}+\l(\tfrac{1}{\zeta_{N-n-1}(\mathcal{S})}\r)^{\frac{1}{m-1}}\r)^{{-(m-1)}}\r], 
&&\!\! \textrm{$n<N_P-1$,}\\
&\mathbb{E}_{g}\!\!\l[\l({g}^{\frac{1}{m-1}}+\l(\tfrac{1}{\xi_{N-N_P}}\r)^{\frac{1}{m-1}}\right. \right.\\
&\left.\left.\quad\quad\quad\quad\quad\quad \cdot \sum_{\ell \in \mathcal{S}} p(\ell)^{-\frac{1}{m-1}}
\r)^{-(m-1)}\r],  
&&\!\! \textrm{$n=N_P-1$.}
\end{aligned}
\r.
\end{align}
} and $\{\xi_{N-n}\}$ defined in \eqref{Proposition:xi}.
As a result, the total  expected energy consumption for fetching is
\begin{align}\label{Eq:TotalEnergyConsumption:FF}
&E_P(\boldsymbol\alpha^*)+\sum_{\ell=1}^L p(\ell) E^*_D(\beta(\ell))\nonumber\\
=&\lambda \l(\sum_{\ell\in \mathcal{S}}^L \gamma(\ell)\r)^m \zeta_{N}(\mathcal{S})+\lambda \sum_{\ell\notin \mathcal{S}} p(\ell) \gamma(\ell)^m,
\end{align}
where the first and second terms represent the energy consumption for prefetched and not-prefetched CTs, respectively.  
}
\end{proposition}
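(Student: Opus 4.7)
The plan is to solve Problem \ref{OptimizationPrefetcher} by backward dynamic programming on the slot index $n=N_P, N_P-1,\dots,1$, with the prefetching-task set $\mathcal{S}$ taken as given. Because no bits are allocated to candidates outside $\mathcal{S}$ during prefetching, $\rho_n(\ell)=\gamma(\ell)$ throughout for every $\ell\notin\mathcal{S}$, so the problem effectively reduces to optimizing over $\{s_n(\ell)\}_{\ell\in\mathcal{S}}$. The structural claim I intend to prove by induction is
$$
\bar{V}_{n+1}(\boldsymbol{\rho})=\lambda\Bigl(\sum_{\ell\in\mathcal{S}}\rho(\ell)\Bigr)^{m}\zeta_{N-n}(\mathcal{S})+C_{\mathcal{S}^c},
$$
where $C_{\mathcal{S}^c}$ collects the fixed demand-fetching contribution from the non-prefetched candidates (which follows from Proposition~\ref{Proposition:OptimalDemandFetcher:FF}); that is, the cost-to-go depends on the prefetched candidates only through the aggregate remaining bits $\tilde\rho=\sum_{\ell\in\mathcal{S}}\rho(\ell)$. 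Establishing this form is the core of the argument.

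For the base case at $n=N_P$, I would substitute the demand-fetching energy $\lambda\xi_{N-N_P}\beta(\ell)^m$ from Proposition~\ref{Proposition:OptimalDemandFetcher:FF} into Problem \ref{OptimizationPrefetcher}, then split the minimization as $\min_{T}\min_{\boldsymbol{s}:\sum s(\ell)=T}$. The inner weighted problem $\min\sum_{\ell\in\mathcal{S}}p(\ell)(\rho(\ell)-s(\ell))^m$ subject to the sum constraint is solved via a single Lagrange multiplier and KKT stationarity, yielding $\rho(\ell)-s^*(\ell)\propto p(\ell)^{-1/(m-1)}$ for $\ell\in\mathcal{S}$; this is precisely the threshold form \eqref{Proposition:OptmalPrefetcher1} with $\eta_{N_P}$ obtained by summing the constraint. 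The outer minimization over $T$ is a one-variable convex problem, solved by the elementary identity $\min_{x+y=c}[x^m/a+y^m/b]=c^m/(a^{1/(m-1)}+b^{1/(m-1)})^{m-1}$. Taking expectation over $g_{N_P}$ produces the coefficient $\zeta_{N-N_P+1}(\mathcal{S})$ in the $n=N_P-1$ branch of \eqref{Proposition:OptmalPrefetcher3}, confirming the structural form at $n=N_P$.

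For the inductive step $n<N_P$, the Bellman recursion becomes $V_n(\boldsymbol{\rho},g_n)=\min_{\boldsymbol{s}}[\lambda T^m/g_n+\bar V_{n+1}(\boldsymbol{\rho}-\boldsymbol{s})]$, and by the induction hypothesis $\bar V_{n+1}$ depends on $\boldsymbol{s}$ only through $T$. The inner minimization is again the same one-dimensional convex problem solved by the identity above, and taking expectation over $g_n$ yields the recursion $\zeta_{N-n}(\mathcal{S})=\mathbb{E}_g[(g^{1/(m-1)}+\zeta_{N-n-1}(\mathcal{S})^{-1/(m-1)})^{-(m-1)}]$ in the first branch of \eqref{Proposition:OptmalPrefetcher3}. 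Because $\bar V_{n+1}$ only sees the aggregate, $\boldsymbol{s}_n^*$ is not uniquely determined by the cost: any split of $T^*$ among $\ell\in\mathcal{S}$ is optimal. I would single out the canonical split that keeps $\rho_{n+1}(\ell)=\eta_n p(\ell)^{-1/(m-1)}$, and a direct algebraic check shows this is equivalent to \eqref{Proposition:OptmalPrefetcher1}--\eqref{Proposition:OptmalPrefetcher2}. Iterating to $n=1$ and evaluating $\bar V_1(\boldsymbol\gamma)$ yields \eqref{Eq:TotalEnergyConsumption:FF}.

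The main obstacle is two-fold. First, preserving the compact closed form of $\bar V_n$ through the induction hinges on the fact that both the current-slot cost and $\bar V_{n+1}$ see the prefetched candidates only through the sum $\tilde\rho$; this is exactly why the assumption "$\mathcal{S}$ known" cannot be dispensed with, since without knowing $\mathcal{S}$ the future cost would also depend on each $p(\ell)(\rho(\ell)-s(\ell))^m$ individually and the separable structure would break. Second, the non-uniqueness of $\boldsymbol{s}_n^*$ for $n<N_P$ requires a careful justification that the particular threshold form in \eqref{Proposition:OptmalPrefetcher1} is one of the optima, and the index bookkeeping that links the general recursion of $\zeta_{N-n}(\mathcal{S})$ to the boundary value $\zeta_{N-N_P+1}(\mathcal{S})$ built out of $\xi_{N-N_P}$ must be tracked carefully to avoid off-by-one slips in the final formula.
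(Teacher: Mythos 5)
Your proposal is correct and follows essentially the same route as the paper's proof: backward policy iteration from slot $N_P$, a KKT/Lagrangian solution of the weighted allocation at the terminal prefetching slot, the structural observation that the cost-to-go depends on the prefetched candidates only through $\sum_{\ell\in\mathcal{S}}\rho_n(\ell)$ so that earlier slots reduce to the scalar problem $\min_{x+y=c}[x^m/a+y^m/b]$, and recovery of the per-candidate split by rebalancing the remaining bits to $\rho_{n+1}(\ell)=\eta_n p(\ell)^{-1/(m-1)}$. If anything, you are more explicit than the paper about the non-uniqueness of the split for $n<N_P$ and why the canonical proportional split must be chosen to keep the interior KKT solution feasible at slot $N_P$, which is a point the paper passes over with a one-line remark.
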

\begin{IEEEproof}
See Appendix~\ref{App:OptimalPrefetcher:FF}. 
\end{IEEEproof}

Comparing Propositions~\ref{Proposition:OptimalPrefetchingPolicy:SF} and \ref{Proposition:OptimalPrefetcher:FF}, both the optimal prefetching policies for slow and fast fading have a threshold based structure.  For the current case, the thresholds  $\{\eta_{n}\}$ in \eqref{Proposition:OptmalPrefetcher2} determine  which CTs are prefetched, and how many bits are prefetched for  each of these CTs.  In particular, a CT having  the prefetching priority  $\delta$ (see Definition~\ref{definition:PPF})  larger than  $\eta_n$ is prefetched in slot $n$.   Some basic properties for the thresholds are given as follows. 

\begin{corollary}\label{Corollary:PropertyThreshold} \emph{(Properties of Thresholds $\{\eta_{n}\}$)
The threshold $\eta_{n}$ in \eqref{Proposition:OptmalPrefetcher2} has the following properties. 
\begin{itemize}
\item (Monotone decreasing $\eta_n$) The sequence $\eta_{1}, \eta_2,\cdots,\eta_{N_P}$  is strictly monotone decreasing.
\item (Upper limit of $\eta_n$) The largest  threshold, namely  $\eta_1$,  is strictly less than the maximum prefetching priority: 
$\eta_1<\max_{\ell} \delta(\ell)$. 
\end{itemize}
}
\end{corollary}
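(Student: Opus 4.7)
My plan is to exploit the closed-form recursions appearing in Proposition~\ref{Proposition:OptimalPrefetcher:FF} (namely \eqref{Proposition:OptmalPrefetcher1} and \eqref{Proposition:OptmalPrefetcher2}) together with the fact that for every $\ell\in\mathcal{S}$ the optimal decision is strictly positive in every prefetching slot, so that the remaining-bits vector $\boldsymbol\rho_n$ collapses to an explicit function of $\eta_{n-1}$. Concretely, under the key claim that $s_n^*(\ell)>0$ for all $\ell\in\mathcal{S}$ and all $n\le N_P$, the formula $\boldsymbol s_n^*=[\boldsymbol\rho_n-\eta_n\boldsymbol p^{-1/(m-1)}]^+$ together with $\boldsymbol\rho_{n+1}=\boldsymbol\rho_n-\boldsymbol s_n^*$ yields $\rho_{n+1}(\ell)=\eta_n\, p(\ell)^{-1/(m-1)}$ for $\ell\in\mathcal{S}$, hence
\[
\sum_{\ell\in\mathcal{S}}\rho_{n+1}(\ell)\;=\;\eta_n\,A,\qquad A\triangleq\sum_{\ell\in\mathcal{S}}p(\ell)^{-\frac{1}{m-1}}.
\]

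For the first bullet (strict monotonicity), I would plug this identity into the definition of $\eta_{n+1}$ in \eqref{Proposition:OptmalPrefetcher2}. For $n+1<N_P$ the ratio simplifies to
\[
\frac{\eta_{n+1}}{\eta_n}=\frac{(1/\zeta_{N-n-1}(\mathcal{S}))^{\frac{1}{m-1}}}{g_{n+1}^{\frac{1}{m-1}}+(1/\zeta_{N-n-1}(\mathcal{S}))^{\frac{1}{m-1}}}<1,
\]
and for the boundary transition $n+1=N_P$ an analogous substitution gives $\eta_{N_P}/\eta_{N_P-1}=A(1/\xi_{N-N_P})^{1/(m-1)}/[g_{N_P}^{1/(m-1)}+A(1/\xi_{N-N_P})^{1/(m-1)}]<1$. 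Both inequalities are strict because $g_{n+1}>0$ almost surely, delivering $\eta_1>\eta_2>\cdots>\eta_{N_P}$.

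For the second bullet, let $\ell^{\star}\in\arg\max_\ell\delta(\ell)$. Since prefetching is always strictly beneficial whenever $N$ is finite (the fast-fading analogue of Corollary~\ref{Corollary:UniqueOptimalPrefetchingVector:SF}, which follows directly from the sub-optimality comparison against $\boldsymbol\alpha=\boldsymbol 0$), the set $\mathcal{S}$ is nonempty; moreover, the priority-ordered structure forces $\ell^{\star}\in\mathcal{S}$ because any $\mathcal{S}$ excluding the task of maximum $\delta$ can be improved by swapping it in. Using $s_1^*(\ell^{\star})>0$, the formula $\boldsymbol s_1^*=[\boldsymbol\gamma-\eta_1\boldsymbol p^{-1/(m-1)}]^+$ requires $\gamma(\ell^{\star})>\eta_1\, p(\ell^{\star})^{-1/(m-1)}$, i.e.\ $\eta_1<\gamma(\ell^{\star})p(\ell^{\star})^{1/(m-1)}=\max_\ell\delta(\ell)$.

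The main obstacle is establishing the supporting claim that $s_n^*(\ell)>0$ for every $\ell\in\mathcal{S}$ and every $n\le N_P$, since the definition of $\mathcal{S}$ only guarantees $\alpha^*(\ell)=\sum_n s_n^*(\ell)>0$. I would handle it by a backward induction on $n$: at slot $N_P$, if $s_{N_P}^*(\ell)=0$ for some $\ell\in\mathcal{S}$, one can exchange an infinitesimal amount of prefetched bits from an earlier slot to slot $N_P$ without changing the sum, and a KKT/stationarity argument applied to Problem \ref{OptimizationPrefetcher} forces the marginal costs across active slots to coincide, contradicting $s_{N_P}^*(\ell)=0$ unless $\ell\notin\mathcal{S}$. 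Propagating this argument backward in $n$ (using the already-established strict monotonicity of $\eta_n$ to rule out re-entry) then closes the induction and justifies the recursion used above.
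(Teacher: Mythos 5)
Your argument for the first bullet is, in substance, the paper's own: both reduce to the product form $\eta_{n+1}=\Pi_n\,\eta_n$ with $\Pi_n<1$, and the only difference is how you reach the identity $\sum_{\ell\in\mathcal{S}}\rho_{n+1}(\ell)=\eta_n\sum_{\ell\in\mathcal{S}}p(\ell)^{-1/(m-1)}$. You obtain it from the per-task relation $\rho_{n+1}(\ell)=\eta_n p(\ell)^{-1/(m-1)}$, which forces you to assume $s_n^*(\ell)>0$ for \emph{every} $\ell\in\mathcal{S}$ and every slot; the paper instead gets the same identity from the aggregate recursion $\sum_{\ell\in\mathcal{S}}\rho_{n+1}(\ell)=\sum_{\ell\in\mathcal{S}}\rho_{n}(\ell)-b_n^*$ together with the closed form of $b_n^*$ in \eqref{Proof:Proposition2Proof12}, which is pure algebra on the definition of $\eta_n$ and needs no per-task interiority. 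So the heaviest piece of your write-up (the backward-induction/KKT exchange establishing the supporting claim) can simply be dropped for this bullet. For the second bullet your route is genuinely different: the paper evaluates $\lim_{g_1\to 0}\eta_1=\sum_{\ell\in\mathcal{S}}\gamma(\ell)\big/\sum_{\ell\in\mathcal{S}}p(\ell)^{-1/(m-1)}$ and bounds that ratio by $\max_\ell\delta(\ell)$ via an averaging inequality, whereas you argue that the top-priority task $\ell^\star$ lies in $\mathcal{S}$ and receives a strictly positive allocation in slot $1$, which forces $\eta_1<\delta(\ell^\star)$ directly. Your version is cleaner and avoids both the limit and the averaging step; moreover you only need $s_1^*(\ell)>0$ for \emph{some} $\ell$ (since $s_1^*(\ell)>0$ iff $\delta(\ell)>\eta_1$, any such $\ell$ gives $\eta_1<\delta(\ell)\le\max_\ell\delta(\ell)$), and that follows from $b_1^*>0$, which is immediate from \eqref{Proof:Proposition2Proof12} whenever $\mathcal{S}\neq\emptyset$ and $g_1>0$ --- again sidestepping the full supporting claim. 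The one caution: the supporting claim in the generality you state is delicate, because the thresholds decrease, so a task with $\eta_{N_P}<\delta(\ell)\le\eta_1$ would belong to $\mathcal{S}$ yet receive nothing in slot $1$; your exchange argument cannot rule this out (the paper asserts the same claim without proof in Section~V-C). Fortunately, as noted, neither bullet actually requires it once the aggregate identities are used.
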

\begin{proof}
See Appendix \ref{App:PropertyThreshold}.  
\end{proof}

Based on  earlier discussion, the second property in the corollary shows that the optimal prefetching should be always performed. 
\subsection{Sub-Optimal Prefetching for Fast Fading}
Proposition~\ref{Proposition:OptimalPrefetcher:FF} shows that the optimal prefetching policy can be obtained by conditioning on the knowledge of  the prefetching-task set $\mathcal{S}$. Unfortunately, finding $\mathcal{S}$ is impossible as it requires non-causal CSI, namely the chain gains $g_{n+1},\cdots, g_{N_P}$. In this section, two algorithms are designed for approximating $\mathcal{S}$ in every time slot. Let  $\widetilde{\mathcal{S}}_n$ denote the approximation result in slot $n$ with $n = 1, 2, \cdots, N_P$. 

First, we derive the procedure  for computing $\mathcal{S}$ and show its dependence on non-causal CSI. The results are useful for designing the said approximation algorithms. Consider slot~$n$ and assume that a candidate task, say  $\ell$, belongs to  $\mathcal{S}$. It follows that the corresponding optimal prefetched data size in the current and remaining slots are nonzero: $s_m^*(\ell)>0$ for $m = 1, 2, \cdots, N_P$. It is straightforward to obtain from   Proposition~\ref{Proposition:OptimalDemandFetcher:FF} the following recursive relation: $\rho_{n+1}(l)=
\rho_{n}(\ell)-s_{n}^*(\ell)=\eta_{n}(\ell)p(\ell)^{-\frac{1}{m-1}}$. Since the total prefetched bits for candidate task $\ell$ is $\alpha^*(\ell)=\sum_{n=1}^{N_P}{s}_{n}^*(\ell)$, it follows that 
 \begin{align}\label{Eq:SumofOptimalDecisionVector:FF}
\alpha^*(\ell)
=\l[\gamma(\ell)-\eta_{N_P} p(\ell)^{-\frac{1}{m-1}} \r]^+.
\end{align}
One can observe from Definition~\ref{definition:PTS} that $\{\alpha^*(\ell)\}$ and $\mathcal{S}$ can be computed iteratively similarly as in Algorithm~\ref{Algo:SlowFade}. Nevertheless, this is infeasible since the  the threshold $\eta_{N_P}$ needed for computing $\{\alpha^*(\ell)\}$ using \eqref{Eq:SumofOptimalDecisionVector:FF} depends on 
non-causal CSI according to the following lemma. 
\begin{lemma} \label{Lemma:Threshold}
\emph{ Considering  slot $n$ and given non-causal CSI  $g_n, g_{n+1},\cdots, g_{N_P}$ and the prefetching-task set $\mathcal{S}$,  the threshold $\eta_{N_P}$ is given as 
{\small
\begin{align}\label{Eq:FinalThreshold:FF}
\!\!\eta_{N_P}&=\frac{\sum_{\ell \in \mathcal{S}}\rho_n(\ell)\l(\frac{1}{\xi_{N-N_P}}\r)^{\frac{1}{m-1}}}{g_{N_P}^{\frac{1}{m-1}}+\l(\frac{1}{\xi_{N-N_P}}\r)^{\frac{1}{m-1}}\!\!\sum_{\ell \in \mathcal{S}}p(\ell)^{-\frac{1}{m-1}}}\nonumber\\
&\cdot \prod_{k=n}^{N_P-1}
\frac{ \l(\frac{1}{\zeta_{N-k}(\mathcal{S})}\r)^{\frac{1}{m-1}}}
{(g_k)^{\frac{1}{m-1}}+\l(\frac{1}{\zeta_{N-k}(\mathcal{S})}\r)^{\frac{1}{m-1}}},
\end{align}}where the variables follow those based in Proposition~\ref{Proposition:OptimalPrefetcher:FF}.}
\end{lemma}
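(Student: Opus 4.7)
The plan is to start from the closed-form expression for $\eta_{N_P}$ given in the $n=N_P$ branch of \eqref{Proposition:OptmalPrefetcher2} and telescope backward using the status-vector recursion $\boldsymbol{\rho}_{k+1}=\boldsymbol{\rho}_{k}-\boldsymbol{s}_{k}^{*}$. First I would note that for every $\ell\in\mathcal{S}$ the optimal prefetching decision is interior at each intermediate slot, so \eqref{Proposition:OptmalPrefetcher1} reads $s_{k}^{*}(\ell)=\rho_{k}(\ell)-\eta_{k}\,p(\ell)^{-1/(m-1)}$ for $k=n,\dots,N_{P}-1$, and hence
\[
\rho_{k+1}(\ell)=\eta_{k}\,p(\ell)^{-\frac{1}{m-1}},\qquad \ell\in\mathcal{S}.
\]

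Summing this identity over $\ell\in\mathcal{S}$ reduces the vector recursion to a one-dimensional one for $A_{k}\bydef\sum_{\ell\in\mathcal{S}}\rho_{k}(\ell)$, namely $A_{k+1}=\eta_{k}\sum_{\ell\in\mathcal{S}}p(\ell)^{-1/(m-1)}$. Substituting the $n<N_{P}$ branch of \eqref{Proposition:OptmalPrefetcher2} for $\eta_{k}$ makes the factor $\sum_{\ell\in\mathcal{S}}p(\ell)^{-1/(m-1)}$ cancel, yielding the clean scalar recursion
\[
A_{k+1}=A_{k}\cdot\frac{\bigl(1/\zeta_{N-k}(\mathcal{S})\bigr)^{\frac{1}{m-1}}}{g_{k}^{\frac{1}{m-1}}+\bigl(1/\zeta_{N-k}(\mathcal{S})\bigr)^{\frac{1}{m-1}}}.
\]
Unrolling this product from $k=n$ up to $k=N_{P}-1$ expresses $A_{N_{P}}$ as $A_{n}$ multiplied by the telescoping product that appears in \eqref{Eq:FinalThreshold:FF}. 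Finally, inserting $A_{N_{P}}=\sum_{\ell\in\mathcal{S}}\rho_{N_{P}}(\ell)$ into the $n=N_{P}$ branch of \eqref{Proposition:OptmalPrefetcher2} produces the stated expression for $\eta_{N_{P}}$ verbatim.

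The only genuine subtlety is justifying the use of the interior formula (rather than the $[\,\cdot\,]^{+}$-truncated version) at every intermediate slot $k\in\{n,\dots,N_{P}-1\}$ for each $\ell\in\mathcal{S}$. This follows from the definition of $\mathcal{S}$ as the set of candidate tasks for which $\alpha^{*}(\ell)=\sum_{k=1}^{N_{P}}s_{k}^{*}(\ell)>0$ together with the monotone-decreasing property of $\{\eta_{k}\}$ established in Corollary~\ref{Corollary:PropertyThreshold}: if $s_{k}^{*}(\ell)$ were ever clipped to zero for some $k$, then $\rho_{k+1}(\ell)=\rho_{k}(\ell)<\eta_{k}p(\ell)^{-1/(m-1)}$, and because $\eta_{k+1}<\eta_{k}$ the clipping would persist for the remainder of the prefetching horizon, contradicting $\alpha^{*}(\ell)>0$. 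Beyond this monotonicity argument, the remaining steps are purely algebraic substitution, so I do not anticipate any essential obstacle; care is only needed to track the two distinct normalizations $\zeta_{N-k}(\mathcal{S})$ (active for $k<N_{P}$) and $\xi_{N-N_{P}}$ (active at $k=N_{P}$), which is precisely the reason the final formula factorizes into a product of $(N_{P}-n)$ $\zeta$-terms multiplied by a single $\xi$-term.
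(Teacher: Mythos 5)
Your main computation is correct and is in substance the paper's own proof: both arguments reduce everything to the scalar recursion $\sum_{\ell\in\mathcal{S}}\rho_{k+1}(\ell)=\sum_{\ell\in\mathcal{S}}\rho_{k}(\ell)\cdot\bigl(1/\zeta_{N-k}(\mathcal{S})\bigr)^{\frac{1}{m-1}}\big/\bigl[(g_k)^{\frac{1}{m-1}}+\bigl(1/\zeta_{N-k}(\mathcal{S})\bigr)^{\frac{1}{m-1}}\bigr]$, telescope it from $k=n$ to $N_P-1$, and insert the resulting expression for $\sum_{\ell\in\mathcal{S}}\rho_{N_P}(\ell)$ into the $n=N_P$ branch of \eqref{Proposition:OptmalPrefetcher2}. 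The only difference is cosmetic: the paper gets the recursion from $\sum_{\ell\in\mathcal{S}}\rho_{k+1}(\ell)=\sum_{\ell\in\mathcal{S}}\rho_{k}(\ell)-b_k^*$ using the closed form of $b_k^*$ in \eqref{Proof:Proposition2Proof12}, whereas you get it from $\rho_{k+1}(\ell)=\eta_k\,p(\ell)^{-\frac{1}{m-1}}$ summed over $\mathcal{S}$ — the same identity. The one genuine flaw is your justification of the interior (un-clipped) property: the monotonicity argument runs backwards. If $s_k^*(\ell)$ were clipped, then $\rho_{k+1}(\ell)=\rho_k(\ell)\leq\eta_k\,p(\ell)^{-\frac{1}{m-1}}$, and since $\eta_{k+1}<\eta_k$ this does \emph{not} imply $\rho_{k+1}(\ell)\leq\eta_{k+1}\,p(\ell)^{-\frac{1}{m-1}}$; a decreasing threshold makes clipping \emph{less} likely to persist, not more. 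What Corollary~\ref{Corollary:PropertyThreshold} actually yields is the converse persistence: once CT~$\ell$ is prefetched in slot $k$, its updated priority $\rho_{k+1}(\ell)\,p(\ell)^{\frac{1}{m-1}}=\eta_k$ exceeds $\eta_{k+1}$, so it stays prefetched thereafter. The paper does not prove the interior property either — it is asserted in the text preceding the lemma as part of the consistency of the given set $\mathcal{S}$ in Proposition~\ref{Proposition:OptimalPrefetcher:FF} — so your proof stands if you take that property as given rather than relying on the incorrect persistence argument.
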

\begin{proof}
See Appendix~\ref{App:Threshold}. 
\end{proof}

\begin{algorithm}[t]
\caption{Finding the approximate prefetching-task set $\widetilde{\mathcal{S}}_n$ in  slot  $n$ for fast fading.}\label{Algorithm:FindingPS:FF}
\begin{algorithmic}[1]
\State Arranging the CTs in descending order in terms of the prefetching-priority function $\delta$ in Definition~\ref{definition:PPF}.
\State Setting $\ell=|\widetilde{\mathcal{S}}_n|$, where $\widetilde{\mathcal{S}}_n=\l\{\ell \in \mathbb{N} \l|s_{n-1}^{*}(\ell)>0, \ell=1, \cdots, L \r.\r\}.$ 
\While{$|\widetilde{\mathcal{S}}_n| < L$}
   \State Estimate the threshold $\eta_{N_P}$ from  \eqref{Eq:AggressivePrefetchingBits:FF} or \eqref{Eq:ConservativePrefetchingBits:FF} corresponding to an aggressive or conservative prefetching policy, respectively. 
   \State Compute $\boldsymbol\alpha^{*}$ using the threshold $\eta_{N_P}$ and  \eqref{Eq:SumofOptimalDecisionVector:FF}.  
   \State Count the number of positive elements in $\boldsymbol\alpha^{*}$, 
   namely $|\boldsymbol\alpha^{*}\succ 0|$. 
   \If{$|\boldsymbol\alpha^{*}\succ 0|=|\widetilde{\mathcal{S}}_n|$}
   \State \textbf{break}
   \EndIf 
   \State $\ell = \ell+1$	and $\widetilde{\mathcal{S}}_n = \widetilde{\mathcal{S}}_n\bigcup \{\ell\}$.
\EndWhile
  \State \textbf{return} $\widetilde{\mathcal{S}}_n$.
\end{algorithmic}
\end{algorithm}

\begin{figure*}[t!]
    \centering
    \subfigure[Expected energy consumption vs. TD size]{
     \includegraphics[width=3.45in]{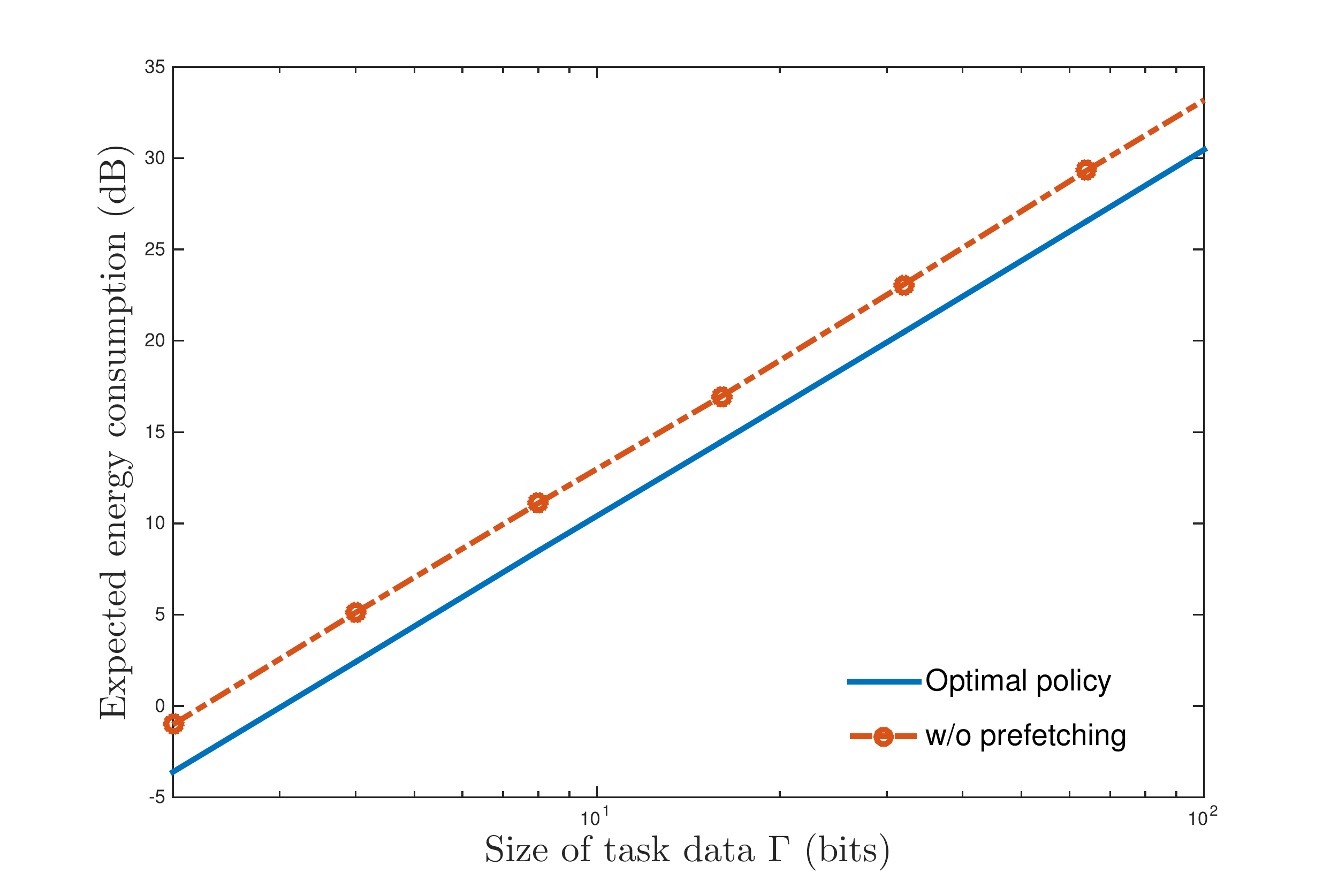}
     \label{Graph2_1}
     } 
    \subfigure[Expected energy consumption vs. number of CTs]{
      \includegraphics[width=3.45in]{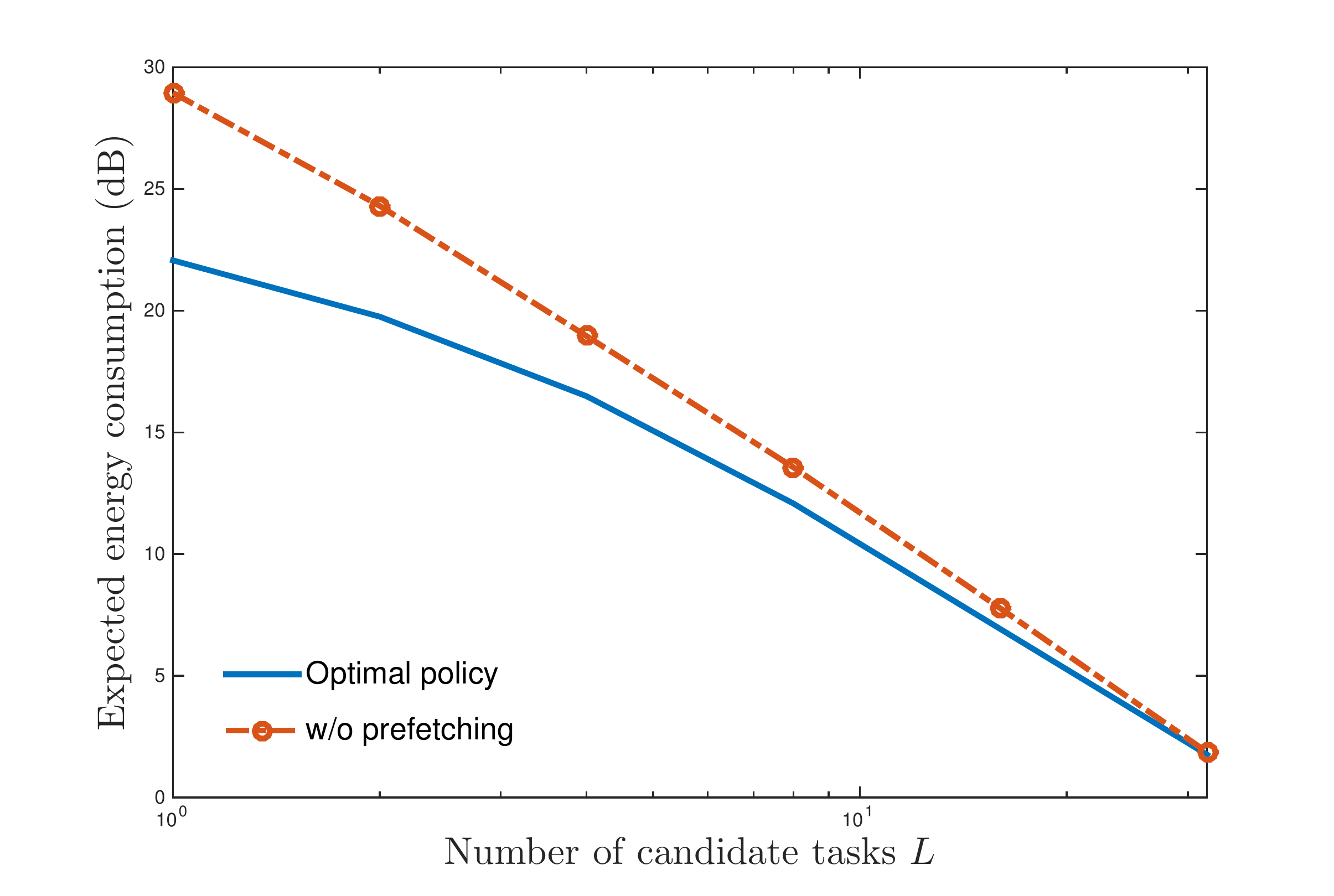}
      \label{Graph2_2}
      }
      \\
      \subfigure[Expected energy consumption vs. latency requirement]
       {
      \includegraphics[width=3.45in]{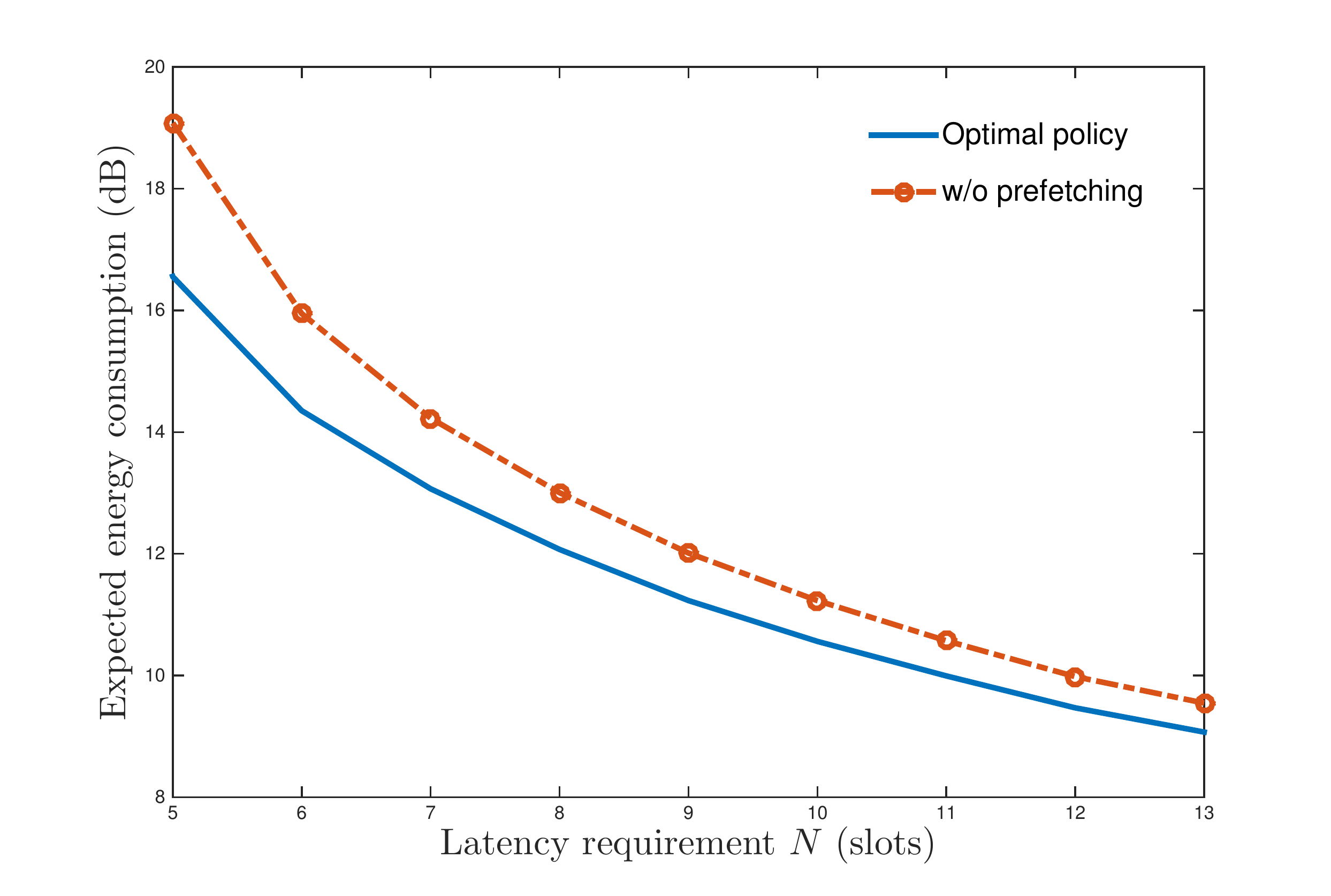}
      \label{Graph2_3}
      }
       \subfigure[Expected energy consumption vs. prefetching duration]
      {
      \includegraphics[width=3.45in]{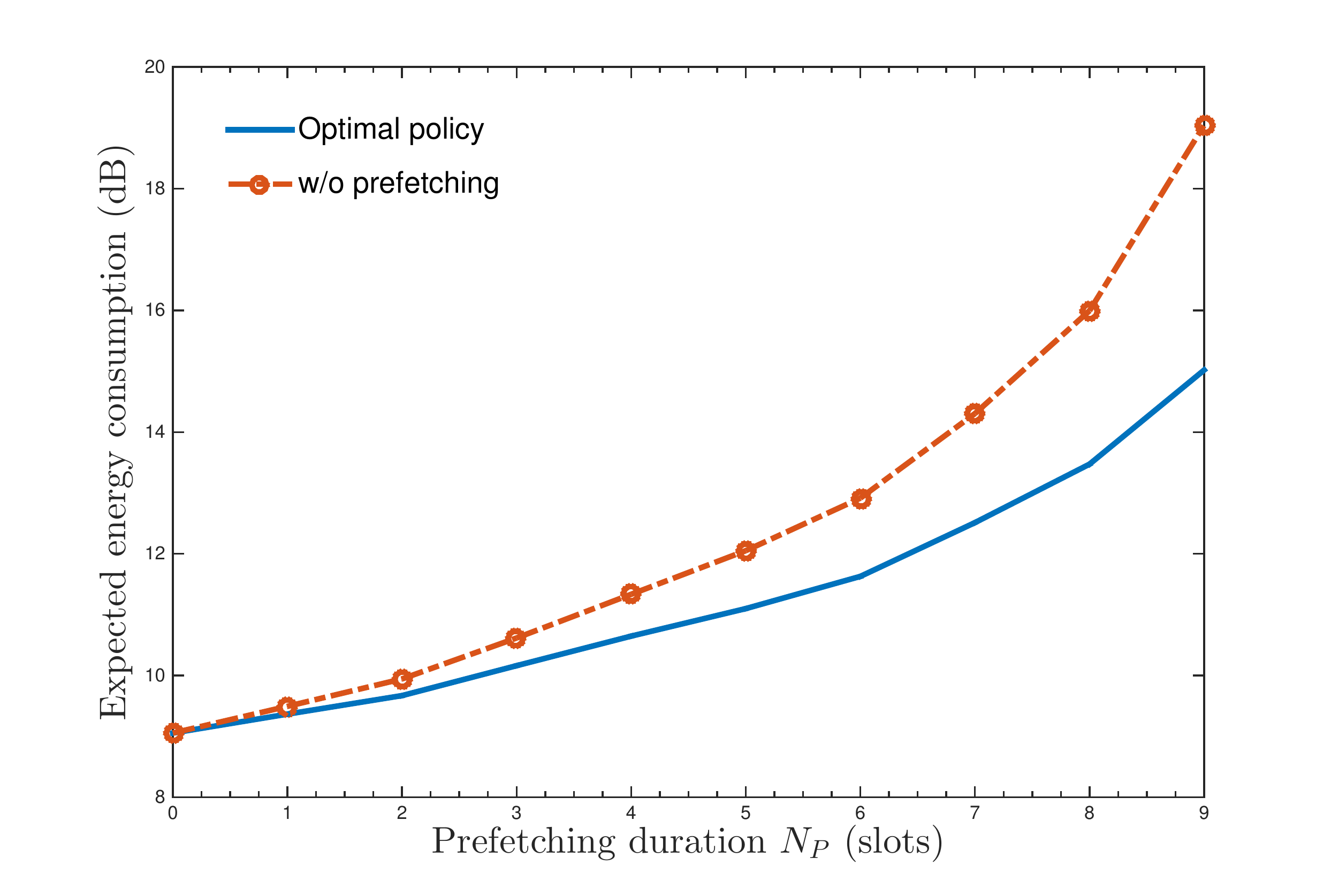}
      \label{Graph2_4}
       } 
      \caption{The effects of MCO parameters on the (mobile) expected energy consumption for the case of slow fading: (a) TD size $\Gamma$ with $N_P=4$, $N=5$ and  $L=4$; (b) the number of CTs  $L$  with $N_P=4$, $N=5$ and $\Gamma=20$; (c)   the latency requirement $N$  (in slot)  with $N_P=4$, $L=4$ and $\Gamma=20$; (d)  the prefetching duration $N_P$ (in  slot) with $N=10$, $L=4$ and $\Gamma=20$. }
    \label{Graph1}
\end{figure*}

The mentioned  difficulty in computing $\eta_{N_P}$ can be overcome by approximation. First, the future channel gains are approximated by their expectation, namely  $g_{k}\approx \mathbb{E}[g]$ for $k = n+1, n+2, \cdots, N_P$. The second approximation is on the following term in \eqref{Eq:FinalThreshold:FF} by its upper bound, which is obtained by applying Jensen's inequality to \eqref{Proposition:OptmalPrefetcher3}:

{\small 
\begin{align}\label{Eq:JensenBound}
&\l(\frac{1}{\zeta_{N-k}(\mathcal{S})}\r)^{\frac{1}{m-1}}\approx\nonumber\\
& \l\{
\begin{aligned}
&\mathbb{E}[g]^{\frac{1}{m-1}}+\l(\frac{1}{\zeta_{N-k-1}(\mathcal{S})}\r)^{\frac{1}{m-1}}, 
&& \textrm{$k<N_P-1$,}\\
&\mathbb{E}[g]^{\frac{1}{m-1}}+\l(\frac{1}{\xi_{N-N_P}}\r)^{\frac{1}{m-1}}\sum_{\ell \in \mathcal{S}} p(\ell)^{-\frac{1}{m-1}},  
&& \textrm{$k=N_P-1$.}
\end{aligned}
\r.
\end{align}}
Substituting the first approximation, namely $g_{k}\approx \mathbb{E}[g]$, gives 
{\small
\begin{align}
&\l(\frac{1}{\zeta_{N-k}(\mathcal{S})}\r)^{\frac{1}{m-1}} \approx \nonumber\\
&\l\{
\begin{aligned}
&(g_{k+1})^{\frac{1}{m-1}}+\l(\frac{1}{\zeta_{N-k-1}(\mathcal{S})}\r)^{\frac{1}{m-1}}, 
&& \textrm{$k<N_P-1$,}\\
&(g_{k+1})^{\frac{1}{m-1}}+\l(\frac{1}{\xi_{N-N_P}}\r)^{\frac{1}{m-1}}\sum_{\ell \in \mathcal{S}} p(\ell)^{-\frac{1}{m-1}},  
&& \textrm{$k=N_P-1$.}
\end{aligned}
\r. 
\end{align}}
\noindent
Substituting this result into \eqref{Eq:FinalThreshold:FF} causes a series of cancelation between product terms, yielding: 
 \begin{align}\label{Eq:AggressivePrefetchingBits:FF}
 \eta_{N_P}\approx &\frac{\sum_{\ell\in\mathcal{S}} \rho_n(\ell){\l(\frac{1}{\xi_{N-N_P}}\r)^{\frac{1}{m-1}}}}{(g_n)^{\frac{1}{m-1}}+\l(\frac{1}{\zeta_{N-n}(\mathcal{S})}\r)^{\frac{1}{m-1}}}.
 \end{align}
 The algorithm in Algorithm~\ref{Algorithm:FindingPS:FF} computes the approximate prefetching-task set $\widetilde{\mathcal{S}}_n$ using \eqref{Eq:AggressivePrefetchingBits:FF} in a similar way as Algorithm~\ref{Algo:SlowFade}. The resultant sub-optimal prefetching policy obtained using Proposition~\ref{Proposition:OptimalPrefetcher:FF} is called an \emph{aggressive} policy since the earlier approximation of fading channel gains by their expectation encourages aggressive prefetching. 

An alternative sub-optimal policy can be designed by approximating the future channel gains by zero: $g_{k}\approx 0$ for $k = n+1, n+2, \cdots, N_P$. Then it follows from \eqref{Eq:FinalThreshold:FF} that 
\begin{align}\label{Eq:ConservativePrefetchingBits:FF}
 \eta_{N_P}\approx \frac{\sum_{\ell\in\mathcal{S}} \rho_n(\ell)\l(\frac{1}{\zeta_{N-n}(\mathcal{S})}\r)^{\frac{1}{m-1}}}{\l((g_n)^{\frac{1}{m-1}}+\l(\frac{1}{\zeta_{N-n}(\mathcal{S})}\r)^{\frac{1}{m-1}}\r)\sum_{\ell \in\mathcal{S}}p(\ell)^{-\frac{1}{m-1}}}.
\end{align}
The alternative policy can be obtained by   replacing the approximation in \eqref{Eq:AggressivePrefetchingBits:FF} with \eqref{Eq:ConservativePrefetchingBits:FF}. The result  is called a  {\it conservative}  policy since the earlier approximation of future channel gains by zero discourages prefetching. 

\begin{figure*}[t!]
    \centering
   \subfigure[Expected energy consumption vs. TD size]{
     \includegraphics[width=3.45in]{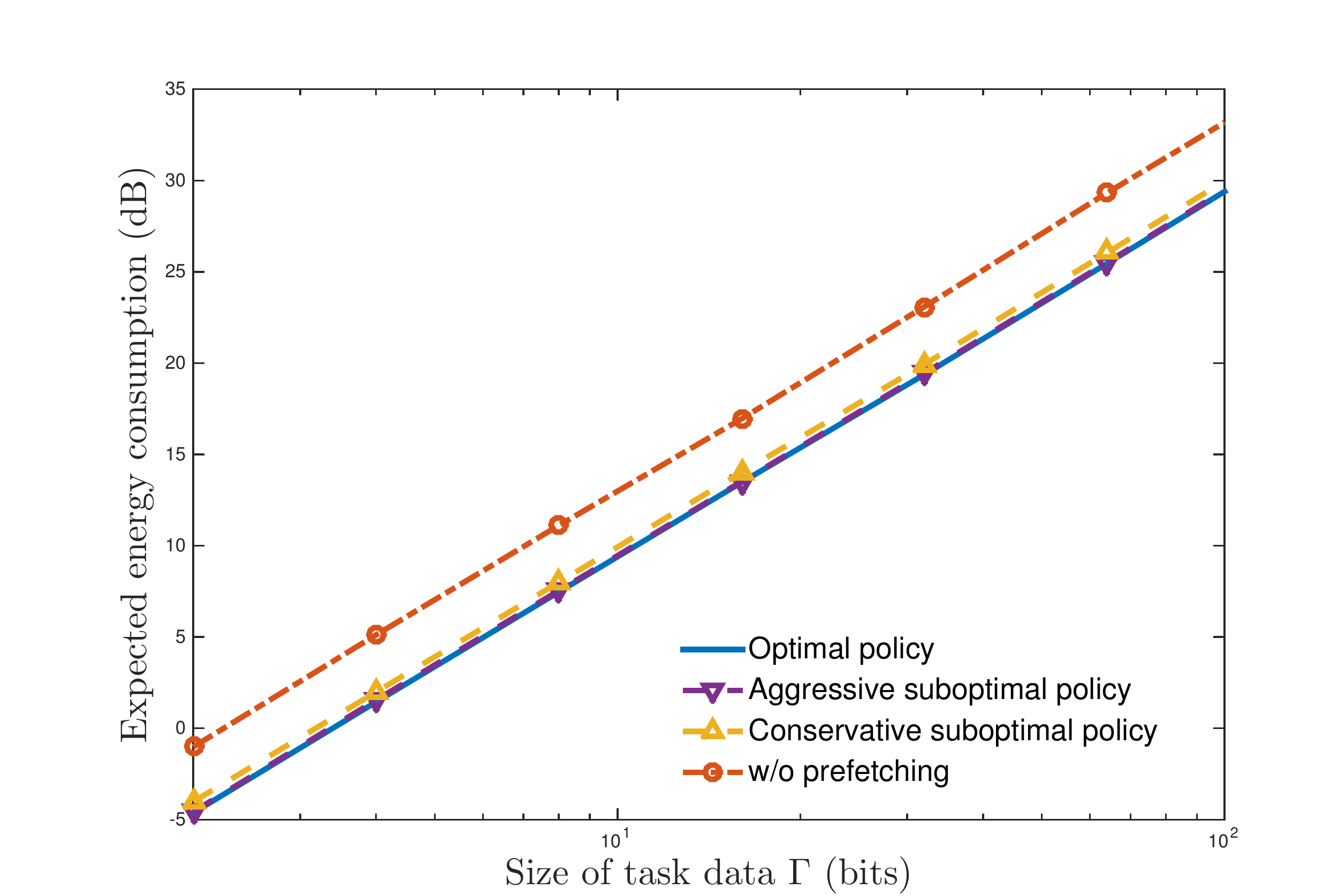}
     \label{Graph2_1}
     } 
    \subfigure[Expected energy consumption vs. number of CTs]{
      \includegraphics[width=3.45in]{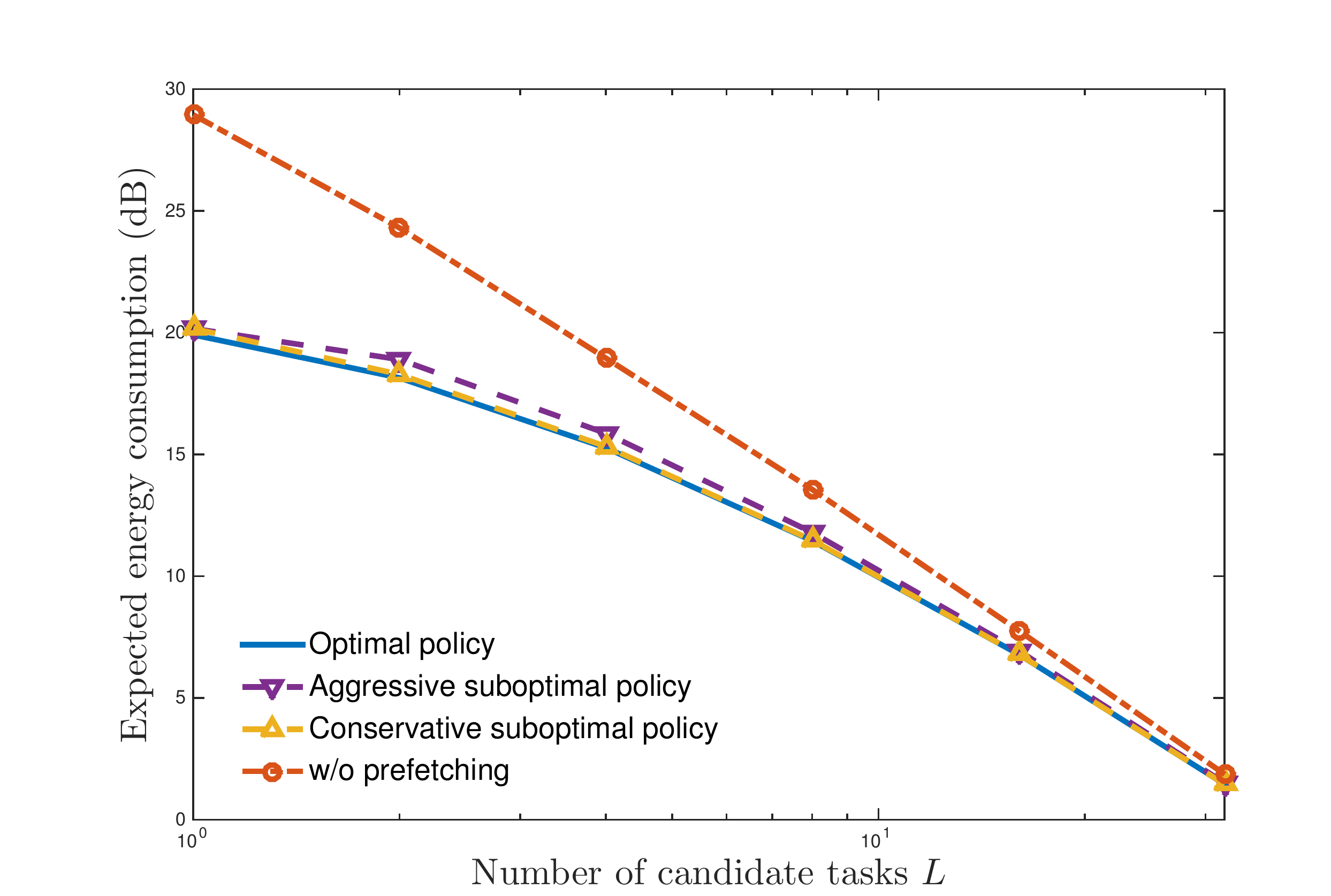}
      \label{Graph2_2}
      }
      \\
      \subfigure[Expected energy consumption vs. latency requirement]
       {
      \includegraphics[width=3.45in]{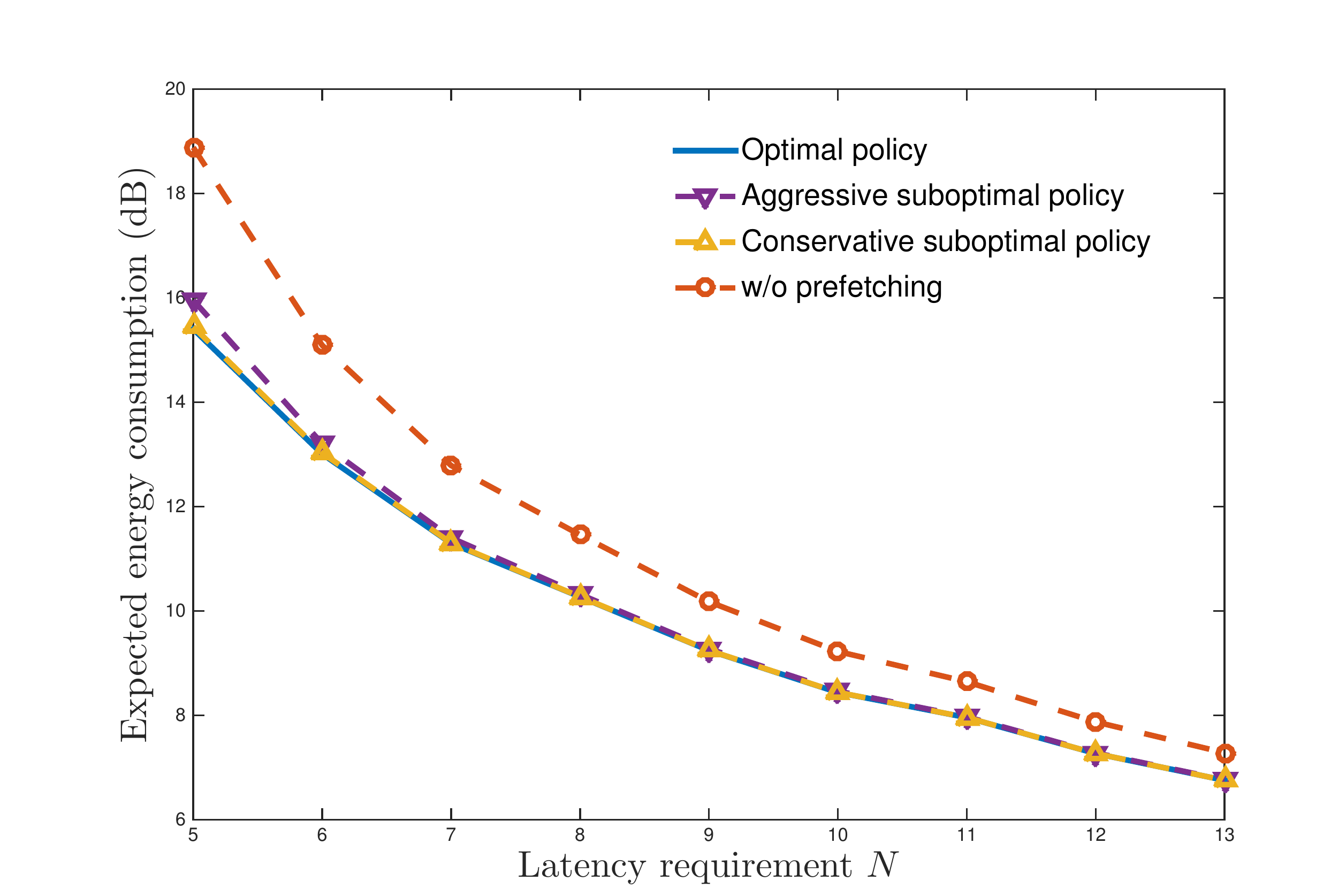}
      \label{Graph2_3}
      }
       \subfigure[Expected energy consumption vs. prefetching duration]
      {
      \includegraphics[width=3.45in]{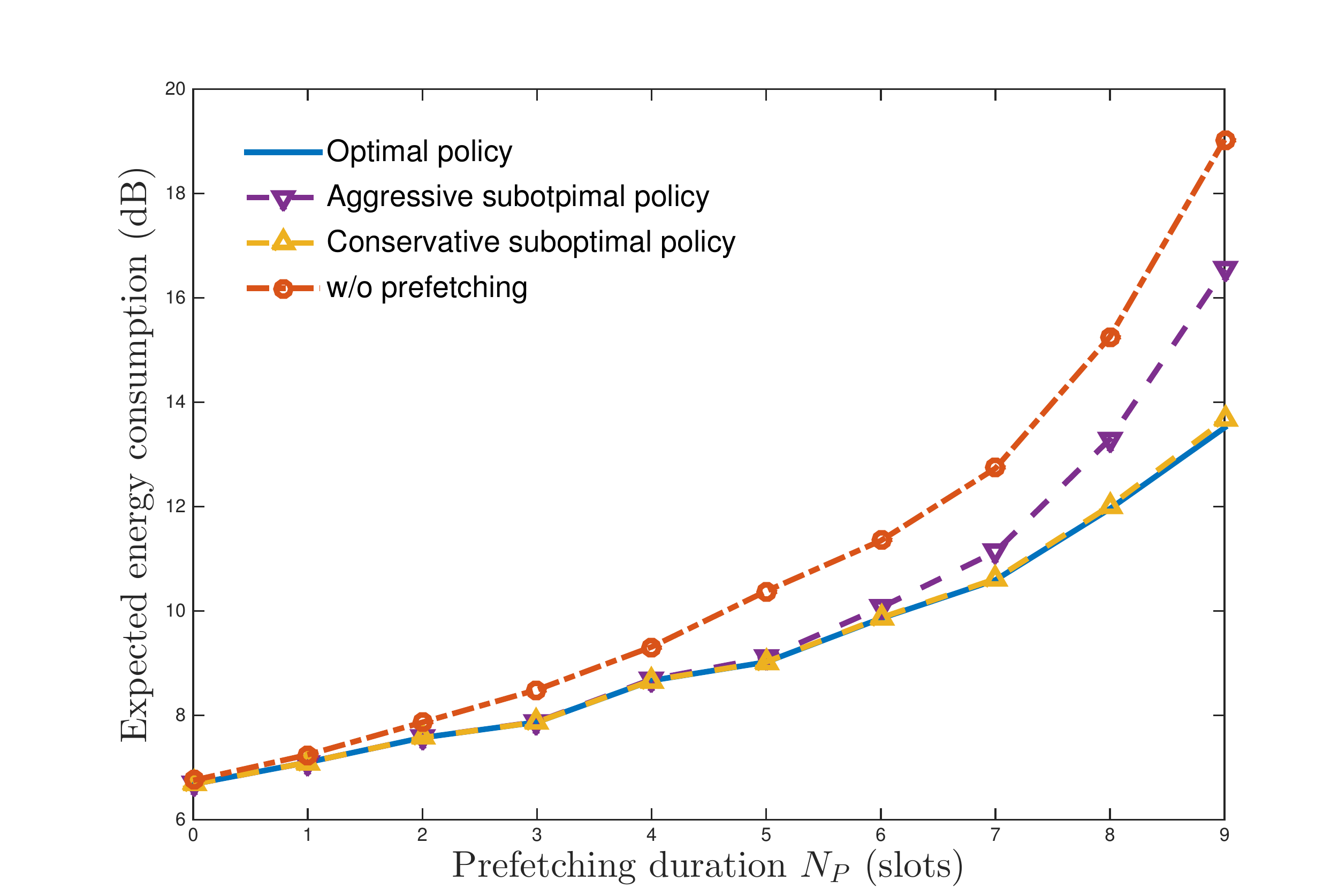}
      \label{Graph2_4}
       } 
      \caption{The effects of MCO parameters on the (mobile) expected energy consumption for the case of fast fading: (a) TD size $\Gamma$ with $N_P=4$, $N=5$ and  $L=4$; (b) the number of CTs  $L$  with $N_P=4$, $N=5$ and $\Gamma=20$; (c)   the latency requirement $N$  (in slot)  with $N_P=4$, $L=4$ and $\Gamma=20$; (d)  the prefetching duration $N_P$ (in  slot) with $N=10$, $L=4$ and $\Gamma=20$. }
    \label{Graph2}
\end{figure*}

\subsection{Prefetching Gain for Fast Fading}

The prefetching gain defined in Definition~\ref{Def:PrefGain} can be quantified  for fast fading as follows. 
\begin{proposition} \label{Theorem:PrefetchingGain:FF} \emph{(Prefetching Gain for Fast Fading).   
Given fast fading, the prefetching gain $G_P$~is 
\begin{align}
G_P > \l[\frac{N-N_P (1-L^{-\frac{m}{m-1}})}{N-N_P }\r]^{m-1},
\end{align}
regardless of $\{\boldsymbol{\gamma}, \boldsymbol{p}\}$.}
\end{proposition}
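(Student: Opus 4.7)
The plan is to mirror the slow-fading argument of Proposition~\ref{Theorem:PrefetchingGain:SF} by leveraging the closed-form fast-fading energies from Propositions~\ref{Proposition:OptimalDemandFetcher:FF} and~\ref{Proposition:OptimalPrefetcher:FF}. First, I would write $G_P$ explicitly: the numerator corresponds to the no-prefetching baseline, where only $N-N_P$ demand-fetching slots are available, so Proposition~\ref{Proposition:OptimalDemandFetcher:FF} gives $E_D^*(\gamma(\ell))=\lambda\xi_{N-N_P}\gamma(\ell)^m$ and thus $\sum_\ell p(\ell) E_D^*(\gamma(\ell))=\lambda\xi_{N-N_P}\sum_\ell p(\ell)\gamma(\ell)^m$; the denominator is read off directly from \eqref{Eq:TotalEnergyConsumption:FF}.

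To produce a tractable lower bound on $G_P$, I would upper-bound the denominator by evaluating the total fetching energy under a \emph{suboptimal} but easy-to-analyze choice of $\boldsymbol\alpha$, namely the proportional choice $\alpha(\ell)=c\,\gamma(\ell)$ for all $\ell$, which trivially places every CT in $\mathcal{S}$. Plugging this into Problem~P5 and optimizing over the scalar $c$ reduces the ratio to essentially the same scalar minimization that underlies the slow-fading proof of Proposition~\ref{Theorem:PrefetchingGain:SF}, with $\xi$ and $\zeta$ playing the roles that $1/[g(N-N_P)^{m-1}]$ and $1/[g\,N_P^{m-1}]$ played before. Carrying through this optimization reproduces the slow-fading bound $[(N-N_P(1-L^{-m/(m-1)}))/(N-N_P)]^{m-1}$.

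The strict inequality is the delicate part. In slow fading, equality is attained in the symmetric sub-case $\gamma(\ell)=\Gamma/L$, $p(\ell)=1/L$, because there the suboptimal proportional prefetching coincides with the true optimum. In fast fading, by contrast, proportional prefetching is strictly suboptimal whenever the channel distribution is non-degenerate: the optimal allocation in Proposition~\ref{Proposition:OptimalDemandFetcher:FF} exploits channel variation, while the proportional proxy does not. Equivalently, the strict form of Jensen's inequality captured in Corollary~\ref{Corollary:DemandFetchingEnergyBound:FF} (whose bracketing inequalities become equalities only at the boundary $N-N_P=1$) shows that $\zeta_N(\mathcal{S})$ is strictly smaller than the coefficient that would be needed to make the slow-fading bound tight, and this strict gap propagates through the ratio to give $>$ rather than $\geq$.

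The main obstacle I expect is preserving sharpness while handling the recursive definitions of $\xi_{N-n}$ and $\zeta_{N-n}(\mathcal{S})$. My plan is to avoid unrolling these recursions and instead treat them as black-box coefficients throughout the analysis, invoking Corollary~\ref{Corollary:DemandFetchingEnergyBound:FF} only at the final step to convert the inequality into the claimed universal form. This confines all dependence on $\boldsymbol\gamma$ and $\boldsymbol p$ to the scalar minimization over $c$—exactly where the slow-fading proof localizes it—so that the slow-fading bound transfers verbatim, while the strict Jensen step upgrades $\geq$ to $>$.
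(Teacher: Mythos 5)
Your setup matches the paper's: both write $G_P$ as the ratio of $\lambda\xi_{N-N_P}\sum_\ell p(\ell)\gamma(\ell)^m$ to the expression in \eqref{Eq:TotalEnergyConsumption:FF}, reduce to the symmetric worst case $\gamma(\ell)=\Gamma/L$, $p(\ell)=1/L$ with $\mathcal{S}$ containing all CTs (exactly as in Appendix~\ref{App:PrefetchingGain:SF}), and arrive at $G_P\geq \xi_{N-N_P}/(\zeta_N(\mathcal{S})L^m)$. Up to that point you are on the paper's track.

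The gap is in your final step. You plan to ``invoke Corollary~\ref{Corollary:DemandFetchingEnergyBound:FF} only at the final step to convert the inequality into the claimed universal form,'' but the two-sided bounds of that corollary go the wrong way when applied separately to the numerator and denominator: to lower-bound the ratio you must take the \emph{lower} bound on $\xi_{N-N_P}$ (which carries the factor $1/\mathbb{E}[g]$) and the \emph{upper} bound on $\zeta_N(\mathcal{S})$ (which carries $\mathbb{E}[1/g]$), producing the prefactor $\frac{1/\mathbb{E}[g]}{\mathbb{E}[1/g]}<1$ by Jensen. That yields something strictly \emph{weaker} than the target bound, not strictly stronger. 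The paper closes this by a different comparison: it writes $\xi_{N-N_P}=\epsilon_1(N-N_P)^{-(m-1)}$ and $\zeta_N(\mathcal{S})=\epsilon_2\bigl[N_P+(N-N_P)L^{\frac{m}{m-1}}\bigr]^{-(m-1)}$, notes that both normalized coefficients lie in $[1/\mathbb{E}[g],\,\mathbb{E}[1/g]]$ and decrease toward $1/\mathbb{E}[g]$ as the effective fetching duration grows, and concludes $\epsilon_1>\epsilon_2$ because live prefetching has the longer effective duration; the strict inequality is then $G_P\geq\frac{\epsilon_1}{\epsilon_2}[\cdots]^{m-1}>[\cdots]^{m-1}$. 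Your intuition that ``the optimal allocation exploits channel variation'' is pointing at the same mechanism, but neither the strict suboptimality of proportional prefetching nor the endpoint bracketing in Corollary~\ref{Corollary:DemandFetchingEnergyBound:FF} delivers the needed ordering $\epsilon_1>\epsilon_2$; that requires the monotonicity-in-duration comparison of the two recursions, which your proposal never establishes. Relatedly, your claim that the proportional policy makes ``the slow-fading bound transfer verbatim'' is not right: even under proportional prefetching the energies are governed by the stochastic coefficients $\xi$ and $\zeta$, so the ratio does not collapse to the deterministic slow-fading expression.
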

\begin{IEEEproof}
See Appendix \ref{App:PrefetchingGain:FF}. 
\end{IEEEproof}

\begin{remark} \emph{ (Effects of Fast Fading on the  Prefetching Gain)
The prefetching gain $G_P$ for fast fading is always larger than the slow-fading counterpart.   
This suggests that fasting fading enables opportunistic prefetching that exploits channel temporal diversity for enhancing the gain. In case of slow fading, on the other hand, it is hard to achieve the diversity gain because the equal bit allocation is proved to be optimal in terms of energy efficiency.} 
\end{remark}

\section{Simulation Results}\label{Section:Simulation}

Simulation results are presented for  evaluating the performance of MCO with live prefetching. In the simulation, 
the channel gain $g_n$  follows the Gamma  distribution with the integer shape parameter $k > 1$ and the probability density function $f_g(x)=\frac{x^{k-1}e^{-k x}}{(1/k)^k \Gamma(k)}$ where the Gamma function $\Gamma(k)=\int_{0}^{\infty}x^{k-1}e^{-x} dx$ and  the  mean $\mathbb{E}\l[{g_n}\r]=1$. 
The parameter is set as $k = 2$ in simulation unless stated otherwise.  
The monomial order of  the energy consumption model in \eqref{EnergyModel} is set as $m=2$. The transition probabilities $\{p(\ell)\}$ and the data size $\{\gamma(\ell)\}$ of CTs are generated based the uniform distribution. Furthermore, for the case of fast fading, the optimal policy without any closed form  is computed using non-causal CSI and used as a benchmark for evaluating the performance of sub-optimal policies.

The curves of the expected energy consumption (in dB)  are plotted against  different MCO parameters in separate subfigures of  Fig.~\ref{Graph1}. The parameters include 
the task-data size $\Gamma$,  number of CTs $L$, the latency requirement $N$ in slot, and prefetching duration $N_P$ in slot. 
  Since the vertical axes of the figures are on the   logarithmic scale, the vertical gaps between the curves measure the expected prefetching gain $\mathbb{E}[G_P]$ over the random task-transmission probabilities. First, one can observe that the prefetching gain is always larger than one in all sub-figures. Second, with both axes on  the logarithmic scale,    the straight lines in Fig.~\ref{Graph1}(a) imply  that the expected energy consumption increases as a monomial of the TD size, which agrees with energy consumption model in \eqref{EnergyModel}. Moreover, the prefetching gain is approximately $2.3$ dB. Third, the expected prefetching gain is observed from Fig.~\ref{Graph1}(b) to be more than $5$ dB when the number of CTs $L$ is small but the gain diminishes as $L$ increases. 
The reason is that a larger set of CTs with uniform likelihoods makes it more difficult to accurately predict the next task as well as increase the amount of redundant prefetched data. Fourth, Fig.~\ref{Graph1}(c) shows that the prefetching gain diminishes as the latency requirement is relaxed (corresponding to increasing $N$). In other words, prefetching is beneficial in scenarios with relatively stringent latency requirements. Last, the gain grows with the prefetching duration which is aligned with intuition.

Next, a similar set of curves as those in  Fig.~\ref{Graph1} are plotted in Fig.~\ref{Graph2} with identical parametric values but for the case of fast fading.  The curves corresponding to the sub-optimal prefetching polices  designed in Section~\ref{Section:FastFadingPart} are also plotted in Fig.~\ref{Graph2}.  Similar observations as those for Fig.~\ref{Graph1} can be also made for Fig.~\ref{Graph2}. The new observation is that both the proposed sub-optimal polices achieve close-to-optimal performance for the considered ranges of TD size, number of CTs and latency requirement [see Fig.~\ref{Graph1}(a)-(c)]. However,  the conservative prefetching is preferred to the aggressive one for the case where the prefetching duration is large as one can see in Fig.~\ref{Graph1}(d). 

Last, the effects of fading on the expected prefetching gain are shown in Fig.~\ref{Fig:PrefetchingGain} where the shape parameter $k$ of the fading distribution controls the level of channel randomness. It can be observed that the gain reduces as $k$ increases, corresponding to a decreasing level of channel randomness. For large $k$, the gain saturates at the constant corresponding to slow fading. 

\begin{figure}[t]
\centering
\includegraphics[width=3.6in]{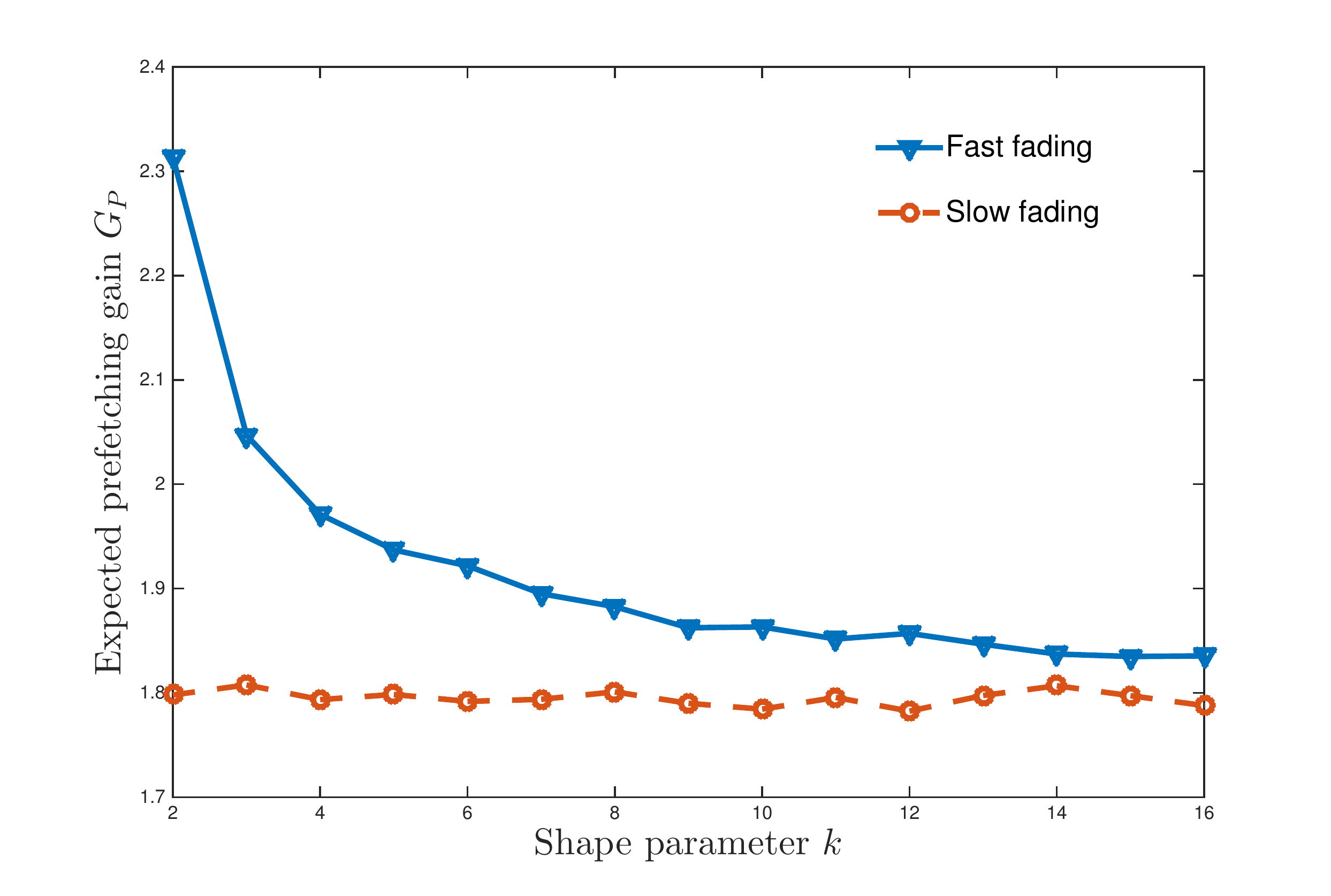}
\caption{The effects of channel randomness on the expected prefetching gain $\mathbb{E}[G_P]$ where randomness reduces with the increasing  shape parameter $k$ of the fading distribution. The parameters are set as 
$\Gamma=20$, $N=5$, $N_P=4$ and $L=4$.} \vspace{-20pt} \label{Fig:PrefetchingGain}
\end{figure}

\section{Concluding Remarks}\label{Section:Conclusion}

A novel architecture of live prefetching for mobile computation offloading has been proposed in this paper 
to enable  prefetching based on  task-level computation   prediction and its simultaneous operation with cloud computing. Given stochastic sequential tasks, the optimal and sub-optimal prefetching polices have been designed to minimize mobile energy consumption. Specifically, the designed policies are capable of  selecting  tasks for prefetching and controlling prefetched data sizes for both the cases of slow and fast fading. The simple threshold based structure of the policies is derived, which enables low-complexity realtime operation. Comprehensive simulation shows that  live prefetching achieves significant mobile energy reduction compared with conventional schemes without prefetching. 

It is worth noting that live prefetching can be applied in 5G network architecture 
by facilitating real-time signal processing and radio resource management, 
requiring delay sensitive information from mobiles as an input data, e.g. wireless channel and mobility states.    
Applying live prefetching enables mobiles to prefetch these data within given energy and time budgets
by leveraging the sophisticated prediction of the subsequent status.
In so doing, the network is able to perform collaborative decisions 
such as interference coordination and mobility management.  

This work can be extended in several interesting research directions. First, live prefetching can be integrated with advanced computing techniques such as the parallel computing and  pipelining to achieve higher mobile energy efficiencies but the designs are more complex. Second, live prefetching can be also jointly designed with more sophisticated wireless transmission techniques such as multi-antenna transmission and multi-cell cooperation. 
 Last, it is interesting to characterize the performance gain of large-scale cloud-computing networks due to live prefetching.

\appendix 
\subsection{Proof of Proposition \ref{Proposition:OptimalPrefetchingPolicy:SF}}
\label{App:OptimalPrefetchingPolicy:SF}
Define the Lagragian function for Problem \ref{TwoStageOptimization:SlowFading} as
\begin{align}
L&=\frac{\l(\sum_{\ell=1}^{L} \alpha(\ell)\r)^m}{(N_P)^{m-1}}+\frac{\sum_{\ell=1}^L p(\ell)(\gamma(\ell)-\alpha(\ell))^m}{(N-N_P)^{m-1}}\nonumber\\
&+\sum_{\ell=1}^L \mu(\ell) (\alpha(\ell)-\gamma(\ell)), \nonumber
\end{align}
where $\mu(\ell)\geq0$ are Lagragian multipliers. Since Problem \ref{TwoStageOptimization:SlowFading} is a convex optimization problem, the following KKT conditions are necessary and sufficient for optimality:
\begin{alignat}{2} 
& 
m \frac{\l(\alpha^*_{\Sigma}\r)^{m-1}}{(N_P)^{m-1}} - 
m p(\ell)\frac{\l(\gamma(\ell)-\alpha(\ell)\r)^{m-1}}{(N-N_P)^{m-1}}+\mu(\ell)\geq 0,  
\label{Eq:Problem3KKT1} 
\\
& \alpha(\ell)\l(m \frac{\l(\alpha^*_{\Sigma}\r)^{m-1}}{(N_P)^{m-1}} - 
m p(\ell)\frac{\l(\gamma(\ell)-\alpha(\ell)\r)^{m-1}}{(N-N_P)^{m-1}}+\mu(\ell)\r)=0,  
\label{Eq:Problem3KKT2} \\
& \mu(\ell)(\alpha(\ell)-\gamma(\ell))=0,  
 \label{Eq:Problem3KKT3} 
\end{alignat}
If $\mu(\ell)$ is postive,  $\alpha^*(\ell)$ is equal to $\gamma(i)$ due to the slackness condition of \eqref{Eq:Problem3KKT3}, yielding \eqref{Eq:Problem3KKT2}  strictly positive. It violates condition \eqref{Eq:Problem3KKT2} and 
the optimal multiplier $\mu(\ell)$ is thus zero for all $\ell$.

If $\alpha^*(\ell)$ is zero, \eqref{Eq:Problem3KKT1} leads to 
$\gamma(\ell)< p(\ell)^{-\frac{1}{m-1}}\frac{N-N_P}{N_P} \alpha^*_{\Sigma}$.
Otherwise, 
$\alpha^*(\ell)$ is equal to $\gamma(\ell)-p(\ell)^{-\frac{1}{m-1}}\frac{N-N_P}{N_P} \alpha^*_{\Sigma}$.
Combining these two completes the proof.
 
\subsection{Proof of Corollary \ref{Corollary:UniqueOptimalPrefetchingVector:SF}}\label{App:UniqueOptimalPrefetchingVector:SF}
First, it is straightforward to show $\boldsymbol\alpha^*=\boldsymbol\gamma$ when $N=N_P$. 
Let us consider $N>N_P$. 
Two equations of \eqref{Eq:Optimalalpha:SF} and
\eqref{Eq:TotalPrefetchedBits:SF} construct a nonlinear system in the two unknown $\boldsymbol\alpha^*$ and $\alpha^*_{\Sigma}$.
Eq.~\eqref{Eq:Optimalalpha:SF} can be inverted~to  
\begin{align}
\alpha^*_{\Sigma}(\boldsymbol\alpha^*)=(\gamma(\ell)-\alpha^*(\ell))p(\ell)^{\frac{1}{m-1}}\frac{N_P}{N-N_P},\quad \ell \in \mathcal{S}.
\end{align}
It is easily proved that $\alpha^*_{\Sigma}(\boldsymbol\alpha^*)$ 
is a continuous and monotone decreasing function of $\boldsymbol\alpha^*\in (0, \boldsymbol\gamma)$ from 
$\alpha^*_{\Sigma}(0)=\max_{\ell}\l\{\gamma(\ell)p(\ell)^{\frac{1}{m-1}}\r\}$$\frac{N_P}{N-N_P}$ to $\alpha^*_{\Sigma}(\boldsymbol\gamma)=0$. In addition, $\alpha^*_{\Sigma}(\boldsymbol\alpha)=\sum_{\ell=1}^L \alpha^*(\ell)$ 
is a continuous and monotone increasing function of 
$\boldsymbol\alpha^*\in (0, \boldsymbol\gamma)$ from $\alpha^*_{\Sigma}(0)=0$ to $\alpha^*_{\Sigma}(\boldsymbol\gamma)=\sum_{\ell=1}^L\gamma(\ell)$. Thus, a unique optimal prefetching vector $\boldsymbol\alpha^*$ exists in the range of
$0\leq \boldsymbol\alpha^*\leq \boldsymbol\gamma$.

\subsection{Proof of Proposition \ref{Theorem:PrefetchingGain:SF}}\label{App:PrefetchingGain:SF}
The energy consumption of MCO without prefetching depends on fetching the TD of selected CT~$\ell$ during $(N-N_P)$  slots,  
\begin{align}
\sum_{\ell = 1}^L {E}^*_{D}(\gamma(\ell))p(\ell)&=\lambda\frac{\sum_{\ell=1}^L p(\ell) \gamma(\ell)^m}{(N-N_P)^m}\nonumber\\
&=\lambda \frac{\sum_{\ell=1}^L  \gamma(\ell)^m}{L{(N-N_P)^m}}\geq\frac{\lambda \l(\frac{\Gamma}{L}\r)^m }{(N-N_P)^m}
\end{align}
where the equality holds when $\gamma(\ell)=\frac{\Gamma}{L}$ for all $\ell$. 
On the other hands, the energy consumption of MCO with live prefetching~is 
{\small
\begin{align}
&E_P(\boldsymbol\alpha^*)+\sum_{\ell\in \mathcal{S}} p(\ell) E^*_D(\beta(\ell))\nonumber\\
=& \frac{ \lambda \alpha^*_{\Sigma}}{(N_P)^m}+\frac{\lambda\l(\sum_{\ell \in \mathcal{S}}p(\ell)^{-\frac{1}{m-1}}\r) 
(\frac{N-N_P}{N_P}\alpha^*_{\Sigma})^m}{(N-N_P)^{m-1}}+\frac{\lambda\sum_{\ell \notin \mathcal{S}}\gamma(\ell)^m}{(N-N_P)^{m-1}},\nonumber
\end{align}
}where $\alpha^*_{\Sigma}=\frac{\sum_{\ell \in \mathcal{S}} \gamma(\ell)}{1+\frac{N-N_P}{N_P}\l(\sum_{\ell \in \mathcal{S}}p(\ell)^{-\frac{1}{m-1}}\r)}$. Noting that $\frac{N-N_P}{N_P}\alpha^*_{\Sigma} \geq \gamma(\ell)$ for $\ell \notin \mathcal{S}$,  the above  is upper bounded~as
\begin{align}
&E_P(\boldsymbol\alpha^*)+\sum_{\ell\in \mathcal{S}} p(\ell) E^*_D(\beta(\ell))\nonumber\\
\leq & \lambda \frac{\alpha^*_{\Sigma}}{(N_P)^m}+\lambda\frac{\l(\sum_{\ell=1}^L p(\ell)^{-\frac{1}{m-1}}\r) 
(\frac{N-N_P}{N_P}\alpha^*_{\Sigma})^m}{(N-N_P)^{m-1}},
\end{align}
where the equality is satisfied when the prefetching task set $\mathcal{S}$ contains all CTs. 
In addition, it is maximized at the maximum value of $\alpha^*_{\Sigma}$ that can be achieved when $p(\ell)=\frac{1}{L}$ and $\gamma(\ell)=\frac{\Gamma}{L}$ for every~$\ell$. 
The denominator and the numerator of $G_P$ in \eqref{Eq:PrefetchingGainDefinition}
are respectively minimized and maximized at the above setting, completing the proof.
\subsection{Proof of Proposition~\ref{Proposition:OptimalDemandFetcher:FF}}\label{App:OptimalDemandFetcher:FF}
The expected energy consumption at slot $N$, 
$\bar{J}_{N}(l_{N}(\ell), g_{N})$ is $\lambda \mathbb{E}\l[\frac{1}{g_{N}}\r]{\rho_{N}(\ell)}^{m}=\lambda \xi_{1}{\rho_{N}(\ell)}^{m}$. 
According to Problem \ref{OptimalBit}, the optimal number of transmitted bits at slot $N-1$, $b_{N-1}^*$ is 
\begin{align}\label{Proof:Proposition1_1}
b_{N-1}^*&=\arg \min_{b_{N-1}} \l(\lambda \frac{\l(b_{N-1}\r)^{m}}{g_{N-1}}
+\lambda \xi_{1}({\rho_{N-1}(\ell)-b_{N-1}})^{m}\r)\nonumber\\
&=\frac{\rho_{N-1}(\ell) (g_{N-1})^{\frac{1}{m-1}}}{(g_{N-1})^{\frac{1}{m-1}}+ \l(\frac{1}{\xi_{1}}\r)^{\frac{1}{m-1}}},
\end{align}
and the cost-to-go function $\bar{J}_{N-1}\l(\rho_{N-1}(\ell),g_{N-1}\r)=\mathbb{E}_{g_{N-1}}\l[J_{N-1}(\rho_{N-1}(\ell),g_{N-1})\r]$ is
\begin{align}\label{Proof:Proposition1_2}
\bar{J}_{N-1}(l_{N-1}(\ell))
=&\lambda \rho_{N-1}(\ell)^{m}\mathbb{E}\l[\tfrac{1}{\l\{{(g_{N-1})^{\frac{1}{m-1}}+\l(\frac{1}{\xi_{1}}\r)^{\frac{1}{m-1}}}\r\}^{m-1}}\r]\nonumber\\
=&\lambda \rho_{N-1}(\ell)^{m}\xi_{2}.
\end{align}
Calculating the above  repeatedly, we can derive 
the optimal transmitted bits at slot $n$,  $b_n^{*}$,
 \begin{align}\label{Proof:Proposition1_3}
\!\!\!\!b_{n}^{^*}(l_n(\ell), g_n)&=\arg \!\!\min_{0 \leq b_{n} \leq \rho_{n}(\ell)} \!\! \l[\lambda \frac{\l(b_{n}\r)^{m}}{g_{n}}+\lambda ({\rho_{n}(\ell)-b_{n}})^{m}\xi_{n+1}\r]\nonumber\\
&=\frac{\rho_{n}(\ell) g_{n}^{\frac{1}{m-1}}}{(g_{n})^{\frac{1}{m-1}}+ \l(\frac{1}{\xi_{N-n}}\r)^{\frac{1}{m-1}}},
\end{align}
and $\bar{J}_{n}(\rho_{n}(\ell))=\lambda \l(\rho_{n}(\ell)\r)^{m}\xi_{N-n+1}$.
 After  inserting $\beta(\ell)$ and $N_P+1$ into $\rho_{n}(\ell)$ and $n$ respectively, the proof is completed.

\subsection{Proof of Corollary~\ref{Corollary:DemandFetchingEnergyBound:FF}}\label{App:DemandFetchingEnergyBound:FF}
First, we show the lower bound of $\xi_n$. Noting that $X^{-(m-1)}$ is a convex function of $X$, the following inequality is satisfied:
\begin{align}
\!\!\xi_{n}&=\mathbb{E}\l[\l(g^{\frac{1}{m-1}}+{\l(\frac{1}{\xi_{n-1}}\r)}^{\frac{1}{m-1}}
\r)^{-(m-1)}\r] \nonumber\\
&\overset{(a)}{\geq}\l(\mathbb{E}\l[g^{\frac{1}{m-1}}+{\l(\frac{1}{\xi_{n-1}}\r)}^{\frac{1}{m-1}}\r]\r)^{-(m-1)},\label{Proof:convergence1_1}
\end{align}
where $(a)$ follows from Jensen's inequality. Rearranging \eqref{Proof:convergence1_1} gives
\begin{align}
\!\!\l(\frac{1}{\xi_{n}}\r)^{\frac{1}{m-1}}&
\!\!\leq\mathbb{E}\!\!\l[{g}^{\frac{1}{m-1}}+{\l(\frac{1}{\xi_{n-1}}\r)}^{\frac{1}{m-1}}\r]\nonumber\\
&\!\!=\!\!\mathbb{E}\!\!\l[{g}^{\frac{1}{m-1}}\r]\!\!+\!\!{\l(\frac{1}{\xi_{n-1}}\r)}^{\frac{1}{m-1}}\!\!\leq 
n \mathbb{E}\l[{g}^{\frac{1}{m-1}}\r]\leq n{\mathbb{E}[g]}^{\frac{1}{m-1}},\nonumber
\end{align}
and $\xi_n$ of \eqref{Proposition:xi} is thus lower bounded as 
${\xi_n}\geq n^{-(m-1)}\frac{1}{\mathbb{E}[g]}$
where the equality hold when $n\rightarrow \infty$.

Second, we show the upper bound of $\xi_n$ by using $\bar{\xi}_n$ given~as
{\small
\begin{align}\label{upper_xi}
\bar{\xi}_{n}=
\l\{
\begin{aligned}
& \mathbb{E}\l[\frac{1}{g}\r],
&& \textrm{$n=1$,}\\
&\!\!\l[\l(\frac{1}{\mathbb{E}\l[\frac{1}{g}\r]}\r)^{\frac{1}{m-1}}\!\!\!\!\!\!\!\!+\l(\frac{1}{\xi_{n-1}}\r)^{\frac{1}{m-1}}\r]^{{-(m-1)}},
&& \textrm{$n>1$.}
\end{aligned}
\r.
\end{align}}
It is straightforward to show that 
$\xi_n$ of \eqref{Proposition:xi} is lower bounded by $\bar{\xi}_n$ of \eqref{upper_xi}, 
and then expressed by the following inequality:
\begin{align}
\!\!\l(\frac{1}{\xi_{n}}\r)^{\frac{1}{m-1}}
\!\!\!\!\geq 
\l(\frac{1}{\bar{\xi}_{n}}\r)^{\frac{1}{m-1}}
\!\!\!\!\!\!\!\!=
\l(\frac{1}{\mathbb{E}\l[\frac{1}{g}\r]}\r)^{\frac{1}{m-1}}\!\!\!\!\!\!\!\! +{\l(\frac{1}{\xi_{n-1}}\r)}^{\frac{1}{m-1}}
\!\!\!\!\!\!\!\!\geq 
n \l(\frac{1}{\mathbb{E}\l[\frac{1}{g}\r]}\r)^{\frac{1}{m-1}}, \nonumber
\end{align}
and $\xi_n$ is thus lower bounded as
${\xi_n}\leq n^{-(m-1)}\mathbb{E}\l[\frac{1}{g}\r]$
where the equality hold when $n=1$.
We have the result after substituting  the upper and lower bounds into~\eqref{Proposition:MinimumEnergyForDF}. 
\subsection{Proof of Proposition~\ref{Proposition:OptimalPrefetcher:FF}}\label{App:OptimalPrefetcher:FF}
 
The optimal decision vector $\boldsymbol{s}^*_{N_P}$ and $V_{N_P}(\boldsymbol{\rho}_{N_P}, g_{N_P})$ at slot $N_P$ are respectively
{\small
\begin{align}
\boldsymbol{s}_{N_P}^{*}\!\!&=\!\!\l[\boldsymbol{\rho}_{N_P}-\frac{\sum_{\ell \in \mathcal{S}}\rho_{N_P}(\ell)\l(\frac{1}{\xi_{N-N_P}}\r)^{\frac{1}{m-1}}\boldsymbol{p}^{-\frac{1}{m-1}}}{(g_{N_P})^{\frac{1}{m-1}}\!\!+\!\!\l(\frac{1}{\xi_{N-N_P}}\r)^{\frac{1}{m-1}}\sum_{\ell \in \mathcal{S}}p(\ell)^{-\frac{1}{m-1}}}\r]^+\nonumber,
\end{align}}and
{\small
\begin{align}
&V_{N_P}(\boldsymbol{\rho}_{N_P}, g_{N_P})\nonumber\\=
&\frac{\lambda \l[\sum_{\ell\in \mathcal{S}} \rho_{N_P}(\ell) \r]^{m}}
{\l[(g_{N_P})^{\frac{1}{m-1}}\!\!+\!\!\l(\frac{1}{\xi_{N-N_P}}\r)^{\frac{1}{m-1}}\!\!\l(\sum_{\ell \in \mathcal{S}} p(\ell)^{-\frac{1}{m-1}}\r)\!\r]^{m-1} }\nonumber\\
&+\lambda \xi_{N-N_P}\!\! \l(\sum_{\ell \notin \mathcal{S}}p(\ell) {\rho_{N_P}(\ell)^{m}}\r),\nonumber
\end{align}}where the derivation is in a similar way 
in the proof of Proposition~\ref{Proposition:OptimalPrefetchingPolicy:SF}.
 
Next, consider slot $N_P-1$. The cost-to-go function $\bar{V}_{N_P}(\boldsymbol{\rho}_{N_P})=\mathbb{E}\l[V_{N_P}(\boldsymbol{\rho}_{N_P}, g_{N_P})\r]$ is
\begin{align}\label{Proof:Proposition2Proof7}
\bar{V}_{N_P}(\boldsymbol{\rho}_{N_P})=&{\lambda \l(\sum_{\ell\in \mathcal{S}} \rho_{N_P}(\ell) \r)^{m}}
{\zeta_{N-N_P+1}(\mathcal{S})}\nonumber\\
&+\lambda \xi_{N-N_P} \l(\sum_{i \notin \mathcal{S}}p(\ell) {\rho_{N_P}(\ell)^{m}}\r).
\end{align}
Substituting \eqref{Proof:Proposition2Proof7} into Problem~\ref{OptimizationPrefetcher} of slot $N_P-1$ gives
\begin{align}
\underset{\boldsymbol{s}_{N_P-1}}{\min} \!\!
&\l[ \frac{\l(\sum_{\ell \in \mathcal{S}}s_{N_P-1}(\ell)\r)^{m}}{g_{N_P-1}}
+{ \l(\sum_{\ell\in \mathcal{S}} \rho_{N_P-1}(\ell)- s_{N_P-1}(\ell) \r)^{m}}
\r.\nonumber\\
&\l.\cdot {\zeta_{N-N_P+1}(\mathcal{S})}+ \xi_{N-N_P} \l(\sum_{\ell \notin \mathcal{S}}p(\ell) {\rho_{N_P-1}(\ell)^{m}}\r)\r].\nonumber
\end{align}
Given $\mathcal{S}$, the last term is independent of the optimal decision vector $\boldsymbol{s}_{N_P-1}^*$, 
and  it is then equivalent to find the optimal transmitted bits $b_{N_P-1}^*=\sum_{\ell\in \mathcal{S}} s_{N_P-1}(\ell)$ as
\begin{align}\label{Proof:Proposition2Proof9}
b_{N_P-1}^*&=
\frac{(g_{N_P-1})^{\frac{1}{m-1}}\sum_{\ell \in \mathcal{S}}\rho_{N_P-1}(\ell)}{(g_{N_P-1})^{\frac{1}{m-1}}+\l(\frac{1}{\zeta_{N-N_P+1}(\mathcal{S})}\r)^{\frac{1}{m-1}}    }.
\end{align}
Calculating the above procedure repeatedly, we can make the following optimization problem:   
\begin{align}\label{Proof:Proposition2Proof12}
b_n^*&=\underset{b_{n}}{\arg\min}  
\l[\lambda \frac{(b_{n})^{m}}{g_{n}}
+{\lambda \l(\sum_{\ell\in\mathcal{S}}\rho_{n}(\ell)- b_{n} \r)^{m}}
{\zeta_{N-n}(\mathcal{S})}\r]\nonumber\\
&=\frac{(g_{n})^{\frac{1}{m-1}}\sum_{\ell \in \mathcal{S}}\rho_{n}(\ell)}{(g_{n})^{\frac{1}{m-1}}+\l(\frac{1}{\zeta_{N-n}(\mathcal{S})}\r)^{\frac{1}{m-1}}    }.
\end{align}
The optimal bits $b_n^*$ can be decomposed into multiple $s_n^*(\ell)$ for $\ell\in\mathcal{S}$ 
by balancing the remaining bits~$\rho_{n+1}(\ell)$ as \eqref{Eq:Optimalalpha:SF}.
\subsection{Proof of Corollary~\ref{Corollary:PropertyThreshold}}\label{App:PropertyThreshold}
From \eqref{Proof:Proposition2Proof12}, the sum of the remaining bits for CTs in $\mathcal{S}$ at slot $n+1$, $\sum_{\ell \in \mathcal{S}}\rho_{n+1}(\ell)$, 
can be expressed in terms of $\sum_{\ell \in \mathcal{S}}\rho_{n}(\ell)$,
{\small
\begin{align}\label{Proof:Corollary3_1}
\sum_{\ell \in \mathcal{S}}\rho_{n+1}(\ell)=\sum_{\ell \in \mathcal{S}}\rho_{n}(\ell)-b_{n}^*
=\frac{\sum_{\ell\in \mathcal{S}}\rho_{n}(\ell)\l(\frac{1}{\zeta_{N-n}(\mathcal{S})}\r)^{\frac{1}{m-1}}    }{(g_{n})^{\frac{1}{m-1}}
+\l(\frac{1}{\zeta_{N-n}(\mathcal{S})}\r)^{\frac{1}{m-1}}    }.
\end{align}}
Substituting  \eqref{Proof:Corollary3_1} into \eqref{Proposition:OptmalPrefetcher2} gives  
$\eta_{n+1}=\eta_{n}\Pi_{n}$,
where
{\small
\begin{align}\label{Proof:Corollary3_3}
&\Pi_{n}=\nonumber\\
&\l\{
\begin{aligned}
&\frac{\l(\frac{1}{\zeta_{N-n-1}(\mathcal{S})}\r)^{\frac{1}{m-1}}    }
{(g_{n+1})^{\frac{1}{m-1}}+\l(\frac{1}{\zeta_{N-n-1}(\mathcal{S})}\r)^{\frac{1}{m-1}}   },
&& \!\!\!\!\textrm{$n<N_P-1$,}\\
&\frac{\l(\frac{1}{\xi_{N-N_P}}\r)^{\frac{1}{m-1}} \sum_{\ell \in \mathcal{S}}p(\ell)^{-\frac{1}{m-1}}}{(g_{N_P})^{\frac{1}{m-1}}\!\!+\!\!\l(\frac{1}{\xi_{N-N_P}}\r)^{\frac{1}{m-1}}\sum_{\ell\in \mathcal{S}}p(\ell)^{-\frac{1}{m-1}}},
&& \!\!\!\!\textrm{$n=N_P-1.$}\\
\end{aligned}
\r.
\end{align}}
Noting that $g_{n}>0$, the coefficients $\{\Pi_{n}\}$ of \eqref{Proof:Corollary3_3} are always less than one and
the condition $\eta_{n+1}>\eta_n$ is satisfied, completing the proof of the first property. 
Due to the first property, 
the first threshold $\eta_1$ is the largest. In addition, it is obvious that $\eta_1$ is maximized when $g_1\rightarrow \infty$,
\begin{align}
\lim_{g_1\rightarrow 0 }\eta_1=&\frac{\sum_{k\in \mathcal{S}}\gamma(\ell)}{\sum_{\ell \in \mathcal{S}}p(\ell)^{-\frac{1}{m-1}}}\nonumber\\
&\overset{(a)}{\leq} \frac{1}{|\mathcal{S}|}\sum_{\ell \in\mathcal{S}}\gamma(\ell) p(\ell)^{\frac{1}{m-1}}\leq \max_{\ell=1,\cdots, L}\l[\gamma(\ell) p(\ell)^{\frac{1}{m-1}}\r],\nonumber
\end{align}
where (a) follows from Jensen's inequality. We complete the proof of the second property.

\subsection{Proof of Lemma \ref{Lemma:Threshold}}\label{App:Threshold}
After transmitting $b_n^*$ of \eqref{Proof:Proposition1_3}, the sum of elements of the status vector $\sum_{\ell \in \mathcal{S}} \rho_{n+1}(\ell)$ becomes 
{\small
\begin{align}
\sum_{\ell \in \mathcal{S}} \rho_{n+1}(\ell)=\sum_{\ell \in \mathcal{S}} \rho_{n}(\ell)-b_n^*
=\frac{\sum_{\ell \in \mathcal{S}}\rho_n(\ell)\l(\frac{1}{\zeta_{N-n}(\mathcal{S})}\r)^{\frac{1}{m-1}}}
{(g_n)^{\frac{1}{m-1}}+\l(\frac{1}{\zeta_{N-n}(\mathcal{S})}\r)^{\frac{1}{m-1}}}\nonumber
\end{align}}and $\sum_{\ell \in \mathcal{S}}\rho_{N_P}(\ell)$ is expressed as a cascade form, 
{\small
\begin{align}
\!\!\sum_{\ell \in \mathcal{S}}\rho_{N_P}(\ell)\!\!&=
\l(\sum_{\ell \in \mathcal{S}}\rho_n(\ell)\r) \cdot
\prod_{k=n}^{N_P-1}
\frac{ \l(\frac{1}{\zeta_{N-k}(\mathcal{S})}\r)^{\frac{1}{m-1}}}
{(g_k)^{\frac{1}{m-1}}+\l(\frac{1}{\zeta_{N-k}(\mathcal{S})}\r)^{\frac{1}{m-1}}}.\nonumber
\end{align}}Inserting the above into $\eta_{N_P}$ in \eqref{Proposition:OptmalPrefetcher2}, we have the result.

\subsection{Proof of Proposition~\ref{Theorem:PrefetchingGain:FF}}\label{App:PrefetchingGain:FF}

As shown in Appendix~\ref{App:PrefetchingGain:SF}, the prefetching gain $G_P$ is minimized when $\gamma(\ell)=\frac{\Gamma}{L}$ and $p(\ell)=\frac{1}{L}$ for all $\ell$ given~as
\begin{align}
{G_P}\geq \frac{\lambda \xi_{N-N_P} \l(\frac{\Gamma}{L}\r)^m}{\lambda \Gamma^m \zeta_N(\mathcal{S}) }=\frac{\xi_{N-N_P}}{\zeta_N(\mathcal{S})L^m},\nonumber
\end{align}
where the prefetching-task set includes all CTs. Using \eqref{Eq:DemandFetchingEnergyBound:FF} enables the following expression of~$\xi_{N-N_P}$,
\begin{align}
\xi_{N-N_P}=\epsilon_1 (N-N_P)^{-(m-1)},\nonumber
\end{align}
where the coefficient $\epsilon_1$ exists in the range of $\frac{1}{\mathbb{E}\l[{g}\r]} \leq \epsilon_1\leq \mathbb{E}\l[\frac{1}{g}\r]$ and its first and second equalities hold when $N-N_P\rightarrow \infty$ or $N-N_P=1$, respectively.     
Similarly,  the coefficient $\zeta_N(\mathcal{S})$ 
is expressed as follows: 
\begin{align}
\zeta_N(\mathcal{S})=\epsilon_2{ \l[N_P +(N-N_P)L^{\frac{m}{m-1}}\r]^{-(m-1)}},\nonumber
\end{align}
where the coefficient $\epsilon_2$ exists in the same range as $\epsilon_1$ and the first and second equality hold when $N\rightarrow \infty$ or $N=1$, respectively.
It is worth mentioning that each of the coefficient becomes closer to 
$\frac{1}{\mathbb{E}\l[{g}\r]}$ as the corresponding fetching duration increases. It is straightforward to show that $\epsilon_1>\epsilon_2$ due to the fact that the fetching period of the conventional fetching is $N-N_P$ slots whereas that of live prefetching is $N$ slots. As a result, the prefetching gain $G_P$ is
\begin{align}
G_P&\geq \frac{\epsilon_1}{\epsilon_2}\l[\frac{N-N_P (1-L^{-\frac{m}{m-1}})}{N-N_P }\r]^{m-1}\nonumber\\
&>\l[\frac{N-N_P (1-L^{-\frac{m}{m-1}})}{N-N_P }\r]^{m-1}. \nonumber
\end{align}

\bibliographystyle{ieeetran}
\bibliography{IEEEabrv,Wireless_Prefetching_related_works}

\begin{thebibliography}{10}
\providecommand{\url}[1]{#1}
\csname url@samestyle\endcsname
\providecommand{\newblock}{\relax}
\providecommand{\bibinfo}[2]{#2}
\providecommand{\BIBentrySTDinterwordspacing}{\spaceskip=0pt\relax}
\providecommand{\BIBentryALTinterwordstretchfactor}{4}
\providecommand{\BIBentryALTinterwordspacing}{\spaceskip=\fontdimen2\font plus
\BIBentryALTinterwordstretchfactor\fontdimen3\font minus
  \fontdimen4\font\relax}
\providecommand{\BIBforeignlanguage}[2]{{%
\expandafter\ifx\csname l@#1\endcsname\relax
\typeout{** WARNING: IEEEtran.bst: No hyphenation pattern has been}%
\typeout{** loaded for the language `#1'. Using the pattern for}%
\typeout{** the default language instead.}%
\else
\language=\csname l@#1\endcsname
\fi
#2}}
\providecommand{\BIBdecl}{\relax}
\BIBdecl

\bibitem{GSMA5G}
\BIBentryALTinterwordspacing
{GSMA Intelligence}, ``Understanding 5{G}: Perspective on future technological
  advancements in mobile,'' {Dec.} 2014. [Online]. Available:
  \url{https://gsmaintelligence.com/files/analysis/?file=141208-5g.pdf}
\BIBentrySTDinterwordspacing

\bibitem{Kumar2010}
K.~Kumar and Y.-H. Lu, ``Cloud computing for mobile users: Can offloading
  computation save energy?'' \emph{IEEE Computer}, vol.~43, no.~4, pp. 51--56,
  {Apr.} 2010.

\bibitem{Zhang2013TVT}
W.~Zhang, Y.~Wen, K.~Guan, D.~Kilper, H.~Luo, and D.~Wu, ``Energy-optimal
  mobile cloud computing under stochastic wireless channel,'' \emph{{IEEE}
  Trans. Veh. Technol.}, vol.~12, no.~9, pp. 4569--4581, Sep. 2013.

\bibitem{Kwak2015}
J.~Kwak, Y.~Kim, J.~Lee, and S.~Chong, ``Dream: Dynamic resource and task
  allocation for energy minimization in mobile cloud systems,'' \emph{IEEE J.
  Sel. Areas Commun.}, vol.~33, no.~12, pp. 2510--2523, Dec. 2015.

\bibitem{Zhang2015}
W.~Zhang, Y.~Wen, and D.~O. Wu, ``Collaborative task execution in mobile cloud
  computing under a stochastic wireless channel,'' \emph{{IEEE} Trans. Wireless
  Commun.}, vol.~14, no.~1, pp. 81--93, Jan. 2015.

\bibitem{Kao2014}
Y.-H. Kao and B.~Krishnamachari, ``Optimizing mobile computational offloading
  with delay constraints,'' in \emph{Proc. IEEE Global Commun. Conf.}, Austin,
  TX, Dec. 2014, pp. 2289--2294.

\bibitem{Yang2013}
L.~Yang, J.~Cao, Y.~Yuan, T.~Li, A.~Han, and A.~Chan, ``A framework for
  partitioning and execution of data stream applications in mobile cloud
  computing,'' \emph{ACM SIGMETRICS Perform. Eval. Rev.}, vol.~40, no.~4, pp.
  23--32, Apr. 2013.

\bibitem{You2016}
\BIBentryALTinterwordspacing
C.~You, K.~Huang, H.~Chae, and B.-H. Kim, ``Energy-efficient resource
  allocation for mobile-edge computation offloading,'' \emph{to appear in IEEE
  Trans. Wireless Commun.} [Online]. Available:
  \url{http://ieeexplore.ieee.org/document/7762913/}
\BIBentrySTDinterwordspacing

\bibitem{Sardellitti2015}
S.~Sardellitti, G.~Scutari, and S.~Barbarossa, ``Joint optimization of radio
  and computational resources for multicell mobile-edge computing,'' \emph{IEEE
  Trans. Signal Inf. Process. over Netw.}, vol.~1, no.~2, pp. 89--103, Jun.
  2015.

\bibitem{Chen2016}
X.~Chen, L.~Jiao, W.~Li, and X.~Fu, ``Efficient multi-user computation
  offloading for mobile-edge cloud computing,'' \emph{IEEE Trans. Netw.},
  vol.~24, no.~5, pp. 2795--2808, Oct. 2016.

\bibitem{Kaewpuang2013}
R.~Kaewpuang, D.~Niyato, P.~Wang, and E.~Hossain, ``A framework for cooperative
  resource management in mobile cloud computing,'' \emph{IEEE J. Sel. Areas
  Commun.}, vol.~31, no.~12, pp. 2685--2700, Dec. 2013.

\bibitem{Azimi2016}
S.~M. Azimi, O.~Simeone, O.~Sahin, and P.~Popovski, ``Ultra-reliable cloud
  mobile computing with service composition and superposition coding,'' in
  \emph{Proc. Annu. Conf. Inf. Sci. Syst.}, Princeton, NJ, Mar. 2016, pp.
  442--447.

\bibitem{Xiang2014}
X.~Xiang, C.~Lin, and X.~Chen, ``Energy-efficient link selection and
  transmission scheduling in mobile cloud computing,'' \emph{IEEE Wireless
  Commun. Lett.}, vol.~3, no.~2, pp. 153--156, Apr. 2014.

\bibitem{Zhao2015}
T.~Zhao, S.~Zhou, X.~Guo, Y.~Zhao, and Z.~Niu, ``A cooperative scheduling
  scheme of local cloud and internet cloud for delay-aware mobile cloud
  computing,'' in \emph{Proc. IEEE Global Commun. Conf.}, San Diego, CA, Dec.
  2015, pp. 1--6.

\bibitem{Wang2014}
S.~Wang, R.~Urgaonkar, T.~He, M.~Zafer, K.~Chan, and K.~K. Leung,
  ``Mobility-induced service migration in mobile micro-clouds,'' in \emph{Proc.
  IEEE Military Commun. Conf.}, Baltimore, MD, Oct. 2014, pp. 835--840.

\bibitem{You2016JSAC}
C.~You, K.~Huang, and H.~Chae, ``Energy efficient mobile cloud computing
  powered by wireless energy transfer,'' \emph{IEEE J. Sel. Areas Commun.},
  vol.~34, no.~5, pp. 1757--1771, May 2016.

\bibitem{Mao2016}
Y.~Mao, J.~Zhang, and K.~B. Letaief, ``Dynamic computation offloading for
  mobile-edge computing with energy harvesting devices,'' \emph{IEEE J. Sel.
  Areas. Commun.}, vol.~34, no.~12, pp. 3590 -- 3605, Dec. 2016.

\bibitem{Shu2013}
P.~Shu, F.~Liu, H.~Jin, M.~Chen, F.~Wen, and Y.~Qu, ``etime: Energy-efficient
  transmission between cloud and mobile devices,'' in \emph{Proc. IEEE Int.
  Conf. Comput. Commun.}, Turin, Italy, Apr. 2013, pp. 195--199.

\bibitem{Master2016}
N.~Master, A.~Dua, D.~Tsamis, J.~P. Singh, and N.~Bambos, ``Adaptive
  prefetching in wireless computing,'' \emph{IEEE Trans. Wireless Commun.},
  vol.~15, no.~5, pp. 3296--3310, May 2016.

\bibitem{Kim2016}
R.~Kim, H.~Lim, and B.~Krishnamachari, ``Prefetching-based data dissemination
  in vehicular cloud systems,'' \emph{IEEE Tran. Veh. Tecnol.}, vol.~65, no.~1,
  pp. 292--306, Jan 2016.

\bibitem{Hong2015}
J.~P. Hong and W.~Choi, ``User prefix caching for average playback delay
  reduction in wireless video streaming,'' \emph{IEEE Trans. Wireless Commun.},
  vol.~15, no.~1, pp. 377--388, Jan 2016.

\bibitem{Zafer2009}
M.~Zafer and E.~Modiano, ``Minimum energy transmission over a wireless channel
  with deadline and power constraints,'' \emph{{IEEE} Trans. Autom. Control},
  vol.~54, no.~12, pp. 2841--2852, Dec. 2009.

\bibitem{Chou1982}
T.~C. Chou and J.~Abraham, ``Load balancing in distributed systems,''
  \emph{IEEE Trans. Softw. Eng}, vol. SE-8, no.~4, pp. 401--412, {July} 1982.

\bibitem{Lee2009}
J.~Lee and N.~Jindal, ``Energy-efficient scheduling of delay constrained
  traffic over fading channels,'' \emph{{IEEE} Trans. Wireless Commun.},
  vol.~8, no.~4, pp. 1866--1875, Apr. 2009.

\end{thebibliography}

\begin{IEEEbiography}
[{\includegraphics[width=1in,clip,keepaspectratio]{./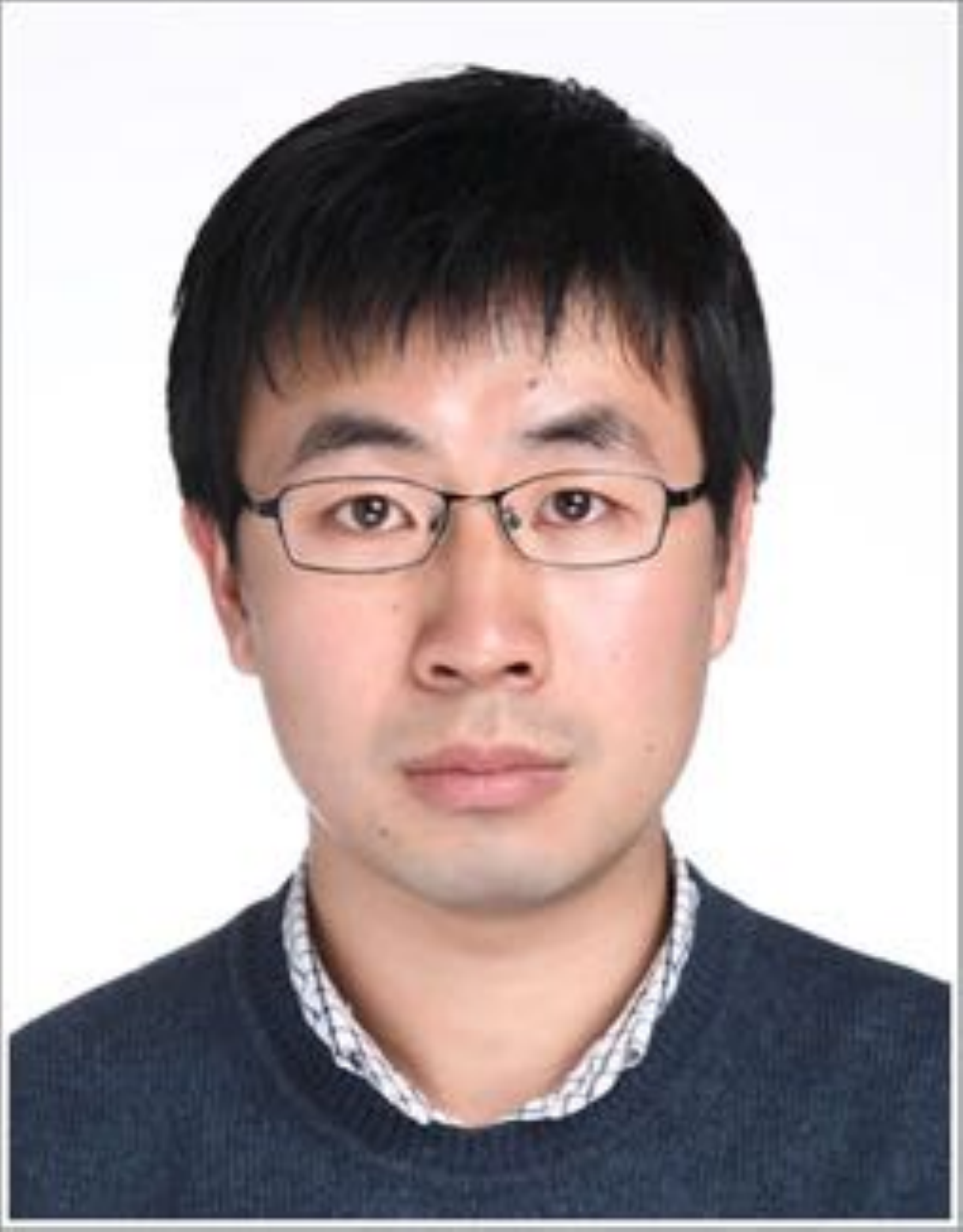}}]
{Seung-Woo Ko}
received the B.S, M.S, and Ph.D degrees at the School of Electrical $\&$ Electronic Engineering, Yonsei University, Korea, in 2006, 2007, and 2013, respectively.
Since Apr.~2016, he has been a postdoctoral researcher in the Dept. of Electrical and Electronics Engineering (EEE) at The University of Hong Kong.
He had been a senior researcher in LG Electronics, Korea from Mar.~2013 to Jun.~2014 and a postdoctoral researcher at Yonsei University, Korea from Jul.~2014 to Mar.~2016.  
His research interests focus on  radio resource management in 5G network architecture, with special emphasis on mobile computation offloading and edge cloud computing. 
\end{IEEEbiography}

\begin{IEEEbiography}
[{\includegraphics[width=1in,clip,keepaspectratio]{./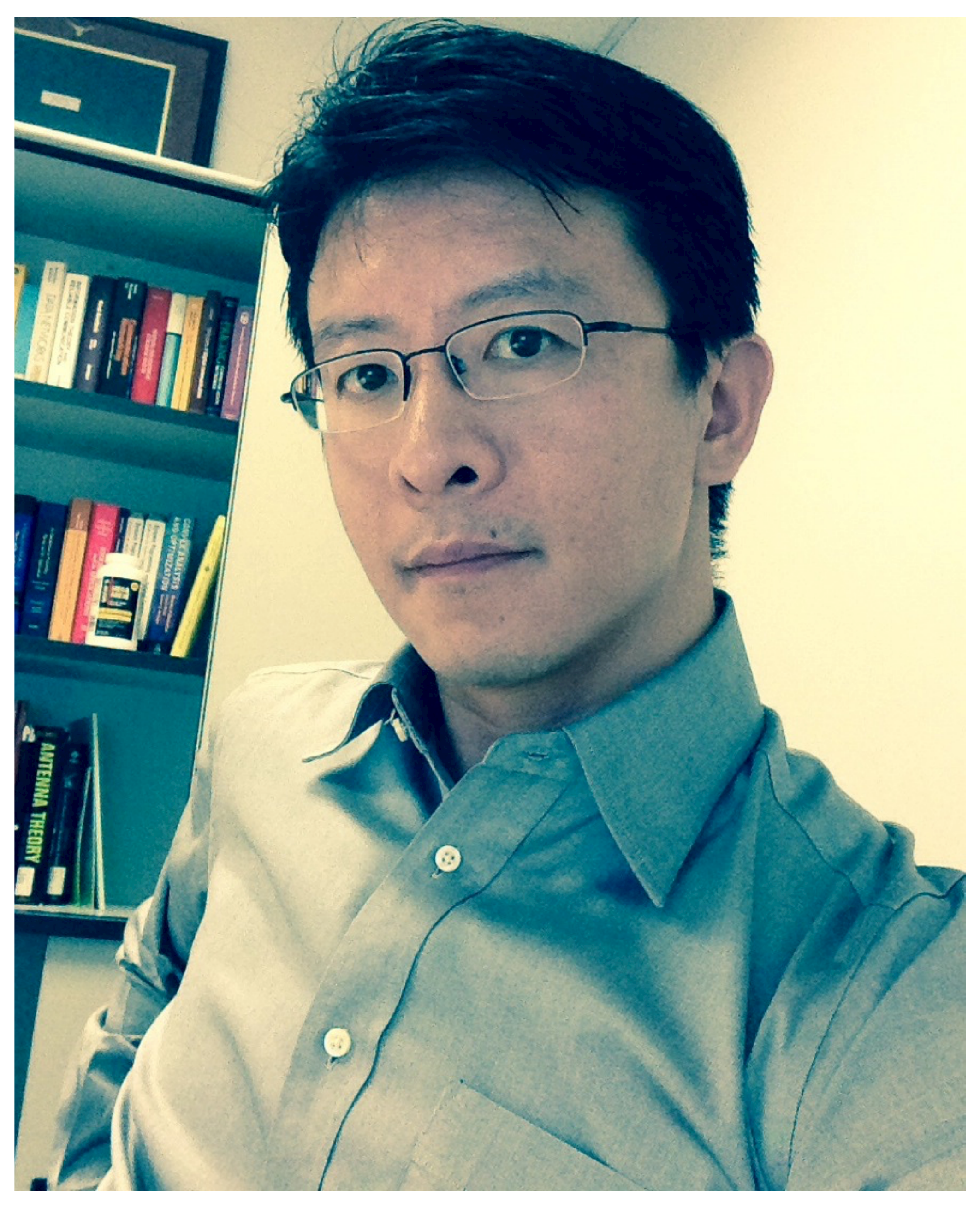}}]
{Kaibin Huang} received the B.Eng. (first-class hons.) and the M.Eng. from the National University of Singapore, respectively, and the Ph.D. degree from The University of Texas at Austin (UT Austin), all in electrical engineering.

Since Jan. 2014, he has been an assistant professor in the Dept. of Electrical and Electronic Engineering (EEE) at The University of Hong Kong. He is an adjunct professor in the School of EEE at Yonsei University in S. Korea. He used to be a faculty member in the Dept. of Applied Mathematics (AMA) at the Hong Kong Polytechnic University (PolyU) and the Dept. of EEE at Yonsei University. He had been a Postdoctoral Research Fellow in the Department of Electrical and Computer Engineering at the Hong Kong University of Science and Technology from Jun. 2008 to Feb. 2009 and an Associate Scientist at the Institute for Infocomm Research in Singapore from Nov. 1999 to Jul. 2004. His research interests focus on the analysis and design of wireless networks using stochastic geometry and multi-antenna techniques.

He frequently serves on the technical program committees of major IEEE conferences in wireless communications. He has been the technical chair/co-chair for the IEEE CTW 2013, the Comm. Theory Symp. of IEEE GLOBECOM 2014, and the Adv. Topics in Wireless Comm. Symp. of IEEE/CIC ICCC 2014 and has been the track chair/co-chair for IEEE PIMRC 2015, IEE VTC Spring 2013, Asilomar 2011 and IEEE WCNC 2011. Currently, he is an editor for IEEE Journal on Selected Areas in Communications (JSAC) series on Green Communications and Networking, IEEE Transactions on Wireless Communications, IEEE Wireless Communications Letters. He was also a guest editor for the JSAC special issues on communications powered by energy harvesting and an editor for IEEE/KICS Journal of Communication and Networks (2009-2015). He is an elected member of the SPCOM Technical Committee of the IEEE Signal Processing Society. Dr. Huang received the 2015 IEEE ComSoc Asia Pacific Outstanding Paper Award, Outstanding Teaching Award from Yonsei, Motorola Partnerships in Research Grant, the University Continuing Fellowship from UT Austin, and a Best Paper Award from IEEE GLOBECOM 2006 and PolyU AMA in 2013. 
\end{IEEEbiography}

\begin{IEEEbiography}
[{\includegraphics[width=1in,clip,keepaspectratio]{./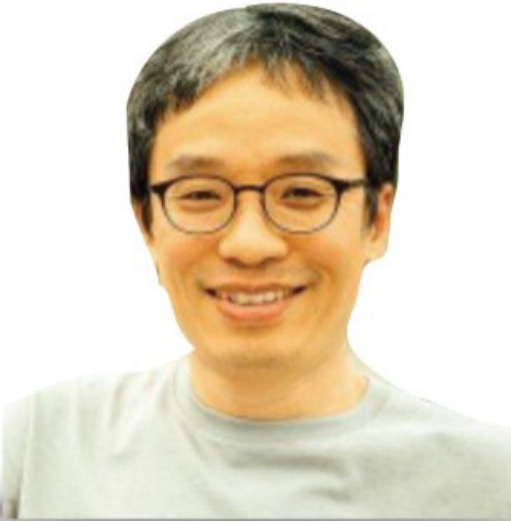}}]
{Seong-Lyun Kim} is a Professor of wireless networks at the School of Electrical \& Electronic Engineering, Yonsei University, Seoul, Korea, heading the Radio Resource Management \& Optimization Laboratory (RAMO) and the Center for Flexible Radio (CFR+). He was an Assistant Professor of Radio Communication Systems at the Department of Signals, Sensors \& Systems, Royal Institute of Technology (KTH), Stockholm, Sweden. He was a Visiting Professor at the Control Group, Helsinki University of Technology  (now Aalto), Finland, and the KTH Center for Wireless Systems. He served as a technical committee member or a chair for various conferences, and an editorial board member of IEEE Transactions on Vehicular Technology, IEEE Communications Letters, Elsevier Control Engineering Practice, Elsevier ICT Express, and Journal of Communications and Network. He served as the leading guest editor of IEEE Wireless Communications, and IEEE Network for wireless communications in networked robotics. He also consulted various companies in the area of wireless systems both in Korea and abroad. His research interest includes radio resource management and information theory in wireless networks, economics of wireless systems, and robotic networks. His degrees include BS in economics (Seoul National University), and MS \& PhD in operations research (Korea Advanced Institute of Science \& Technology).
\end{IEEEbiography}

\begin{IEEEbiography}
[{\includegraphics[width=1in,clip,keepaspectratio]{./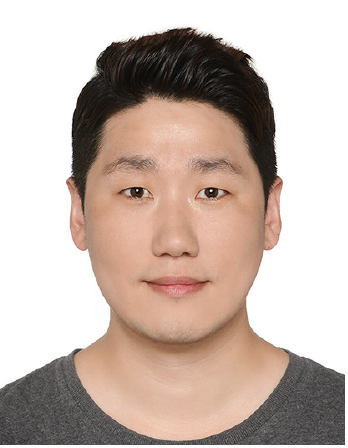}}]
{Hyukjin Chae} received the B.S and Ph.D degree in electrical and electronic
engineering from Yonsei University, Seoul, Korea. He joined LG Electronics,
Korea, as a Senior Research Engineer in 2012. His research interests
include interference channels, multiuser MIMO, D2D, V2X, and full duplex
radio. From Sep. 2012, he has contributed and participated as a delegate in
3GPP RAN1 with interests in FD MIMO, positioning enhancement, D2D, and V2X communications.
\end{IEEEbiography}

\end{document}